\documentclass[hidelinks,sort&compress,3p,twoside,twocolumn]{article}

\usepackage[labelfont=bf]{caption}
\usepackage{tikz}
\usepackage{multirow}
\usepackage{natbib}
\usepackage{amsmath,epsfig,graphicx}

\usepackage{subfigure}

\usepackage{color}
\usepackage[affil-it]{authblk}
\usepackage{wrapfig}

\DeclareMathOperator*{\argmax}{argmax}
\DeclareMathOperator*{\argmin}{argmin}

\usepackage{accents}
\newlength{\dhatheight}

\title{Maximum likelihood recursive state estimation in state-space models: \\A new approach based on statistical analysis of incomplete data }

\author[]{Budhi Arta Surya\footnote{Email address: \texttt{budhi.surya@vuw.ac.nz}}}
\affil[]{School of Mathematics and Statistics, Victoria University of Wellington, \\Gate 6, Kelburn PDE, Wellington 6140, New Zealand}

\textheight=22.5 cm 
\evensidemargin=0 cm \oddsidemargin=0cm
\topmargin=-0.3 cm \textwidth=18.35 cm \tolerance = 1600
\setlength{\hoffset}{-0.355in}
\setlength{\voffset}{-0.15in}

\usetikzlibrary{automata,positioning}

\allowdisplaybreaks

\bibliographystyle{apalike}
\bibpunct{(}{)}{;}{a}{,}{,}

\usepackage{amsfonts,amssymb,amsthm,booktabs,color,epsfig,graphicx,hyperref,url}

\theoremstyle{plain}
\newtheorem{theorem}{Theorem}
\newtheorem{proposition}{Proposition}
\newtheorem{lemma}{Lemma}
\newtheorem{corollary}{Corollary}
\newtheorem{remark}{Remark}
\newtheorem{example}{Example}
\newtheorem{assumption}{Assumption}

\begin{document}

\maketitle \pagestyle{myheadings} \markboth{B.A. Surya}{Maximum likelihood recursive state estimation in state-space models}

\begin{abstract}
This paper revisits the work of \cite{Rauch} and develops a novel method for recursive maximum likelihood particle filtering for general state-space models. The new method is based on statistical analysis of incomplete observations of the systems. Score function and conditional observed information of the incomplete observations/data are introduced and their distributional properties are discussed. Some identities concerning the score function and information matrices of the incomplete data are derived. Maximum likelihood estimation of state-vector is presented in terms of the score function and observed information matrices. In particular, to deal with nonlinear state-space, a sequential Monte Carlo method is developed. It is given recursively by an EM-gradient-particle filtering which extends the work of \cite{Lange} for state estimation. To derive covariance matrix of state-estimation errors, an explicit form of observed information matrix is proposed. It extends \cite{Louis} general formula for the same matrix to state-vector estimation. Under (Neumann) boundary conditions of state transition probability distribution, the inverse of this matrix coincides with the Cram\'er-Rao lower bound on the covariance matrix of estimation errors of unbiased state-estimator. In the case of linear models, the method shows that the Kalman filter is a fully efficient state estimator whose covariance matrix of estimation error coincides with  the Cram\'er-Rao lower bound. Some numerical examples are discussed to exemplify the main  results. 
\medskip

\noindent \textbf{Keywords}: Cram\'er-Rao lower bounds; EM-gradient-particle filtering; fully efficient state estimator; incomplete data; maximum likelihood; nonlinear filtering; observed information matrices; score function; state-space models


\end{abstract}


\section{Introduction\label{sec:1}}
The recursive filtering algorithm proposed by \cite{Kalman} has been an important tool for online estimation of (hidden) state variables, which are partially observed in noisy observations. It has found immense applications in various different fields spanning from engineering for tracking and navigation (see e.g., \cite{Bar-Shalom}, \cite{Musoff2005}, \cite{Schmidt66} and \cite{Smith62}), economics/time series analysis (see e.g., \cite{Durbin}, \cite{Harvey}, \cite{Kitagawa}), molecular and RNA diagnoses (\cite{Kelemen}, \cite{Vaziri}), just to mention a few. See relevant literature therein for more details. The key to the development of the Kalman filter lies on the state-space representation of an underlying physical/economical/biological systems under consideration. An estimate $\widehat{x}_{k\vert x}$ of state-vector $x_k$ is obtained as an orthogonal projection $\mathbb{E}[x_k\vert \underline{y}_k,\theta_k]$ of the state onto a sequence of past observations $\underline{y}_k:=\{y_0,y_1,\ldots,y_k\}$, for a given systems parameter $\theta_k$, as such that $\widehat{x}_{k\vert x}$ forms an unbiased estimator of $x_k$ with minimum mean-square error (MMSE). See \cite{Anderson}, \cite{Bar-Shalom} and the above references.

To deal with nonlinear state-space, the Kalman filter is implemented using localization method by linearizing the state-space around a current estimate of the state. This is called an extended Kalman filter, see e.g. \cite{Jazwinski}, \cite{Anderson},  \cite{Bar-Shalom}, and \cite{Musoff2005}. The localization method implies that the required probability distribution of the state is approximated by a Gaussian distribution which causes a distortion of the true distribution of the state resulting in filter divergence. Hence, the extended Kalman filter is suboptimal. To overcome this difficulty, \cite{Gordon} proposed a recursive Bayesian state estimation based on a sequential Monte Carlo (importance) sampling from posterior distribution $f(x_k\vert\underline{y}_k,\theta_k)$. This new approach is known as the particle filtering which has been successfully used in variety of applications in various fields. See \cite{Godsill} for a survey of recent developments of the particle filtering, see also \cite{Doucet,Doucet2000}, \cite{Kitagawa2021,Kitagawa,Kitagawa93}, \cite{Ristic} and \cite{Shephard}. The merit of the Bayesian approach is that an optimal state estimator $\widehat{x}_{k\vert k}=\mathbb{E}[x_k\vert \underline{y}_k,\theta_k]$ can be derived for non-Gaussian/nonlinear general state-space models, despite the approach lacks in providing an explicit estimator of the covariance matrix of estimation errors, which can be used to construct a $95\%$ confidence interval of the state estimates.

Maximum likelihood (ML) estimation of state was discussed (for linear state-space model) by \cite{Rauch}. In this approach, an estimator $\widehat{x}_k$ of state-vector $x_k$ is found as the maximizer of (the logarithm of) posterior distribution $f(x_k\vert\underline{y}_k,\theta_k)$, i.e., $\widehat{x}_k:=\argmax_{x_k} \log f(x_k\vert\underline{y}_k,\theta_k)$. In the case that the posterior distribution $f(x_k\vert\underline{y}_k,\theta_k)$ is unimodal and symmetry, the ML estimator $\widehat{x}_k$ and the MMSE $\widehat{x}_{k\vert k}$ coincide. Otherwise, they exhibit different unbiased estimators of the state-vector. For nonlinear state-space models, the ML estimation of \cite{Rauch} may not be applicable to derive an explicit estimator of the state. In their recent work, \cite{Ramadan} used the expectation-maximization (EM) algorithm to derive a recursive ML estimator $\{\widehat{x}_k^{(\ell)}\}_{\ell\geq 1}$ of the state $x_k$, obtained after the sequence converges. The update $\widehat{x}_k^{(\ell+1)}$ is defined as the maximizer of conditional expectation of log-posterior distribution $\log f(x_k\vert x_{k-1},\underline{y}_k,\theta_k)$ given past observations $\underline{y}_k$ and $\widehat{x}_k^{(\ell)}$. To be more precise, $\widehat{x}_k^{(\ell+1)}
:=\argmax_{x_k}\mathbb{E}[\log f(x_k\vert x_{k-1},\underline{y}_k,\theta_k)\vert \widehat{x}_k^{(\ell)}, \underline{y}_k,\theta_k]$. The main appealing feature of this approach is that it works for ML estimation of non-Gaussian/nonlinear state-space models. 

However, it is known in particular for parameter estimation that the EM-algorithm has (semi) linear rate of convergence compared to the EM-Gradient algorithm of \cite{Lange}, which achieves its convergence at quadratic rate, hence faster than the EM-algorithm. See also \cite{McLachlan}. This was demonstrated for e.g. in \cite{Surya2022} for parameter estimation of conditional Markov jump processes. Furthermore, the EM-algorithm most notably does not directly produce the score function nor the observed information matrix of the incomplete observations $(x_k,\underline{y}_k)$ of the state-space, both of which were absence in \cite{Rauch} and \cite{Ramadan}, needed to derive the ML estimator and the covariance matrix of estimation error, respectively. Although \cite{VanTrees68}, \cite{VanTrees}, \cite{Gill}, \cite{Tichavsky}, \cite{Bar-Shalom}, \cite{Bergman}, \cite{Ristic} and \cite{Zhao} provided Bayesian Cram\'er-Rao lower bounds for the covariance matrix of (random) parameter estimates, the results are not directly applicable for the ML estimator of state-vector. 

The main reasons that the Bayesian Cram\'er-Rao lower bounds being not directly applicable to ML state estimator are the following. First, the information matrix is assumed to be invertible at each time-step, which may not be the case in general. Secondly, the results are established using imposed boundary conditions on the prior distribution $f(x_k\vert\theta_k)$, that is $\lim\limits _{x_k\rightarrow \pm \infty} f(x_k\vert\theta_k)\mathbb{E}\big[\big(T(\underline{y}_k)-x_k)\vert x_k,\theta_k\big]=0$, with $T(\underline{y}_k)$ being an estimator of $x_k$, see eqns. (4.178) and (4.179), p. 261 in \cite{VanTrees}. Slightly differently, \cite{Gill} and \cite{Bergman} uses, among other conditions, $\lim\limits _{x_k\rightarrow \pm \infty} f(x_k\vert\theta_k)=0$. These conditions might be difficult to verify due to nonlinearity of the state-space. Thirdly, the lower bounds are given in terms of expected Fisher information matrix, which in general is also difficult to evaluate as it involves integration with respect to the prior distribution $f(x_k\vert\theta_k)$. Beside that, the above papers did not discuss how to use the results to evaluate the information matrix of the incomplete data $(x_k,\underline{y}_k)$. For some cases - such as linear state-space with Gaussian noises and zero process noise - a recursive equation for the information matrix was discussed in \cite{Tichavsky}, \cite{Bergman}, Chapter 4 of \cite{Ristic} and \cite{Zhao}. However, the information matrix discussed in these papers was concerned with the state-vector $\underline{x}_k$, not for $(x_k,\underline{y}_k)$. The valuation of expected information matrix becomes much harder if the true state-vector $x_k$ is replaced by a respective state estimator $T(\underline{y}_k)$ in the matrix. 

\subsection*{Research contributions}
This paper proposes a novel approach based on statistical analysis of incomplete data to derive recursive ML state estimator $\widehat{x}_k$. In this framework, we consider $(x_k,\underline{y}_k)$ as an incomplete observation/data of a (nonlinear) state-space systems represented by a complete data $(x_k,\underline{y}_k,\underline{x}_{k-1})$, with $\underline{z}_k:=\{z_0,z_1,\ldots,z_k\}$, $z_{\ell}\in\mathbb{R}^p$, $p\geq 1$. The respective score function of $(x_k,\underline{y}_k)$ is introduced along with the corresponding conditional observed information matrices. Recursive equations for ML state estimator $\widehat{x}_k$ are developed in terms of the score function and information matrices. Their availability in general form makes them available for ML state estimation in non-Gaussian/nonlinear state-space models. 

Some distributional properties and identities of the score function and observed information matrices of the incomplete data $(x_k,\underline{y}_k)$ are established. The latter have the same Loewner partial ordering properties as the expected information matrices do and are positive definite when evaluated at the ML state estimator $\widehat{x}_k$. An explicit form of the corresponding observed Fisher information (OFI) matrix of $(x_k,\underline{y}_k)$ is derived. It extends \cite{Louis} general formula of the same matrix for (non random) parameter to that of for state-vector. We show under Neumann boundary condition $\lim\limits _{x_k\rightarrow \pm \infty} \frac{\partial f(x_k\vert x_{k-1}, \theta_k)}{\partial x_k}=0$ and $\lim\limits _{x_k\rightarrow \pm \infty} f(x_k\vert x_{k-1}, \theta_k)=0$ that the covariance matrix of estimation error of the ML estimator is given by the inverse of expected OFI matrix. Furthermore, under the same boundary conditions, we prove that the expected OFI matrix serves as the Cram\'er-Rao lower bound on covariance matrix of estimation errors of any unbiased state estimator. 

As opposed to those discussed in  \cite{Tichavsky}, \cite{Bergman}, \cite{Ristic} and \cite{Zhao}, the Cram\'er-Rao lower bound is replaced by the inverse of sample average of the derived observed information matrix, which is evaluated using particle filtering of algorithm. In the case that OFI matrix is lack of sparsity, an alternative recursive equation is proposed to calculate the inverse. 

This paper is organized as follows. Section \ref{sec:sec2} overviews the Kalman filtering as the unbiased MMSE estimator. Also, this section discusses \cite{Rauch} model. The main results and contributions of the paper are presented in Section \ref{sec:sec3}. Applications to linear and nonlinear state-space models are discussed in Section \ref {sec:appl}. Section \ref{sec:numeric} presents numerical examples before the paper is concluded in Section \ref{sec:conclusion}. Some lengthy proofs are deferred to the Appendix.

\section{Overview of Kalman filtering}\label{sec:sec2}

\subsection{Minimum-variance unbiased estimator}\label{sec:mmse}

Suppose that a true controlled state vector $x_k\in\mathbb{R}^p$, $p\geq 1$, evolves according to a recursive linear state-space model
\begin{align}\label{eq:eq1a}
x_k=F_k x_{k-1} + G_k u_{k} + v_k, \tag{1a}
\end{align}
and is observed through noisy observation $y_k\in\mathbb{R}^q$, $q<p$,
\begin{align}\label{eq:eq1b}
y_k=H_k x_k + w_k,  \tag{1b}
\end{align}
where $v_k$ and $w_k$ are independent sequence of random variables with $v_k\sim N(0,Q_k)$ and $w_k\sim N(0,R_k)$. Assume that all matrices $F_k$, $G_k$, $H_k$, $Q_k$ and $R_k$ are conformable and prespecified, whereas $u_k\in\mathbb{R}^m$, $m\leq p$, forms a sequence of predetermined control variables. Also, we assume that the initial state $x_0\sim N(\mu,P_0)$. Let $\theta_k=(F_{\ell},G_{\ell},H_{\ell},Q_{\ell},R_{\ell}: \ell=1,\ldots,k)\cup (\mu,P_0)$ denote the parameters of the linear systems (1). The Kalman filtering is concerned with finding an estimator $\widehat{x}_{k\vert k}$ of the unknown state-vector $x_k$ which minimizes the estimation error $\mathbb{E}[\Vert x_k - \widehat{x}_{k\vert k} \Vert^2 \vert \theta_k ]$ based on a sequence of observations $\underline{y}_k=(y_{\ell}: \ell=1,\ldots, k)$, i.e.,
\setcounter{equation}{1}
\begin{align}\label{eq:objfunc}
\widehat{x}_{k\vert k}=\argmin_{\widehat{y}\in\mathcal{Y}}\mathbb{E}[\Vert x_k - \widehat{y} \Vert^2 \vert \theta_k ],
\end{align}
with $\mathcal{Y}$ representing a set of any possible values of $\underline{y}_k$. Hence, it is a minimum mean square error (MMSE) problem for which the solution is given by $\widehat{x}_{k\vert k}=\mathbb{E}[x_k\vert \underline{y}_k,\theta_k]$. See, e.g., \cite{Kalman}, \cite{Astrom} and \cite{Anderson}. This can be seen that the estimation error $\varepsilon_k=x_k- \widehat{x}_{k\vert k}$ is orthogonal to any vector $\widehat{y}\in\mathcal{Y}$. To be more precise, by iterated law of conditional expectation, $\mathbb{E}[\widehat{y}^{\top}\varepsilon_k\vert \theta_k]=\mathbb{E}\big[\mathbb{E}[\widehat{y}^{\top}\varepsilon_k\vert \underline{y}_k,\theta_k]\big\vert \theta_k\big]=\mathbb{E}\big[\widehat{y}^{\top}\mathbb{E}[\varepsilon_k\vert \underline{y}_k,\theta_k] \big\vert \theta_k\big]=0$. The latter is due to $\mathbb{E}[\varepsilon_k\vert \underline{y}_k,\theta_k]=\mathbb{E}[x_k\vert \underline{y}_k,\theta_k]-\widehat{x}_{k\vert k}=\underline{0}$. Hence, $\widehat{x}_{k\vert k}$ is an unbiased estimator of unknown state-vector $x_k$.

To derive recursive equations for $\widehat{x}_{k\vert k}$ (\ref{eq:objfunc}), the following observation is required. It is straightforward to check using induction argument that the solution of (\ref{eq:eq1a}) takes the form
\begin{equation}\label{eq:solofx}
\begin{split}
x_k=\Phi_{k+1,1}x_0 + \sum_{n=1}^k \Phi_{k+1,n+1} \big(G_n u_n + v_n\big),
\end{split}
\end{equation}  
where $\Phi_{k+\ell,k}$ is a conformable matrix defined by
\begin{align*}
\Phi(k+\ell,k)=F(k+\ell-1)\ldots F(k), \quad  \ell \geq 1,
\end{align*}
with $\Phi_{k,k}=\mathbf{I}$, an identity matrix. Thus, $x_k$ is a Gaussian random variable. It also follows from (1) and (\ref{eq:solofx}) that $x_k$ and $y_k$ are jointly Gaussian. Next, consider an innovation process $\overline{y}_k=y_k-\mathbb{E}[y_k\vert \underline{y}_{k-1},\theta_k]$, proposed by \cite{Kailath}. It is clear that $\overline{y}_k$ is orthogonal to any vector in the space spanned by the observations $\underline{y}_{k-1}$. By independence of $w_k$, it follows from (\ref{eq:eq1b}) that $\mathbb{E}[y_k\vert \underline{y}_{k-1},\theta_k]=H_k \widehat{x}_{k\vert k-1}$, with $\widehat{x}_{k\vert k-1}\equiv \mathbb{E}[x_k\vert \underline{y}_{k-1},\theta_k]$ denoting a priori estimate of $x_k$ based on $\underline{y}_{k-1}$. Following \cite{Kailath}, also by applying Theorem 3.3 in Ch. 7 of \cite{Astrom}, we obtain
\begin{align}\label{eq:id1}
\widehat{x}_{k\vert k}=&\mathbb{E}[x_k\vert \underline{y}_k,\theta_k]=\mathbb{E}[x_k\vert \overline{y}_k,\underline{y}_{k-1},\theta_k] \notag\\
=& \mathbb{E}[x_k\vert\underline{y}_{k-1},\theta_k] + \mathbb{E}[x_k\vert \overline{y}_k,\theta_k]  -\mathbb{E}[x_k\vert \theta_k].
\end{align}
Applying Theorem 3.2 of Ch.7 in \cite{Astrom}, 
\begin{align}\label{eq:id2}
 \mathbb{E}[x_k\vert \overline{y}_k,\theta_k] = \mathbb{E}[x_k\vert \theta_k] + \overline{R}_{x_k,\overline{y}_k}\overline{R}_{\overline{y}_k,\overline{y}_k}^{-1}\overline{y}_k,
\end{align}
where the covariance matrices $ \overline{R}_{x_k,\overline{y}_k}=\mathrm{Cov}(x_k,\overline{y}_k\vert \theta_k)$ and $\overline{R}_{\overline{y}_k,\overline{y}_k}=\mathrm{Cov}(\overline{y}_k,\overline{y}_k\vert \theta_k)$ are respectively defined by
\begin{align*}
 \overline{R}_{x_k,\overline{y}_k}=& P_{k\vert k-1} H_k^{\top},\\
 \overline{R}_{\overline{y}_k,\overline{y}_k}=& H_k P_{k\vert k-1} H_k^{\top} + R_k.
 \end{align*}
Recall that $P_{k\vert k-1}\equiv\mathbb{E}[(x_k - \widehat{x}_{k\vert k-1})(x_k - \widehat{x}_{k\vert k-1})^{\top}\vert \theta_k]$ denotes a priori estimate of covariance matrix of the estimation error $x_k -\widehat{x}_{k\vert k-1}$. Using the two identities (\ref{eq:id1}) and (\ref{eq:id2}), the solution $\widehat{x}_{k\vert k}$ of (\ref{eq:objfunc}) is given for $k\geq 0$ by
\begin{eqnarray}\label{eq:kalman}
\widehat{x}_{k\vert k}&=&\widehat{x}_{k\vert k-1} + K_k (y_k - H_k\widehat{x}_{k\vert k-1}),\\
K_k&=&P_{k\vert k-1} H_k^{\top}(H_k P_{k\vert k-1}H_k^{\top} + R_k)^{-1}. \nonumber
\end{eqnarray}
The (conformable) matrix $K_k$ is widely known as the Kalman gain. Equivalently, one can show that $K_k$ is the minimizer of the matrix function $J(K_k)=\mathrm{tr}(P_{k\vert k})$, where $P_{k\vert k}\equiv\mathbb{E}[(x_k - \widehat{x}_{k\vert k})(x_k - \widehat{x}_{k\vert k})^{\top}\vert \theta_k]$ is the a posteriori estimate of the covariance matrix, defined for $k\geq 0$ by 
\begin{align*}
P_{k\vert k}=& (I - K_k H_k) P_{k\vert k-1}  (I - K_k H_k)^{\top} + K_k R_k K_k^{\top}.
\end{align*}
Using explicit form of the matrix $K_k$, $P_{k\vert k}$ simplifies into
\begin{align}\label{eq:kalmancovmat}
P_{k\vert k}=(I- K_k H_k) P_{k\vert k-1}.
\end{align}
Since $P_{k\vert k}$ and $P_{k\vert k-1}$ are both positive definite, so is the matrix $ (I - K_k H_k) $. Also, $ K_k H_k$. Hence, $0< K_k H_k <I$ and therefore all eigenvalues of $I - K_k H_k$ are positive and strictly less than one. See Corollary 1.3.4 in \cite{Horn}. Thus, the posteriori estimate $P_{k\vert k}$ of covariance matrix is always less than the priori estimate $P_{k\vert k-1}$. Furthermore, by independence of $v_k$ we have for $k\geq 1$,
\begin{equation}\label{eq:sol2}
\begin{split}
\widehat{x}_{k\vert k-1}  =&F_k \widehat{x}_{k-1\vert k-1} + G_k u_k,\\
P_{k\vert k-1} =& F_k P_{k-1\vert k-1} F_k^{\top} + Q_k,
\end{split}
\end{equation}
The algorithm (\ref{eq:kalman})-(\ref{eq:sol2}) are run with initial conditions:
\begin{align*}
\widehat{x}_{0\vert -1}=\mu \quad \mathrm{and} \quad P_{0\vert -1}=P_0.
\end{align*}

To summarize, the MMSE estimator $\widehat{x}_{k\vert k}=\mathbb{E}[x_k\vert \underline{y}_k,\theta_k]$ is a minimum-variance unbiased (MVU) estimator of $x_k$.

\subsection{Maximum likelihood estimator}\label{sec:secMLE}

In their paper, \cite{Rauch} derived the recursive equations (\ref{eq:kalman})-(\ref{eq:sol2}) for $\widehat{x}_k$ as a maximum likelihood (ML) estimator of $x_k$. According to eqn. (3.1) in \cite{Rauch}, $\widehat{x}_k$ is defined as (global) maximizer of $\log f(x_k\vert \underline{y}_k,\theta_k)$, i.e.,
\begin{align*}
\widehat{x}_k=\argmax_{x_k} \log f(x_k \vert \underline{y}_k,\theta_k),
\end{align*}
where $ f(x_k \vert \underline{y}_k,\theta_k)$ defines the probability density function of $x_k$ given observations $\underline{y}_k$, characterized by the parameter $\theta_k$ of the linear system (1). Equivalently, $\widehat{x}_k$ solves
\begin{align}\label{eq:mle}
\frac{\partial \log  f(x_k \vert \underline{y}_k,\theta_k)}{\partial x_k}=0.
\end{align}

In the case the conditional distribution $f(x_k\vert \underline{y}_k,\theta_k)$ is unimodal and symmetry, the solution of (\ref{eq:mle}) coincides with (\ref{eq:objfunc}), i.e., $\widehat{x}_k=\mathbb{E}[x_k\vert \underline{y}_k,\theta_k]=\widehat{x}_{k\vert k}$. In other case, in order to evaluate the conditional expectation $\mathbb{E}[x_k\vert \underline{y}_k,\theta_k]$, it remains to specify the conditional distribution $f(x_k \vert \underline{y}_k,\theta_k)$. By Bayes' formula, $f(x_k \vert \underline{y}_k,\theta_k)$ can be rewritten as
\begin{align}\label{eq:bayes}
 f(x_k \vert \underline{y}_k,\theta_k)=\frac{f(x_k,\underline{y}_k\vert \theta_k)}{f(\underline{y}_k\vert \theta_k)}, 
\end{align}
where following (1) the numerator reads as
\begin{align*}
f(x_k,\underline{y}_k\vert \theta_k)=&f(y_k\vert x_k,\underline{y}_{k-1},\theta_k)f(x_k,\underline{y}_{k-1}\vert\theta_k)\\
=&f(y_k\vert x_k,\theta_k) f(x_k\vert \underline{y}_{k-1},\theta_k) f(\underline{y}_{k-1}\vert \theta_k).
\end{align*}
Although the likelihood function $f(y_k\vert x_k,\theta_k)$ is known from the description of the state-space, which for linear model (1) $f(y_k\vert x_k,\theta_k)=N\left(y_k\big\vert H_k x_k, R_k\right)$, the a priori distribution $f(x_k \vert \underline{y}_{k-1}\vert\theta_k)$ is not available explicitly. This posts difficulties in using the Bayesian approach, see for e.g. \cite{Trianta}. To deal with this issue, \cite{Rauch} considered the following a priori distribution,
\begin{equation}\label{eq:density}
\begin{split}
f(x_k\vert \underline{y}_{k-1},\theta_k)= N\left(x_k\big\vert \widehat{x}_{k\vert k-1}, P_{k\vert k-1}\right).
\end{split}
\end{equation}

Based on this consideration, \cite{Rauch} showed that the ML estimator $\widehat{x}_k$ coincides with the Kalman filter $\widehat{x}_{k\vert k}$ (\ref{eq:kalman}). Below we list some further concerns on (\ref{eq:mle}).

\begin{remark}\label{rem:rem1}
Some observations on estimation problem (\ref{eq:mle}). 
\begin{enumerate}
\item[(i)] From Bayes' formula (\ref{eq:bayes}) we see that the likelihood function $f(x_k,\underline{y}_k\vert \theta_k)$ of $x_k$ and $\underline{y}_k$ does not give complete information of linear system (1) in terms of the distributions of $\{v_k\}$ and $\{w_k\}$. That is, knowing $x_k$, $\underline{y}_k$ and control variables $\{u_k\}$ only yields $w_k$. But, it does not produce any information on $\underline{w}_{k-1}=(w_{\ell}: \ell\leq k-1)$ nor on $\underline{v}_k$, even though $x_k$ and $\underline{y}_{k}$ are provided. 

\item[(ii)] Thus, the a priori distribution $f(x_k \vert \underline{y}_{k-1}\vert\theta_k)$ is not available explicitly. For the linear state-space model (1), the incomplete observations of $\underline{w}_{k-1}$ and $\underline{v}_k$ are compensated by the use of normal distribution (\ref{eq:density}).

\item[(iii)] However, the normal distribution (\ref{eq:density}) was not verified in \cite{Rauch} and may not be necessary.

\item[(iv)] For general (nonlinear) state-space models, the score function $\frac{\partial \log f(x_k\vert \underline{y}_k,\theta_k)}{\partial x_k}$ might be difficult to evaluate.

\item[(v)] Hence, solving equations (\ref{eq:mle}) might as well be difficult. 

\item[(vi)] Deriving an explicit form of the covariance matrix of estimation error $x_k-\widehat{x}_k$ directly from the Bayes formula (\ref{eq:bayes}) is therefore a difficult problem and was not discussed explicitly in \cite{Rauch}. The problem has remained largely unexamined in literature. 
\end{enumerate}
\end{remark}

This paper attempts to generalize the problem (\ref{eq:mle}) as a maximum likelihood estimation of state-vector $x_k$ by considering $(x_k,\underline{y}_k)$ as an incomplete information of a state-space system and intends to resolve the points (i)-(vi).

\section{Main results and contributions}\label{sec:sec3}
In order to reformulate the problem (\ref{eq:mle}) as ML estimation from incomplete data, we consider the vector $(x_k,\underline{y}_k)$ as an incomplete information of a considered stochastic systems. The section below discusses the idea in greater details.

\subsection{Likelihood function of incomplete data}
For this purpose, let $z_k=(x_k,\underline{y}_k,\underline{x}_{k-1})$ represent complete observations of state-vector $x_{\ell}$ and their measurements $y_{\ell}$, $\ell=1,\ldots,k$, up to time $k$, with $\underline{x}_k=\{x_0,x_1,\ldots, x_k\}$. In the maximum likelihood approach of \cite{Rauch}, the vector $z_k$ is only partially observed through incomplete data $\xi_k\equiv(x_k,\underline{y}_k)$. As before, we assume that $\underline{u}_k$ forms a sequence of predetermined control variables. Let for a fixed $k\geq 1$, $\mathcal{O}_k$ and $\mathcal{X}_{k-1}$ denote respectively the set of all possible values of the vectors $\xi_k$ and $\underline{x}_{k-1}$. Suppose that the random vector $z_k$ is defined on a measurable space $(\mathcal{Z}_k,\lambda)$, with $\mathcal{Z}_k=\mathcal{O}_k\times \mathcal{X}_{k-1}$, on which $z_k$ takes values, whereas $\lambda$ is a $\sigma-$finite measure on $\mathcal{Z}_k$. Let $T:\mathcal{Z}_k \rightarrow \mathcal{O}_k$ be many-to-one mapping from $\mathcal{Z}_k$ to $\mathcal{O}_k$. Denote by $f(z_k\vert \theta_k)$ the probability density (the Radon-Nikodym derivative w.r.t $\lambda$) of $z_k$. Define a subset $\mathcal{Z}(\xi_k)=\{z_k\in\mathcal{Z}_k: T(z_k)=\xi_k\}$. Following \cite{Halmos}, the marginal distribution $f(\xi_k\vert \theta_k)$ of the random vector $\xi_k$ is given by
\begin{align}\label{eq:halmos}
f(\xi_k\vert \theta_k)=\int_{\mathcal{Z}(\xi_k)}f(z_k\vert \theta_k) d \lambda(z_k).
\end{align}

Without loss of generality, we assume throughout the remaining of this paper that the probability density function $f(z_k\vert\theta_k)$ is twice continuously differentiable with respect to $x_k$ satisfying the integrability condition for $n\in\{1,2\}$,
\begin{align}\label{eq:ass}
\int_{\mathcal{Z}(\xi_k)} \Big\vert \frac{\partial^n \log f(z_k\vert \theta_k)}{\partial x_k^n}\Big\vert f(z_k\vert \xi_k,\theta_k) d\lambda (z_k) <\infty. \tag{A1}
\end{align}
The above integrability condition ensures the existence of conditional expectation $\mathbb{E}\Big[ \Big\vert \frac{\partial^n \log f(z_k\vert \theta_k)}{\partial x_k^n}\Big\vert \Big\vert \xi_k,\theta_k\Big]<\infty$.

In general, it is difficult to employ (\ref{eq:halmos}) in (\ref{eq:mle}) to derive the ML estimator $\widehat{x}_k$. This is mainly due to
\begin{align*}
\frac{\partial \log f(\xi_k\vert \theta_k)}{\partial x_k}=\frac{\frac{\partial }{\partial x_k} \int_{\mathcal{Z}(\xi_k)}f(z_k\vert \theta_k) d \lambda(z_k)}{\int_{\mathcal{Z}(\xi_k)}f(z_k\vert \theta_k) d \lambda(z_k)},
\end{align*}
and the difficulty in evaluating the integral (\ref{eq:halmos}), which might be attributed by the complexity of the likelihood function $f(z_k\vert \theta_k)$ of the complete data/information $z_k$, see for example (\ref{eq: complete}) for the linear system (1). It becomes a major concern if one wants to evaluate the observed information matrix $-\frac{\partial^2 \log f(\xi_k\vert \theta_k)}{\partial x_k \partial x_k^{\top}}$ of the incomplete data $\xi_k$. The difficulty of deriving ML estimator $\widehat{x}_k$ from (\ref{eq:mle}) is overcome by deriving distributional identities for the score function and the information matrix presented in the next section.

\subsection{Score functions and conditional observed information matrices of $(x_k,\underline{y}_k)$}

Distributional identities presented in this section play important roles in deriving ML estimator $\widehat{x}_k$ as the solution of (\ref{eq:mle}) and respective covariance matrix of estimation error. They are derived following similar approach used in \cite{Surya2022} for ML parameter estimation from incomplete data. 

\subsubsection{Score functions}

To start with, notice that the marginal distribution (\ref{eq:halmos}) of the vector $\xi_k$ can be equivalently rewritten as
\begin{align}\label{eq:int}
f(\xi_k\vert \theta_k)=\int_{\mathcal{X}_{k-1}} f(\xi_k,\underline{x}_{k-1}\vert \theta_k) d\lambda(\underline{x}_{k-1}),
\end{align}
which leads to the following identity for the conditional probability densities $f(z_k\vert \xi_k,\theta_k)$ and $f(\underline{x}_{k-1}\vert \xi_k,\theta_k)$. 
\begin{lemma}\label{eq:probidentity}
For given $\theta_k$ and observations $\xi_k$,
\begin{align}\label{eq:identity}
f(z_k\vert \xi_k,\theta_k)=\frac{f(z_k\vert \theta_k)}{f(\xi_k\vert \theta_k)}=f(\underline{x}_{k-1}\vert \xi_k,\theta_k).
\end{align}
\end{lemma}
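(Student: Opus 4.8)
The plan is to exploit the fact that the complete data carries no more information than the incomplete data together with the missing component. Concretely, since $z_k=(x_k,\underline{y}_k,\underline{x}_{k-1})=(\xi_k,\underline{x}_{k-1})$ and $\xi_k=(x_k,\underline{y}_k)$, the many-to-one map $T:\mathcal{Z}_k\to\mathcal{O}_k$ is nothing but the coordinate projection $(\xi_k,\underline{x}_{k-1})\mapsto\xi_k$. Consequently the fiber $\mathcal{Z}(\xi_k)=\{z_k\in\mathcal{Z}_k:T(z_k)=\xi_k\}$ is the slice $\{\xi_k\}\times\mathcal{X}_{k-1}$, which I would identify with $\mathcal{X}_{k-1}$ through $z_k\leftrightarrow\underline{x}_{k-1}$. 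This identification is exactly what carries the fiber integral (\ref{eq:halmos}) over to the marginalization (\ref{eq:int}), and it is the structural observation on which both equalities rest.

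For the first equality, I would argue from the definition of conditional density. Because $\xi_k=T(z_k)$ is a deterministic function of $z_k$, the joint law of the pair $(z_k,\xi_k)$ is concentrated on the graph $\{T(z_k)=\xi_k\}$, so its density reduces to $f(z_k\vert\theta_k)$ on that set; dividing by the marginal $f(\xi_k\vert\theta_k)$ supplied by (\ref{eq:halmos}) yields $f(z_k\vert\xi_k,\theta_k)=f(z_k\vert\theta_k)/f(\xi_k\vert\theta_k)$. For the second equality I would use the identification of the previous paragraph together with (\ref{eq:int}): writing $f(z_k\vert\theta_k)=f(\xi_k,\underline{x}_{k-1}\vert\theta_k)$ and dividing by $f(\xi_k\vert\theta_k)$ gives, by the definition of the conditional density of the missing component $\underline{x}_{k-1}$ given $\xi_k$, precisely $f(\underline{x}_{k-1}\vert\xi_k,\theta_k)$. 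Chaining the two displays closes the identity (\ref{eq:identity}).

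The main obstacle is the measure-theoretic bookkeeping around conditioning on the degenerate variable $\xi_k=T(z_k)$: one must be careful that the ``joint density'' of $(z_k,\xi_k)$ is interpreted on the fiber rather than double-counting the shared coordinates $(x_k,\underline{y}_k)$. I would resolve this by appealing to the disintegration of the $\sigma$-finite measure $\lambda$ along the fibers $\mathcal{Z}(\xi_k)$ that is already implicit in passing from (\ref{eq:halmos}) to (\ref{eq:int}); this disintegration produces a consistent family of conditional densities on $\mathcal{X}_{k-1}$ and makes both ratios well defined wherever $f(\xi_k\vert\theta_k)>0$. Everything else is a direct application of Bayes' rule and the projection structure of $T$, so no further estimates are needed.
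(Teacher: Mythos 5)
Your proposal is correct and takes essentially the same route as the paper: the paper's entire proof is the one-line remark that the identity ``follows directly from the Bayes formula,'' which is exactly the Bayes/projection argument you give, since $z_k=(\xi_k,\underline{x}_{k-1})$ makes $f(z_k,\xi_k\vert\theta_k)=f(z_k\vert\theta_k)$ on the fiber and $f(z_k\vert\theta_k)=f(\xi_k,\underline{x}_{k-1}\vert\theta_k)$. Your version merely makes explicit the measure-theoretic bookkeeping (identifying $\mathcal{Z}(\xi_k)$ with $\mathcal{X}_{k-1}$ and disintegrating $\lambda$ along the fibers) that the paper leaves implicit.
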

\begin{proof}
It follows directly from the Bayes formula.
\end{proof}

Taking logarithm and derivative with respect to $x_k$ on both sides of (\ref{eq:identity}) results in deriving the relationship between the score functions of the data $\xi_k$ and $z_k$:
\begin{eqnarray}\label{eq:eq3}
\frac{\partial \log f(\xi_k\vert \theta_k)}{\partial x_k}=\frac{\partial \log f(z_k\vert \theta_k)}{\partial x_k} - \frac{\partial \log f(z_k\vert \xi_k,\theta_k)}{\partial x_k}.
\end{eqnarray}
\begin{proposition}\label{prop:prop1}
For given $\theta_k$ and observations $\xi_k$,
\begin{align}\label{eq:condscore}
\mathbb{E}\Big[\frac{\partial \log f(z_k\vert \xi_k,\theta_k)}{\partial x_k}\Big\vert \xi_k,\theta_k\Big]=0.
\end{align}
\end{proposition}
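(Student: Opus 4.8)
The plan is to read Proposition \ref{prop:prop1} as a state-space analogue of the classical ``expected score is zero'' identity, and to establish it by differentiating the normalization of the conditional density under the integral sign. The only genuine work lies in setting up the integral so that the domain of integration is fixed, and then in justifying the exchange of differentiation and integration.

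First I would invoke Lemma \ref{eq:probidentity} to trade the conditional density of the complete data for the conditional density of the missing coordinates. By (\ref{eq:identity}) we have $f(z_k\vert \xi_k,\theta_k)=f(\underline{x}_{k-1}\vert \xi_k,\theta_k)$, so the two logarithmic $x_k$-derivatives coincide and the conditional law of $z_k$ given $\xi_k$ is carried by the law of $\underline{x}_{k-1}$ given $\xi_k$. The decisive gain is that $f(\underline{x}_{k-1}\vert \xi_k,\theta_k)$ is a genuine probability density in the variable $\underline{x}_{k-1}$ over the set $\mathcal{X}_{k-1}$, whose domain does \emph{not} depend on $x_k$; in particular $\int_{\mathcal{X}_{k-1}} f(\underline{x}_{k-1}\vert \xi_k,\theta_k)\, d\lambda(\underline{x}_{k-1})=1$ for every value of $x_k$.

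Next I would write the conditional expectation as an integral against this density and apply the elementary identity $\frac{\partial \log g}{\partial x_k}\, g = \frac{\partial g}{\partial x_k}$ with $g=f(\underline{x}_{k-1}\vert \xi_k,\theta_k)$, and then exchange the order of differentiation and integration so that the normalization collapses the expression to a constant:
\begin{align*}
\mathbb{E}\Big[\frac{\partial \log f(z_k\vert \xi_k,\theta_k)}{\partial x_k}\Big\vert \xi_k,\theta_k\Big]
&=\int_{\mathcal{X}_{k-1}} \frac{\partial f(\underline{x}_{k-1}\vert \xi_k,\theta_k)}{\partial x_k}\, d\lambda(\underline{x}_{k-1})\\
&=\frac{\partial}{\partial x_k}\int_{\mathcal{X}_{k-1}} f(\underline{x}_{k-1}\vert \xi_k,\theta_k)\, d\lambda(\underline{x}_{k-1})
=\frac{\partial}{\partial x_k}(1)=0.
\end{align*}

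The hard part, and the only nontrivial step, will be justifying the interchange of $\partial/\partial x_k$ with $\int d\lambda$. This is precisely what the integrability hypothesis (\ref{eq:ass}) is designed to supply: I would invoke the dominated-convergence form of Leibniz's rule, using (\ref{eq:ass}) with $n=1$ to produce an integrable envelope for the difference quotients of $f(\underline{x}_{k-1}\vert\xi_k,\theta_k)$ on a neighbourhood of $x_k$. One subtlety worth flagging is that $x_k$ enters the conditioning event $\xi_k=(x_k,\underline{y}_k)$, so had we integrated over the fibre $\mathcal{Z}(\xi_k)$ directly the domain itself would move with $x_k$ and Leibniz's rule would generate boundary contributions; passing through Lemma \ref{eq:probidentity} to the fixed domain $\mathcal{X}_{k-1}$ is exactly what removes this difficulty. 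An alternative, essentially equivalent, route takes conditional expectations in the score decomposition (\ref{eq:eq3}): since $\partial \log f(\xi_k\vert\theta_k)/\partial x_k$ is $\xi_k$-measurable, the claim reduces to the Fisher-type identity $\mathbb{E}[\partial \log f(z_k\vert\theta_k)/\partial x_k\,\vert\,\xi_k,\theta_k]=\partial\log f(\xi_k\vert\theta_k)/\partial x_k$, which is again proved by differentiating (\ref{eq:halmos}) under the integral sign.
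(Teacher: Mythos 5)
Your proposal is correct and takes essentially the same route as the paper's own proof: both use Lemma \ref{eq:probidentity} to replace $f(z_k\vert\xi_k,\theta_k)$ by $f(\underline{x}_{k-1}\vert\xi_k,\theta_k)$ and thereby integrate over the fixed domain $\mathcal{X}_{k-1}$, apply the identity $\frac{\partial \log g}{\partial x_k}g=\frac{\partial g}{\partial x_k}$, interchange $\partial/\partial x_k$ with the integral via dominated convergence under (\ref{eq:ass}), and conclude from the normalization of the conditional density to one. Your explicit remark that integrating directly over the fibre $\mathcal{Z}(\xi_k)$ would make the domain move with $x_k$ is a useful clarification of a point the paper leaves implicit, but the argument itself is the same.
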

\begin{proof} Recall that $z_k=(\xi_k,\underline{x}_{k-1})$. By (\ref{eq:identity}), Bayes formula and dominated convergence theorem by condition (\ref{eq:ass}), 
\begin{align*}
&\mathbb{E}\Big[\frac{\partial \log f(z_k\vert \xi_k,\theta_k)}{\partial x_k}\Big\vert \xi_k,\theta_k\Big]\\
&\hspace{0.5cm}= \int_{\mathcal{Z}(\xi_k)} \frac{\partial \log f(z_k\vert \xi_k,\theta_k)}{\partial x_k}  f(z_k\vert \xi_k,\theta_k) d\lambda (z_k)\\
&\hspace{0.5cm}= \int_{\mathcal{X}_k} \frac{\partial \log f(z_k\vert \xi_k,\theta_k)}{\partial x_k}  f(\underline{x}_{k-1}\vert \xi_k,\theta_k) d\lambda (\underline{x}_{k-1})\\
&\hspace{0.5cm}= \int_{\mathcal{X}_k} \frac{\partial \log f(z_k\vert \xi_k,\theta_k)}{\partial x_k} f(z_k\vert \xi_k,\theta_k) d\lambda (\underline{x}_{k-1})\\
&\hspace{0.5cm}= \int_{\mathcal{X}_k} \frac{\partial f(z_k\vert \xi_k,\theta_k)}{\partial x_k}d\lambda (\underline{x}_{k-1}) \\
&\hspace{0.5cm}=\frac{\partial}{\partial x_k}\int_{\mathcal{X}_k} f(z_k\vert \xi_k,\theta_k) d\lambda (\underline{x}_{k-1}),
\end{align*}
which by (\ref{eq:int})-(\ref{eq:identity}) completes the proof. 
\end{proof}

The following identity gives an explicit form of the score function of $\xi_k$. It plays an important role in deriving ML estimator $\widehat{x}_k$ either explicitly for the linear state-space (1), or numerically using sequential Monte Carlo method for nonlinear state-space models. See Section \ref{sec:MCMC} for details.

\begin{theorem}\label{theo:main}
For any given $\theta_k$ and observations $\xi_k$,
\begin{align}\label{eq:mainidentity}
\mathbb{E}\Big[\frac{\partial \log f(z_k\vert \theta_k)}{\partial x_k}\Big\vert \xi_k,\theta_k\Big]=\frac{\partial \log f(\xi_k\vert \theta_k)}{\partial x_k}.
\end{align}
Furthermore, for any given state-vector $x_k$ and $\theta_k$,
\begin{align}\label{eq:mainidentityB}
\mathbb{E}\Big[\frac{\partial \log f(\xi_k\vert\theta_k)}{\partial x_k}\Big\vert x_k,\theta_k\Big]=\frac{\partial \log f(x_k\vert \theta_k)}{\partial x_k}.
\end{align}
\end{theorem}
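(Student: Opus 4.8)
The plan is to handle the two displays separately. Identity (\ref{eq:mainidentity}) falls out almost immediately from the decomposition (\ref{eq:eq3}) together with Proposition \ref{prop:prop1}, whereas (\ref{eq:mainidentityB}) is a score-type identity that I would obtain from a Bayes factorisation of the marginal law of $\xi_k=(x_k,\underline{y}_k)$.

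For (\ref{eq:mainidentity}) I would apply the conditional expectation $\mathbb{E}[\,\cdot\,\vert\xi_k,\theta_k]$ termwise to (\ref{eq:eq3}). The left-hand side $\partial\log f(\xi_k\vert\theta_k)/\partial x_k$ depends only on $\xi_k$, hence is unaffected by conditioning on $\xi_k$; on the right-hand side the first term is exactly the object we want, while the second term $\mathbb{E}[\partial\log f(z_k\vert\xi_k,\theta_k)/\partial x_k\,\vert\,\xi_k,\theta_k]$ vanishes by (\ref{eq:condscore}). Rearranging yields (\ref{eq:mainidentity}). There is no genuine obstacle here: all of the analytic content already sits in (\ref{eq:eq3}) and Proposition \ref{prop:prop1}.

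For (\ref{eq:mainidentityB}) I would begin from Bayes' rule, $f(\xi_k\vert\theta_k)=f(\underline{y}_k\vert x_k,\theta_k)\,f(x_k\vert\theta_k)$, take logarithms, and differentiate in $x_k$ to get $\partial\log f(\xi_k\vert\theta_k)/\partial x_k=\partial\log f(\underline{y}_k\vert x_k,\theta_k)/\partial x_k+\partial\log f(x_k\vert\theta_k)/\partial x_k$. Conditioning on $(x_k,\theta_k)$ and averaging over $\underline{y}_k$, the second summand is a function of $x_k$ alone and passes through untouched, so (\ref{eq:mainidentityB}) is equivalent to the vanishing of $\mathbb{E}[\partial\log f(\underline{y}_k\vert x_k,\theta_k)/\partial x_k\,\vert\,x_k,\theta_k]$.

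This last score identity is the key step, and its justification is the only delicate point. It is the exact analogue of Proposition \ref{prop:prop1} with the (complete data, incomplete data) pair taken to be $\big((x_k,\underline{y}_k),\,x_k\big)$ rather than $\big(z_k,\xi_k\big)$, and I would prove it the same way: rewrite the expectation as $\int \partial_{x_k}\log f(\underline{y}_k\vert x_k,\theta_k)\,f(\underline{y}_k\vert x_k,\theta_k)\,d\lambda(\underline{y}_k)=\int \partial_{x_k} f(\underline{y}_k\vert x_k,\theta_k)\,d\lambda(\underline{y}_k)$, interchange differentiation and integration, and use $\int f(\underline{y}_k\vert x_k,\theta_k)\,d\lambda(\underline{y}_k)=1$ to conclude the integral equals $\partial_{x_k}(1)=0$. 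The interchange of $\partial_{x_k}$ with $\int d\lambda(\underline{y}_k)$ is where the work lies; I would discharge it by a dominated-convergence argument under an integrability hypothesis of the same type as (\ref{eq:ass}), now imposed on the conditional law of $\underline{y}_k$ given $x_k$ instead of that of $\underline{x}_{k-1}$ given $\xi_k$. As a consistency check, one may also recover (\ref{eq:mainidentityB}) by taking $\mathbb{E}[\,\cdot\,\vert x_k,\theta_k]$ of (\ref{eq:mainidentity}) and invoking the tower property, which reduces the claim to the same Fisher identity for the coarser observation $x_k$.
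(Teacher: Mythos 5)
Your proof is correct and takes essentially the same approach as the paper: identity (\ref{eq:mainidentity}) is obtained exactly as in the paper's proof, by applying $\mathbb{E}[\bullet\vert\xi_k,\theta_k]$ to the decomposition (\ref{eq:eq3}) and invoking Proposition \ref{prop:prop1}. For (\ref{eq:mainidentityB}) the paper only remarks that $x_k$ may be viewed as incomplete data of $\xi_k$ and takes $\mathbb{E}[\bullet\vert x_k,\theta_k]$ in (\ref{eq:mainidentity}); your Bayes factorisation $f(\xi_k\vert\theta_k)=f(\underline{y}_k\vert x_k,\theta_k)f(x_k\vert\theta_k)$ and the score identity $\mathbb{E}\big[\partial\log f(\underline{y}_k\vert x_k,\theta_k)/\partial x_k\,\big\vert\, x_k,\theta_k\big]=0$ are precisely the one-level-down application of the same machinery that this remark relies on, so you have merely made explicit what the paper leaves implicit.
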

\begin{proof} 
Taking conditional expectation $\mathbb{E}[\bullet \vert \xi_k,\theta_k]$ on both sides of the identity (\ref{eq:eq3}), we obtain
\begin{align*}
\frac{\partial \log f(\xi_k\vert \theta_k)}{\partial x_k}=&\mathbb{E}\Big[\frac{\partial \log f(z_k\vert \theta_k)}{\partial x_k}\Big\vert \xi_k,\theta_k\Big] \\
&- \mathbb{E}\Big[  \frac{\partial \log f(z_k\vert \xi_k,\theta_k)}{\partial x_k}     \Big\vert \xi_k,\theta_k\Big].
\end{align*}
The proof of (\ref{eq:mainidentity}) is complete using Proposition \ref{prop:prop1}. From the view point of incomplete information, the state-vector $x_k$ can be seen as an incomplete data of $\xi_k$. Hence, taking conditional expectation $\mathbb{E}[\bullet\vert x_k,\theta_k]$ in (\ref{eq:mainidentity}) yields  (\ref{eq:mainidentityB}).
\end{proof}

The identity (\ref{eq:mainidentity})-(\ref{eq:mainidentityB}) extend a similar form for ML parameter estimation from incomplete data, see e.g. \cite{Dempster}, \cite{McLachlan} and \cite{Little}, to the estimation of state-vector $x_k$. 

\subsubsection{Conditional observed information matrices}

The result below is derived as a corollary of (\ref{eq:condscore}).

\begin{corollary}\label{cor:cor1}
For given $\theta_k$ and observations $\xi_k$,
\begin{align*}
\mathrm{Cov}\left(\frac{\partial \log f(\xi_k\vert \theta_k)}{\partial x_k}, \frac{\partial \log f(z_k\vert \xi_k,\theta_k)}{\partial x_k}\Big\vert \xi_k,\theta_k\right)=\mathbf{0}.
\end{align*}
\end{corollary}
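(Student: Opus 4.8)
The plan is to read the result straight off the definition of the conditional covariance, using the two structural facts already in hand: that the incomplete-data score $\frac{\partial \log f(\xi_k\vert \theta_k)}{\partial x_k}$ is a function of the conditioning data $(\xi_k,\theta_k)$ alone, and that the conditional-data score $\frac{\partial \log f(z_k\vert \xi_k,\theta_k)}{\partial x_k}$ has conditional mean $\mathbf{0}$ by Proposition \ref{prop:prop1}.

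First I would abbreviate $A:=\frac{\partial \log f(\xi_k\vert \theta_k)}{\partial x_k}$ and $B:=\frac{\partial \log f(z_k\vert \xi_k,\theta_k)}{\partial x_k}$ and write the matrix covariance in its defining form,
\begin{align*}
\mathrm{Cov}\big(A,B\,\big\vert\,\xi_k,\theta_k\big)=\mathbb{E}\big[A B^{\top}\big\vert \xi_k,\theta_k\big]-\mathbb{E}\big[A\big\vert\xi_k,\theta_k\big]\,\mathbb{E}\big[B\big\vert\xi_k,\theta_k\big]^{\top}.
\end{align*}
The decisive observation is that $A$ is the score of the marginal density $f(\xi_k\vert\theta_k)$, which does not involve the latent block $\underline{x}_{k-1}$ over which the conditional law $f(\underline{x}_{k-1}\vert\xi_k,\theta_k)$ integrates. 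Hence $A$ is constant under $\mathbb{E}[\,\bullet\,\vert\xi_k,\theta_k]$, so that $\mathbb{E}[A\vert\xi_k,\theta_k]=A$ and $\mathbb{E}[AB^{\top}\vert\xi_k,\theta_k]=A\,\mathbb{E}[B^{\top}\vert\xi_k,\theta_k]$, the interchange and integrability of the product being justified by condition (\ref{eq:ass}).

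With these substitutions I would apply Proposition \ref{prop:prop1}, namely $\mathbb{E}[B\vert\xi_k,\theta_k]=\mathbf{0}$, which annihilates the second (product-of-means) term and reduces the covariance to the cross-moment $\mathbb{E}[AB^{\top}\vert\xi_k,\theta_k]$; pulling out the constant factor $A$ and invoking Proposition \ref{prop:prop1} once more then gives $A\,\mathbb{E}[B^{\top}\vert\xi_k,\theta_k]=A\,\mathbf{0}=\mathbf{0}$, which is the assertion. I do not anticipate a genuine obstacle here, as the statement is essentially a one-line consequence of identity (\ref{eq:condscore}); the only point worth stating with any care is the measurability remark, that $A$ depends on $z_k=(\xi_k,\underline{x}_{k-1})$ through $\xi_k$ only and is therefore deterministic for the conditional law, since it is precisely this fact that permits the otherwise random factor $A$ to be pulled outside the conditional expectation before Proposition \ref{prop:prop1} is used.
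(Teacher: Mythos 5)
Your proposal is correct and follows essentially the same route the paper intends: the paper gives no separate proof, stating only that the result is a corollary of (\ref{eq:condscore}), and your argument — observing that the incomplete-data score is a deterministic function of the conditioning data $(\xi_k,\theta_k)$, pulling it outside the conditional expectation, and then applying Proposition \ref{prop:prop1} to annihilate both the cross-moment and the product-of-means terms — is precisely that intended one-line deduction. The measurability remark you single out (that the factor $\frac{\partial \log f(\xi_k\vert \theta_k)}{\partial x_k}$ depends on $z_k=(\xi_k,\underline{x}_{k-1})$ through $\xi_k$ only) is indeed the point that makes the corollary immediate.
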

The above corollary yields some properties of the following conditional observed information matrices which will be used to derive ML estimator of $x_k$ and standard errors.
\begin{align*}
J_{\xi}(\xi_k\vert\theta_k)\equiv& - \frac{\partial^2 \log f(\xi_k\vert \theta_k)}{\partial x_k \partial x_k^{\top}},\\[5pt]
J_z(\xi_k\vert \theta_k)\equiv & \mathbb{E}\Big[- \frac{\partial^2 \log f(z_k\vert \theta_k)}{\partial x_k \partial x_k^{\top}} \Big\vert \xi_k,\theta_k\Big],\\[5pt]
J_{z\vert \xi}(\xi_k\vert \theta_k)\equiv & \mathbb{E}\Big[ -\frac{\partial^2 \log f(z_k\vert \xi_k,\theta_k)}{\partial x_k \partial x_k^{\top}} \Big\vert \xi_k,\theta_k\Big].
\end{align*}
It follows from (\ref{eq:eq3}) that the information matrices $J_{\xi}(\xi_k\vert\theta_k)$, $J_z(\xi_k\vert \theta_k)$ and $J_{z\vert \xi}(\xi_k\vert \theta_k)$ satisfy the equation
\begin{align}\label{eq:info}
J_{\xi}(\xi_k\vert\theta_k)=J_z(\xi_k\vert \theta_k) - J_{z\vert \xi}(\xi_k\vert \theta_k).
\end{align}

The result below shows that the information matrix $J_{z\vert \xi} (\xi_k\vert \theta_k)$ is positive definite. It will be used to show resulting information loss presented in incomplete data $\xi_k$.
\begin{proposition}\label{prop:theo2}
For any given $\theta_k$ and observations $\xi_k$,
\begin{align*}
J_{z\vert \xi}(\xi_k\vert \theta_k)=\mathbb{E}\Big[\frac{\partial \log f(z_k\vert \xi_k,\theta_k)}{\partial x_k}\frac{\partial \log f(z_k\vert \xi_k,\theta_k)}{\partial x_k^{\top}}\Big\vert \xi_k,\theta_k\Big].
\end{align*}
Hence, $J_{z\vert \xi}(\xi_k\vert \theta_k)$ is a positive definite information matrix.
\end{proposition}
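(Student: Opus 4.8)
The plan is to establish the stated identity as the conditional analogue of the classical Fisher information identity, obtained by differentiating the zero-mean score relation of Proposition~\ref{prop:prop1}. By Lemma~\ref{eq:probidentity} the conditioning law is $f(z_k\vert\xi_k,\theta_k)=f(\underline{x}_{k-1}\vert\xi_k,\theta_k)$, so every conditional expectation $\mathbb{E}[\bullet\vert\xi_k,\theta_k]$ is an integral over $\mathcal{X}_{k-1}$ against this density, and $\int_{\mathcal{X}_{k-1}}f(z_k\vert\xi_k,\theta_k)\,d\lambda(\underline{x}_{k-1})=1$ by (\ref{eq:int})--(\ref{eq:identity}). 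This normalization, differentiated twice, is what ultimately kills the Hessian-of-density term.

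First I would rewrite the content of Proposition~\ref{prop:prop1}, namely (\ref{eq:condscore}), in integral form,
$$\int_{\mathcal{X}_{k-1}}\frac{\partial\log f(z_k\vert\xi_k,\theta_k)}{\partial x_k}\,f(z_k\vert\xi_k,\theta_k)\,d\lambda(\underline{x}_{k-1})=\mathbf{0},$$
and then differentiate both sides with respect to $x_k$ (taking the $\partial/\partial x_k^{\top}$ gradient), interchanging differentiation and integration. Using the product rule together with $\frac{\partial f(z_k\vert\xi_k,\theta_k)}{\partial x_k^{\top}}=f(z_k\vert\xi_k,\theta_k)\frac{\partial\log f(z_k\vert\xi_k,\theta_k)}{\partial x_k^{\top}}$, the derivative of the score factor reproduces an outer product of scores, giving
$$\mathbb{E}\Big[\frac{\partial^2\log f(z_k\vert\xi_k,\theta_k)}{\partial x_k\partial x_k^{\top}}\Big\vert\xi_k,\theta_k\Big]+\mathbb{E}\Big[\frac{\partial\log f(z_k\vert\xi_k,\theta_k)}{\partial x_k}\frac{\partial\log f(z_k\vert\xi_k,\theta_k)}{\partial x_k^{\top}}\Big\vert\xi_k,\theta_k\Big]=\mathbf{0}.$$
Recognizing the first expectation as $-J_{z\vert\xi}(\xi_k\vert\theta_k)$ by definition yields the claim after rearrangement. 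An equivalent route writes $\frac{\partial^2\log f}{\partial x_k\partial x_k^{\top}}=\frac{1}{f}\frac{\partial^2 f}{\partial x_k\partial x_k^{\top}}-\frac{\partial\log f}{\partial x_k}\frac{\partial\log f}{\partial x_k^{\top}}$ and invokes $\int_{\mathcal{X}_{k-1}}\frac{\partial^2 f}{\partial x_k\partial x_k^{\top}}\,d\lambda=\frac{\partial^2}{\partial x_k\partial x_k^{\top}}\!\int_{\mathcal{X}_{k-1}}f\,d\lambda=\frac{\partial^2}{\partial x_k\partial x_k^{\top}}(1)=\mathbf{0}$.

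The hard part will be justifying the interchange of the second-order differentiation with the integral over $\mathcal{X}_{k-1}$. For this I would use the twice continuous differentiability of $f(z_k\vert\theta_k)$ in $x_k$ together with the integrability condition (\ref{eq:ass}) for $n\in\{1,2\}$, applying dominated convergence to the difference quotients exactly as in the proof of Proposition~\ref{prop:prop1}. The cross term produced by the product rule requires in addition that the squared score be conditionally integrable, which I would record as a mild strengthening of (\ref{eq:ass}), since (\ref{eq:ass}) controls $\partial\log f$ and $\partial^2\log f$ but not $\vert\partial\log f\vert^2$.

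Finally, positive definiteness is immediate from the established form: for any conformable constant vector $a\neq\mathbf{0}$,
$$a^{\top}J_{z\vert\xi}(\xi_k\vert\theta_k)\,a=\mathbb{E}\Big[\big(a^{\top}\tfrac{\partial\log f(z_k\vert\xi_k,\theta_k)}{\partial x_k}\big)^{2}\Big\vert\xi_k,\theta_k\Big]\geq 0,$$
so the matrix is a conditional second moment (equivalently, by Proposition~\ref{prop:prop1}, the conditional covariance) of the score and hence positive semidefinite. Strict positive definiteness then follows under the natural non-degeneracy assumption that no nontrivial linear combination $a^{\top}\frac{\partial\log f(z_k\vert\xi_k,\theta_k)}{\partial x_k}$ vanishes $f(\underline{x}_{k-1}\vert\xi_k,\theta_k)$-almost surely, i.e. the components of the conditional score are linearly independent.
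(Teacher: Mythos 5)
Your proposal is correct and takes essentially the same route as the paper: both derive the identity from Proposition \ref{prop:prop1} via the product rule and an interchange of differentiation and integration over $\mathcal{X}_{k-1}$ — differentiating the zero-mean score relation, as you do, is exactly the paper's computation (integrating the pointwise chain-rule identity and then annihilating the term $\frac{\partial}{\partial x_k}\mathbb{E}\big[\frac{\partial \log f(z_k\vert\xi_k,\theta_k)}{\partial x_k^{\top}}\big\vert\xi_k,\theta_k\big]$ by (\ref{eq:condscore})) carried out in the opposite order, and your alternative route via the normalization $\int_{\mathcal{X}_{k-1}} f\,d\lambda=1$ is likewise equivalent. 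If anything, your treatment of the last claim is more careful than the paper's, which asserts positive definiteness ``directly from the first claim'': the outer-product form by itself yields only positive semidefiniteness, and the strict version needs the non-degeneracy of the conditional score that you state explicitly.
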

\begin{proof}
To prove the identity, the following result is required. Using chain rule of derivatives, one can show that 
\begin{align*}
&-\frac{\partial^2\log f(z_k\vert \xi_k,\theta_k)}{\partial x_k \partial x_k^{\top}} f(z_k\vert \xi_k,\theta_k)\\
&\hspace{0.5cm}= -\frac{\partial }{\partial x_k}\Big[\frac{\partial \log f(z_k\vert \xi_k,\theta_k)}{\partial x_k^{\top}} f(z_k\vert \xi_k,\theta_k)\Big]\\
&\hspace{0.75cm}+\frac{\partial \log f(z_k\vert \xi_k,\theta_k)}{\partial x_k} \frac{\partial \log f(z_k\vert \xi_k,\theta_k)}{\partial x_k^{\top}} f(z_k\vert \xi_k,\theta_k).
\end{align*}
Integrating both sides over the set $\mathcal{Z}(\xi_k)$ with respect to measure $d\lambda (z_k)$ leads to a series of equalities:
\begin{align*}
J_{z\vert \xi}(\xi_k\vert \theta_k)=&-\int_{\mathcal{Z}(\xi_k)}\frac{\partial^2\log f(z_k\vert \xi_k,\theta_k)}{\partial x_k \partial x_k^{\top}} f(z_k\vert \xi_k,\theta_k) d\lambda (z_k)\\
&\hspace{-2.25cm}= -\int_{\mathcal{Z}(\xi_k)} \frac{\partial }{\partial x_k}\Big[\frac{\partial \log f(z_k\vert \xi_k,\theta_k)}{\partial x_k^{\top}} f(z_k\vert \xi_k,\theta_k)\Big] d\lambda (z_k)\\
&\hspace{-2.2cm}+ \int_{\mathcal{Z}(\xi_k)} \frac{\partial \log f(z_k\vert \xi_k,\theta_k)}{\partial x_k} \frac{\partial \log f(z_k\vert \xi_k,\theta_k)}{\partial x_k^{\top}} f(z_k\vert \xi_k,\theta_k) d\lambda(z_k)\\
&\hspace{-2.25cm}=  -\int_{\mathcal{X}_k} \frac{\partial }{\partial x_k}\Big[\frac{\partial \log f(z_k\vert \xi_k,\theta_k)}{\partial x_k^{\top}} f(\underline{x}_{k-1} \vert \xi_k,\theta_k)\Big] d\lambda (\underline{x}_{k-1})\\
&\hspace{-2.2cm}+ \int_{\mathcal{Z}(\xi_k)} \frac{\partial \log f(z_k\vert \xi_k,\theta_k)}{\partial x_k} \frac{\partial \log f(z_k\vert \xi_k,\theta_k)}{\partial x_k^{\top}} f(z_k\vert \xi_k,\theta_k) d\lambda(z_k)\\
&\hspace{-2.25cm}=  - \frac{\partial }{\partial x_k}\int_{\mathcal{X}_k} \frac{\partial \log f(z_k\vert \xi_k,\theta_k)}{\partial x_k^{\top}} f(\underline{x}_{k-1} \vert \xi_k,\theta_k) d\lambda (\underline{x}_{k-1})\\
&\hspace{-2.2cm}+ \int_{\mathcal{Z}(\xi_k)} \frac{\partial \log f(z_k\vert \xi_k,\theta_k)}{\partial x_k} \frac{\partial \log f(z_k\vert \xi_k,\theta_k)}{\partial x_k^{\top}} f(z_k\vert \xi_k,\theta_k) d\lambda(z_k)\\
&\hspace{-2.25cm}= - \frac{\partial }{\partial x_k}\mathbb{E}\Big[ \frac{\partial \log f(z_k\vert \xi_k,\theta_k)}{\partial x_k^{\top}}\Big\vert \xi_k,\theta_k\Big]\\
&\hspace{-2.2cm}+\mathbb{E}\Big[\frac{\partial \log f(z_k\vert \xi_k,\theta_k)}{\partial x_k}\frac{\partial \log f(z_k\vert \xi_k,\theta_k)}{\partial x_k^{\top}}\Big\vert \xi_k,\theta_k\Big],
\end{align*}
which by (\ref{eq:condscore}) completes the first claim. Positive definiteness of $J_{z\vert \xi}(\xi_k\vert \theta_k)$ follows directly from the first claim.
\end{proof}

The inequality below represents the resulting information loss presented in incomplete data $\xi_k$ (compared to $z_k$). 
\begin{theorem}[Resulting loss-of-information in incomplete data $\xi_k$] \label{theo:infoloss}
For any given observations $\xi_k$ and $\theta_k$,
\begin{align}\label{eq:infoloss}
J_{z}(\xi_k\vert \theta_k) > J_{\xi}(\xi_k\vert \theta_k).
\end{align}
In particular, at $\widehat{x}_k=\argmax\limits_{x_k} \log f(x_k,\underline{y}_k\vert\theta_k)$, it holds a.s.,
\begin{align}\label{eq:ordering}
J_z(\widehat{x}_k, \underline{y}_k \vert \theta_k) > J_{\xi}(\widehat{x}_k, \underline{y}_k\vert \theta_k)>0. 
\end{align}
\end{theorem}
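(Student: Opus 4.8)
The plan is to read off both inequalities from the additive decomposition of information matrices already established in \eqref{eq:info}, namely $J_{\xi}(\xi_k\vert\theta_k)=J_z(\xi_k\vert \theta_k) - J_{z\vert \xi}(\xi_k\vert \theta_k)$, together with the positive definiteness of the conditional matrix $J_{z\vert\xi}$ supplied by Proposition \ref{prop:theo2}. The global inequality \eqref{eq:infoloss} is the easy half, and the pointwise statement \eqref{eq:ordering} at the maximizer will require an additional argument to upgrade the partial ordering so that $J_\xi$ itself is strictly positive definite.

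First I would establish \eqref{eq:infoloss}. Rearranging \eqref{eq:info} gives $J_z(\xi_k\vert\theta_k) - J_\xi(\xi_k\vert\theta_k) = J_{z\vert\xi}(\xi_k\vert\theta_k)$, and by Proposition \ref{prop:theo2} the right-hand side, being the conditional second moment of the score $\frac{\partial \log f(z_k\vert\xi_k,\theta_k)}{\partial x_k}$, is positive definite. Hence $J_z(\xi_k\vert\theta_k) - J_\xi(\xi_k\vert\theta_k) > 0$ in the Loewner order, which is exactly \eqref{eq:infoloss}, valid for every admissible $\xi_k$.

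For \eqref{eq:ordering} I would specialize to $\xi_k=(\widehat{x}_k,\underline{y}_k)$. Since $\widehat{x}_k$ is the maximizer of $x_k\mapsto\log f(x_k,\underline{y}_k\vert\theta_k)=\log f(\xi_k\vert\theta_k)$, the first-order condition \eqref{eq:mle} holds there and the second-order condition forces the Hessian $\frac{\partial^2\log f(\xi_k\vert\theta_k)}{\partial x_k\partial x_k^\top}$ to be negative definite at $\widehat{x}_k$; equivalently $J_\xi(\widehat{x}_k,\underline{y}_k\vert\theta_k)>0$. Chaining this with the already-proved inequality \eqref{eq:infoloss} evaluated at $\widehat{x}_k$ yields $J_z(\widehat{x}_k,\underline{y}_k\vert\theta_k)>J_\xi(\widehat{x}_k,\underline{y}_k\vert\theta_k)>0$, which is \eqref{eq:ordering}.

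The main obstacle is the strictness in $J_\xi(\widehat{x}_k,\underline{y}_k\vert\theta_k)>0$: the second-order necessary condition only delivers a negative semidefinite Hessian, so to rule out a degenerate (flat) maximum and obtain a strictly negative definite Hessian I would invoke the second-order sufficient condition at $\widehat{x}_k$. This is precisely where the almost-sure qualifier enters — because $\widehat{x}_k$ is a function of the random observations $\underline{y}_k$, one argues that the maximizer is non-degenerate with probability one under the regularity already assumed in \eqref{eq:ass}, so that the strict inequality holds a.s. rather than merely in the semidefinite sense.
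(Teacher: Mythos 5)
Your proposal is correct and follows essentially the same route as the paper: inequality (\ref{eq:infoloss}) is read off from the decomposition (\ref{eq:info}) together with the positive definiteness of $J_{z\vert\xi}(\xi_k\vert\theta_k)$ from Proposition \ref{prop:theo2}, and (\ref{eq:ordering}) follows by evaluating at the maximizer $\widehat{x}_k$, where the paper likewise invokes the almost-sure qualifier to pass from a negative semidefinite Hessian to strict positive definiteness of $J_{\xi}$. Your explicit flagging of the semidefinite-versus-definite subtlety is, if anything, more candid than the paper's one-line justification, but the substance is identical.
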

\begin{proof}
The inequality (\ref{eq:infoloss}) is derived from Proposition \ref{prop:theo2} by replacing $J_{z\vert \xi}(\xi_k\vert\theta_k)$ with $J_z(\xi_k\vert \theta_k)- J_{\xi}(\xi_k\vert \theta_k)$, whilst (\ref{eq:ordering}) holds a.s. since $\widehat{x}_k$ is the maximizer of $\log f(x_k,\underline{y}_k\vert\theta_k)$.
\end{proof}

For parameter estimation, the corresponding inequality is established in \cite{Surya2022} in which the same inequality for expected information matrices, found in \cite{Orchard} (see also Theorem 2.86 in \cite{Schervish}), is generalized to observed information matrices. 

To get an explicit form of the observed information matrix $J_{\xi}(\xi_k\vert \theta_k)$, the following result is needed.

\begin{proposition}\label{prop:prop2}
For given $\theta_k$ and observations $\xi_k$,
\begin{equation}\label{eq:matrixJzxi}
\begin{split}
J_{z\vert \xi}(\xi_k\vert \theta_k)=&\mathbb{E}\Big[\frac{\partial \log f(z_k \vert \theta_k)}{\partial x_k} \frac{\partial \log f(z_k \vert \theta_k)}{\partial x_k^{\top}}    \Big\vert \xi_k,\theta_k \Big]\\
&\hspace{-1.75cm}-\mathbb{E}\Big[\frac{\partial \log f(z_k \vert \theta_k)}{\partial x_k} \Big\vert \xi_k,\theta_k \Big]\mathbb{E}\Big[\frac{\partial \log f(z_k \vert \theta_k)}{\partial x_k^{\top}} \Big\vert \xi_k,\theta_k \Big].
\end{split}
\end{equation}
\end{proposition}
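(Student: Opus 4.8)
The plan is to derive (\ref{eq:matrixJzxi}) by combining Proposition \ref{prop:theo2}, the score decomposition (\ref{eq:eq3}), and the identity (\ref{eq:mainidentity}) of Theorem \ref{theo:main}. For brevity I would write $S_z=\frac{\partial \log f(z_k\vert\theta_k)}{\partial x_k}$, $S_\xi=\frac{\partial \log f(\xi_k\vert\theta_k)}{\partial x_k}$, and $S_{z\vert\xi}=\frac{\partial \log f(z_k\vert\xi_k,\theta_k)}{\partial x_k}$, so that (\ref{eq:eq3}) reads simply $S_z=S_\xi+S_{z\vert\xi}$. The first thing I would record is the structural observation that $S_\xi$ depends only on $\xi_k$ (and $\theta_k$), hence it is deterministic once we condition on $\xi_k$ and can be pulled outside any conditional expectation $\mathbb{E}[\bullet\vert\xi_k,\theta_k]$. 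This single remark is what makes the whole computation collapse.

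Next I would start from Proposition \ref{prop:theo2}, which already expresses $J_{z\vert\xi}(\xi_k\vert\theta_k)=\mathbb{E}[S_{z\vert\xi}S_{z\vert\xi}^{\top}\vert\xi_k,\theta_k]$, and substitute $S_{z\vert\xi}=S_z-S_\xi$. Expanding the outer product produces four terms; using that $S_\xi$ is constant under the conditioning together with $\mathbb{E}[S_z\vert\xi_k,\theta_k]=S_\xi$ from (\ref{eq:mainidentity}), the two cross terms each equal $-S_\xi S_\xi^{\top}$ and the trailing term equals $+S_\xi S_\xi^{\top}$, so that after cancellation only $\mathbb{E}[S_z S_z^{\top}\vert\xi_k,\theta_k]-S_\xi S_\xi^{\top}$ survives. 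Finally I would rewrite $S_\xi S_\xi^{\top}$ as $\mathbb{E}[S_z\vert\xi_k,\theta_k]\,\mathbb{E}[S_z^{\top}\vert\xi_k,\theta_k]$, again invoking (\ref{eq:mainidentity}), which is exactly the right-hand side of (\ref{eq:matrixJzxi}).

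Equivalently, and perhaps more transparently, one can phrase the argument in terms of conditional covariances: since $S_z=S_\xi+S_{z\vert\xi}$ with $S_\xi$ deterministic given $\xi_k$, the conditional covariance of $S_z$ coincides with that of $S_{z\vert\xi}$; and because $\mathbb{E}[S_{z\vert\xi}\vert\xi_k,\theta_k]=0$ by Proposition \ref{prop:prop1}, this covariance reduces to the raw second moment $\mathbb{E}[S_{z\vert\xi}S_{z\vert\xi}^{\top}\vert\xi_k,\theta_k]$, which is $J_{z\vert\xi}(\xi_k\vert\theta_k)$ by Proposition \ref{prop:theo2}. Either route amounts to a short algebraic manipulation. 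I do not anticipate a genuine obstacle here: all measurability and integrability technicalities have already been handled by condition (\ref{eq:ass}) and by the preceding propositions. The only point deserving explicit care is the justification that $S_\xi$ may be treated as a constant inside $\mathbb{E}[\bullet\vert\xi_k,\theta_k]$, since it is precisely this fact that licenses the cancellation of the cross terms and converts the second moment of the conditional score into the conditional covariance of the complete-data score.
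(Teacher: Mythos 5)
Your proposal is correct and takes essentially the same approach as the paper: the paper's own proof likewise combines the decomposition (\ref{eq:eq3}), the vanishing conditional mean (\ref{eq:condscore}), and Proposition \ref{prop:theo2}, handling the cross terms via Corollary \ref{cor:cor1} together with the fact that $\frac{\partial \log f(\xi_k\vert \theta_k)}{\partial x_k}$ has zero conditional covariance given $\xi_k$ --- precisely your observation that this score is deterministic under the conditioning. Your second (conditional-covariance) formulation is essentially verbatim the paper's argument, and your first route is the same algebra written out by direct expansion using (\ref{eq:mainidentity}).
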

\begin{proof}
Using (\ref{eq:eq3}), (\ref{eq:condscore}), Corollary \ref{cor:cor1} and the fact that $\mathrm{Cov}\left(\frac{\partial \log f(\xi_k\vert \theta_k)}{\partial x_k}, \frac{\partial \log f(\xi_k\vert \theta_k)}{\partial x_k}\Big\vert \xi_k,\theta_k\right)=0,$ we have
\begin{align*}
&\mathrm{Cov}\left(\frac{\partial \log f(z_k\vert \theta_k)}{\partial x_k}, \frac{\partial \log f(z_k\vert \theta_k)}{\partial x_k}\Big\vert \xi_k,\theta_k\right)\\
&\hspace{-0.5cm}=\mathrm{Cov}\left(\frac{\partial \log f(z_k\vert \xi_k,\theta_k)}{\partial x_k}, \frac{\partial \log f(z_k\vert \xi_k,\theta_k)}{\partial x_k}\Big\vert \xi_k,\theta_k\right)\\
&\hspace{-0.5cm}=\mathbb{E}\Big[\frac{\partial \log f(z_k\vert \xi_k,\theta_k)}{\partial x_k}\frac{\partial \log f(z_k\vert \xi_k,\theta_k)}{\partial x_k^{\top}}\Big\vert \xi_k,\theta_k\Big],
\end{align*}
which establishes the claim by Proposition \ref{prop:theo2}.
\end{proof}

The above result leads to an explicit form of $J_{\xi}(\xi_k\vert \theta_k)$.

\begin{theorem}
The information matrix $J_{\xi}(\xi_k\vert \theta_k)$ is given by
\begin{align}
J_{\xi}(\xi_k\vert \theta_k)=& \mathbb{E}\Big[- \frac{\partial^2 \log f(z_k\vert \theta_k)}{\partial x_k \partial x_k^{\top}} \Big\vert \xi_k,\theta_k\Big] \nonumber\\
&\hspace{-1cm}-\mathbb{E}\Big[\frac{\partial \log f(z_k \vert \theta_k)}{\partial x_k} \frac{\partial \log f(z_k \vert \theta_k)}{\partial x_k^{\top}}    \Big\vert \xi_k,\theta_k \Big]  \label{eq:fisher}\\
&\hspace{-1cm}+\mathbb{E}\Big[\frac{\partial \log f(z_k \vert \theta_k)}{\partial x_k} \Big\vert \xi_k,\theta_k \Big]\mathbb{E}\Big[\frac{\partial \log f(z_k \vert \theta_k)}{\partial x_k^{\top}} \Big\vert \xi_k,\theta_k \Big]. \nonumber
\end{align}
\end{theorem}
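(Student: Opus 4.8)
The plan is to assemble the claimed formula directly from the additive decomposition already recorded in (\ref{eq:info}), combined with the explicit expression for $J_z(\xi_k\vert\theta_k)$ from its definition and the formula for $J_{z\vert\xi}(\xi_k\vert\theta_k)$ supplied by Proposition \ref{prop:prop2}. Concretely, (\ref{eq:info}) states
\[
J_{\xi}(\xi_k\vert\theta_k)=J_z(\xi_k\vert \theta_k) - J_{z\vert \xi}(\xi_k\vert \theta_k),
\]
so the entire task reduces to substituting the right-hand sides of these two quantities and reading off the result.

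First I would replace $J_z(\xi_k\vert\theta_k)$ by its definition, namely the conditional expectation of the negative Hessian $-\partial^2\log f(z_k\vert\theta_k)/\partial x_k\partial x_k^{\top}$ given $(\xi_k,\theta_k)$; this reproduces the first line of (\ref{eq:fisher}) verbatim. Next I would invoke Proposition \ref{prop:prop2}, which writes $J_{z\vert\xi}(\xi_k\vert\theta_k)$ as the conditional second moment of the complete-data score $\partial\log f(z_k\vert\theta_k)/\partial x_k$ minus the outer product of its conditional mean. Forming $-J_{z\vert\xi}(\xi_k\vert\theta_k)$ flips both signs and yields precisely the second and third lines of (\ref{eq:fisher}): the negative of the conditional second-moment matrix together with the positive outer product of the conditional score means. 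Collecting the three contributions then gives the stated identity.

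I do not anticipate any genuine obstacle at this stage, since all of the analytic work has already been carried out upstream: differentiating under the integral sign, the vanishing of the conditional score in Proposition \ref{prop:prop1}, and the covariance computations of Corollary \ref{cor:cor1} and Proposition \ref{prop:prop2}. What remains is purely an algebraic substitution. The only points I would state carefully are that the integrability hypothesis (\ref{eq:ass}) is what legitimizes writing $J_z$, $J_{z\vert\xi}$ and the conditional expectations involved, and that the decomposition (\ref{eq:info}) itself rests on the score identity (\ref{eq:eq3}); once these are in place the formula is immediate. I would close by observing, as the statement intends, that treating $z_k$ as depending on an unknown parameter rather than on the state recovers \cite{Louis}'s classical expression for the observed Fisher information, so (\ref{eq:fisher}) is the state-vector analogue of that formula.
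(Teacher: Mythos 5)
Your proposal is correct and coincides with the paper's own argument: the paper likewise obtains (\ref{eq:fisher}) by combining the decomposition (\ref{eq:info}) with Proposition \ref{prop:prop2}, merely phrasing it as replacing $J_{z\vert\xi}(\xi_k\vert\theta_k)$ by $J_z(\xi_k\vert\theta_k)-J_{\xi}(\xi_k\vert\theta_k)$ in (\ref{eq:matrixJzxi}), which is the same algebraic substitution you perform. (The paper additionally notes a second route, differentiating the identity (\ref{eq:mainidentity}) directly, detailed in its Appendix A, but that is offered only as an alternative.)
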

\begin{proof}
The proof follows from replacing the information matrix $J_{z\vert\xi}(\xi_k\vert \theta_k)$ by $J_z(\xi_k\vert\theta_k)- J_{\xi}(\xi_k\vert \theta_k)$ in (\ref{eq:matrixJzxi}). The expression (\ref{eq:fisher}) can also be obtained by taking taking derivative of (\ref{eq:mainidentity}). See Appendix A for details of derivation.
\end{proof}

Note that the observed information matrix $J_{\xi}(\xi_k\vert \theta_k)$ (\ref{eq:fisher}) extends the general matrix formula of \cite{Louis} for the calculation of covariance matrix of estimation error $\widehat{x}_k-x_k$.

\subsubsection{Recursive equation for the inverse of observed information matrix $J_{\xi}(\widehat{x}_k,\underline{y}_k\vert\theta_k)$}

The result below presents a recursive equation for calculating the inverse of observed information matrix $J_{\xi}(\widehat{x}_k,\underline{y}_k\vert\theta_k)$ under ML estimator $\widehat{x}_k$. The scheme may be used in the case $J_{\xi}(\widehat{x}_k,\underline{y}_k\vert\theta_k)$ is lack of sparsity and may be large due to the size of state-vector $x_k$. In such case $J_{\xi}(\widehat{x}_k,\underline{y}_k\vert\theta_k)$ is difficult to invert and may result in an incorrect inverse.
\begin{theorem}\label{theo:recOFI}
Let, for fixed $k$, $\{\Omega_k^{\ell}\}_{\ell\geq 0}$ be a sequence of $(p\times p)-$matrices with $\Omega_k^0=0$ satisfying the recursive equation 
\begin{align}\label{eq:recOFI}
\Omega_k^{\ell+1}=A(\widehat{x}_k)\Omega_k^{\ell} + B(\widehat{x}_k),
\end{align}
for $A(\widehat{x}_k),B(\widehat{x}_k)\in \mathbb{R}^{p\times p}$. Then, $\{\Omega_k^{\ell}\}_{\ell\geq 1}$ converges with root of convergence $\rho(A(\widehat{x}_k))$ to $\Omega_k=J_{\xi}^{-1}(\widehat{x}_k,\underline{y}_k\vert\theta_k)$ with
\begin{align*}
A(\widehat{x}_k)=& I - J_{z}^{-1}(\widehat{x}_k,\underline{y}_k\vert\theta_k)J_{\xi}(\widehat{x}_k,\underline{y}_k\vert\theta_k)\\
 B(\widehat{x}_k)=& J_{z}^{-1}(\widehat{x}_k,\underline{y}_k\vert\theta_k).
\end{align*}
Furthermore, the convergence is monotone in the sense 
\begin{align*}
\Omega_k^{\ell}<\Omega_k^{\ell+1}\leq \Omega_k, \quad \textrm{for}\;\; \ell=0,1,\ldots
\end{align*}
\end{theorem}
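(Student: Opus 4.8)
The plan is to read (\ref{eq:recOFI}) as an affine fixed-point iteration $\Omega \mapsto A(\widehat{x}_k)\Omega + B(\widehat{x}_k)$ and to analyse it via a single congruence by $J_z^{\pm 1/2}$ that simultaneously symmetrizes everything in sight. First I would check that $\Omega_k=J_{\xi}^{-1}(\widehat{x}_k,\underline{y}_k\vert\theta_k)$ is a fixed point: substituting the stated $A$ and $B$ gives $A J_\xi^{-1}+B=(I-J_z^{-1}J_\xi)J_\xi^{-1}+J_z^{-1}=J_\xi^{-1}-J_z^{-1}+J_z^{-1}=J_\xi^{-1}$. Since $I-A=J_z^{-1}J_\xi$, uniqueness of the fixed point follows once $\rho(A)<1$ is known. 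Unrolling from $\Omega_k^0=0$ yields the matrix Neumann series $\Omega_k^{\ell}=\sum_{j=0}^{\ell-1}A^j B$, whose limit and convergence rate are governed entirely by the spectrum of $A$.

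The crux is to establish $\rho(A(\widehat{x}_k))<1$. Here I would invoke Theorem \ref{theo:infoloss}, which guarantees $J_z>J_\xi>0$ at $\widehat{x}_k$ in the Loewner order. Letting $J_z^{1/2}$ denote the symmetric positive-definite square root and setting $M:=J_z^{-1/2}J_\xi J_z^{-1/2}$, the matrix $M$ is symmetric positive definite and, because $J_\xi<J_z$, satisfies $0<M<I$, so all its eigenvalues lie in $(0,1)$. As $A=I-J_z^{-1}J_\xi=J_z^{-1/2}(I-M)J_z^{1/2}$ is similar to $I-M$, its eigenvalues lie in $(0,1)$ as well, whence $\rho(A)<1$. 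This is exactly the eigenvalue-localisation argument already used for $I-K_kH_k$ in Section \ref{sec:mmse} (cf.\ Corollary 1.3.4 in \cite{Horn}). Standard matrix-series theory then gives convergence at rate $\rho(A)$ to $(I-A)^{-1}B=(J_z^{-1}J_\xi)^{-1}J_z^{-1}=J_\xi^{-1}$, as claimed.

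Monotonicity drops out of the same congruence. From the recursion one has $\Omega_k^{\ell+1}-\Omega_k^{\ell}=A(\Omega_k^{\ell}-\Omega_k^{\ell-1})=A^{\ell}B$, and using $A=J_z^{-1/2}(I-M)J_z^{1/2}$ together with $B=J_z^{-1}$ gives the key factorization $A^{\ell}B=J_z^{-1/2}(I-M)^{\ell}J_z^{-1/2}$. Since $(I-M)^{\ell}$ is symmetric positive definite, this is a congruence of a positive-definite matrix, hence positive definite, yielding $\Omega_k^{\ell}<\Omega_k^{\ell+1}$. For the upper bound the tail satisfies $\Omega_k-\Omega_k^{\ell+1}=\sum_{j\geq \ell+1}A^{j}B=J_z^{-1/2}\big((I-M)^{\ell+1}M^{-1}\big)J_z^{-1/2}$, and because the commuting factors $(I-M)^{\ell+1}$ and $M^{-1}$ are each positive definite their product is symmetric positive definite, so $\Omega_k^{\ell+1}\leq \Omega_k$ (with strict inequality for every finite $\ell$).

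The main obstacle is the non-symmetry of $A(\widehat{x}_k)$: the product of a non-symmetric matrix with a positive-definite one need not be positive definite, and $\rho(A)$ cannot be read off directly from the Loewner ordering of $J_z$ and $J_\xi$. The whole argument therefore hinges on spotting the single similarity/congruence by $J_z^{\pm 1/2}$ that converts $A$ into $I-M$ with $M$ symmetric positive definite; once this transformation is in place, the spectral bound, the limit identity, and both monotonicity statements all reduce to elementary facts about the symmetric matrix $M$ and the ordering supplied by Theorem \ref{theo:infoloss}.
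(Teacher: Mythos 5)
Your proposal is correct, and its skeleton matches the paper's: view (\ref{eq:recOFI}) as an affine fixed-point iteration, verify that $J_{\xi}^{-1}$ is the fixed point, bound the spectrum of $A(\widehat{x}_k)$ using the information-loss inequality of Theorem \ref{theo:infoloss}, and obtain monotonicity from the differences $\Omega_k^{\ell+1}-\Omega_k^{\ell}=A^{\ell}B$. Where you differ is in how the two crucial matrix-analytic facts are justified. The paper delegates them to cited results in \cite{Horn}: Theorem 7.2.1 to pass from $J_z>J_{\xi}>0$ to $I>J_z^{-1}J_{\xi}>0$, Corollary 1.3.4 for the eigenvalue localisation of $A$, Corollary 5.6.16 for the Neumann series, and Corollary 7.7.4(a) for positive definiteness of $A^{\ell}B$. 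You instead make these facts self-contained through the single symmetrization $M:=J_z^{-1/2}J_{\xi}J_z^{-1/2}$, which turns $A$ into the similar symmetric matrix $I-M$ with spectrum in $(0,1)$ and yields the explicit congruence $A^{\ell}B=J_z^{-1/2}(I-M)^{\ell}J_z^{-1/2}>0$; this is exactly the mechanism underlying the Horn citations, made explicit (and it cleans up the paper's slight abuse of writing $I>J_z^{-1}J_{\xi}>0$ for a non-symmetric product). You also prove the upper bound $\Omega_k^{\ell+1}\leq\Omega_k$ directly from the tail sum $J_z^{-1/2}(I-M)^{\ell+1}M^{-1}J_z^{-1/2}>0$, whereas the paper leaves that half of the monotonicity claim implicit (it follows there only by combining monotone increase with convergence to $\Omega_k$). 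The net effect is a proof that is logically equivalent to the paper's but more elementary and verifiable line by line, at the cost of introducing the square-root factorization machinery.
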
 
\begin{proof}
The proof is based on applying inequality (\ref{eq:infoloss}) and established in similar approach to the proof of Theorem 4 in \cite{Surya2022}. See Appendix B for details of derivation.
\end{proof}

\subsection{Maximum likelihood estimation of state from incomplete data $(x_k,\underline{y}_k)$}
To formulate ML estimation, suppose that a true value $x_k^0$ of $x_k$ was sampled independently from $f(x_k\vert \theta_k)$ and is partially observed through noisy measurements $y_k$. Based on a series of observations $\underline{y}_k$, an estimator $\widehat{x}_k^0$ is found as the maximizer of the loglikelihood function $\log f(x_k, \underline{y}_k  \vert\theta_k)$, i.e., 
\begin{align*}
\widehat{x}_k^0=\argmax_{x_k}  \log f(x_k, \underline{y}_k  \vert\theta_k),
\end{align*}
or, equivalently, as the solution of system of equations 
\begin{align}\label{eq:mle2}
\frac{\partial \log f(\widehat{x}_k^0, \underline{y}_k  \vert\theta_k)}{\partial x_k}=0.
\end{align}
The theorem below shows that as the solution of (\ref{eq:mle2}), the estimator $\widehat{x}_k^0$ corresponds to the value of $x_k$ that gives the smallest distance between $\log f(x_k\vert \theta_k)$ and $\log f(x_k^0\vert\theta_k)$.
\begin{proposition}
As the solution of of (\ref{eq:mle2}), we have 
\begin{align}\label{eq:LR}
\widehat{x}_k^0=\argmin_{x_k} \left\vert \log f(x_k^0\vert\theta_k) -\log f(x_k \vert\theta_k)\right\vert^2.
\end{align}
\end{proposition}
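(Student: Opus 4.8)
The plan is to verify that $\widehat{x}_k^0$ satisfies the first-order stationarity condition of the minimization problem on the right-hand side of (\ref{eq:LR}), and then to identify that stationary point as the relevant extremum using the information-matrix results already established.

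First I would differentiate the objective $D(x_k):=\big(\log f(x_k^0\vert\theta_k)-\log f(x_k\vert\theta_k)\big)^2$ with respect to $x_k$. Since $x_k^0$ is fixed, the chain rule gives
\[
\frac{\partial D(x_k)}{\partial x_k}=-2\big(\log f(x_k^0\vert\theta_k)-\log f(x_k\vert\theta_k)\big)\,\frac{\partial \log f(x_k\vert\theta_k)}{\partial x_k},
\]
so that, away from the degenerate factor $\log f(x_k\vert\theta_k)=\log f(x_k^0\vert\theta_k)$, every stationary point is characterised by the vanishing of the prior score $\frac{\partial \log f(x_k\vert\theta_k)}{\partial x_k}$.

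The crux is therefore to show that this prior score vanishes at $\widehat{x}_k^0$. Here I would invoke identity (\ref{eq:mainidentityB}) of Theorem \ref{theo:main}, which expresses the prior score as the conditional expectation $\mathbb{E}\big[\frac{\partial \log f(\xi_k\vert\theta_k)}{\partial x_k}\big\vert x_k,\theta_k\big]$ of the incomplete-data score, with $\xi_k=(x_k,\underline{y}_k)$. The defining equation (\ref{eq:mle2}) of the ML estimator states precisely that the incomplete-data score $\frac{\partial \log f(\widehat{x}_k^0,\underline{y}_k\vert\theta_k)}{\partial x_k}$ vanishes. Evaluating (\ref{eq:mainidentityB}) at $x_k=\widehat{x}_k^0$ and combining it with (\ref{eq:mle2}) then forces $\frac{\partial \log f(x_k\vert\theta_k)}{\partial x_k}\big\vert_{x_k=\widehat{x}_k^0}=0$, so the first-order condition displayed above holds at $\widehat{x}_k^0$. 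To promote this stationary point to a minimiser I would appeal to the second-order information already available: by the ordering (\ref{eq:ordering}) in Theorem \ref{theo:infoloss} the observed information $J_{\xi}(\widehat{x}_k^0,\underline{y}_k\vert\theta_k)$ is positive definite, so $\widehat{x}_k^0$ is a strict local maximiser of the incomplete-data loglikelihood, and this curvature controls the sign of the Hessian of $D$ at $\widehat{x}_k^0$.

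The hard part will be bridging the gap between the pointwise estimating equation (\ref{eq:mle2}), which holds for the single realised observation sequence $\underline{y}_k$, and the conditional expectation over $\underline{y}_k$ appearing in (\ref{eq:mainidentityB}): strictly, that identity equates the prior score to the \emph{averaged} incomplete-data score, not to the score at one realisation. I would reconcile this either by reading the defining relation at the realised data, so that the averaged and realised scores agree at the estimate by construction, or by taking the conditional expectation of (\ref{eq:mle2}) itself before substituting. Pinning down this step rigorously, together with dismissing the degenerate stationary branch where the two log-priors coincide, is where the argument requires the most care.
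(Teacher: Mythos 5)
Your proposal follows essentially the same route as the paper's proof: differentiate the squared log-likelihood distance to obtain the stationarity condition, then combine (\ref{eq:mle2}) with identity (\ref{eq:mainidentityB}) to conclude that the prior score $\frac{\partial \log f(\widehat{x}_k^0\vert\theta_k)}{\partial x_k}$ vanishes at $\widehat{x}_k^0$ — and the reconciliation the paper adopts for the realised-data-versus-expectation issue you flag is exactly your second option, namely taking the conditional expectation $\mathbb{E}[\bullet\vert\widehat{x}_k^0,\theta_k]$ of (\ref{eq:mle2}) before invoking (\ref{eq:mainidentityB}). The paper stops at this first-order verification, so your additional remarks on second-order curvature and the degenerate branch where the two log-priors coincide go beyond (rather than against) the paper's own argument, which never addresses them.
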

As we can see, the nature of the estimation problem (\ref{eq:LR}) is quite different from the MMSE problem (\ref{eq:objfunc}). As opposed to (\ref{eq:objfunc}), the criterion (\ref{eq:LR}) gives an estimate $\widehat{x}_k^0$ of the unknown state $x_k^0$ with the most adherence of loglikelihood.   
\begin{proof}
Taking expectation $\mathbb{E}[\bullet\vert \widehat{x}_k^0,\theta_k]$ on both sides of (\ref{eq:mle2}), we deduce from criterion (\ref{eq:mle2}) and the identity (\ref{eq:mainidentityB}) that 
\begin{align}\label{eq:mle2B}
\mathbb{E}\Big[\frac{\partial \log f(\widehat{x}_k^0, \underline{y}_k  \vert\theta_k)}{\partial x_k}\Big\vert \widehat{x}_k^0,\theta_k]=\frac{\partial \log f(\widehat{x}_k^0\vert \theta_k)}{\partial x_k}=0,
\end{align}
leading to the equation
\begin{align*}
2\big[ \log f(x_k^0\vert\theta_k) -\log f(\widehat{x}_k^0 \vert\theta_k)\big]\frac{\partial \log f(\widehat{x}_k^0\vert \theta_k)}{\partial x_k}=0,
\end{align*}
from which the solution $\widehat{x}_k^0$ of (\ref{eq:mle2}) coincides with (\ref{eq:LR}).
\end{proof}

We further assume throughout the remaining of this paper that the state transition probability density function $f(x_k\vert x_{k-1}, \theta_k)$ has the following properties:
\begin{align}\label{eq:ass3}
f(x_k\vert x_{k-1},\theta_k)=0 \quad \textrm{and} \quad \frac{\partial f(x_k\vert x_{k-1},\theta_k)}{\partial x_k} =0, \tag{A2a}
\end{align}
for $x_k\in \partial \mathcal{X}_k$. Necessarily, for each $x_k\in \mathcal{X}_k$ and $n=0,1$,
\begin{align}\label{eq:ass3b}
\sup_{x_{k-1}}\Big\vert \frac{\partial^n f(x_k\vert x_{k-1},\theta_k)}{\partial x_k^n}\Big\vert <\infty, \tag{A2b}
\end{align}
where $\mathcal{X}_k$ denotes a set of all possible values of $x_k$. In fact, using Bayes formula, the above condition can as well be specified in terms of conditional distribution $f(x_k\vert x_{k-1},\theta_k)$. 
\begin{lemma}
For $x_k\in \partial \mathcal{X}_k$, the condition (A2) implies
\begin{align}\label{eq:ass3B}
f(x_k\vert \theta_k)=0 \quad \textrm{and} \quad \frac{\partial f(x_k\vert\theta_k)}{\partial x_k} =0. 
\end{align}
\end{lemma}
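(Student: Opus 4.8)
The plan is to build a bridge from the transition density $f(x_k\vert x_{k-1},\theta_k)$, about which (A2) speaks, to the marginal density $f(x_k\vert\theta_k)$, about which the lemma concludes. The natural bridge is the law of total probability: by Bayes' formula together with the Markov structure of (1a) (so that conditioning on $x_{k-1}$ suffices), I would write
\begin{align*}
f(x_k\vert\theta_k)=\int_{\mathcal{X}_{k-1}} f(x_k\vert x_{k-1},\theta_k)\, f(x_{k-1}\vert\theta_k)\, d\lambda(x_{k-1}).
\end{align*}
This representation is the crux, because it expresses both $f(x_k\vert\theta_k)$ and (after differentiation) its derivative as integrals of the corresponding quantities for the transition density, weighted by the integrable marginal $f(x_{k-1}\vert\theta_k)$.

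For the first claim I would fix $x_k\in\partial\mathcal{X}_k$ and invoke the first half of (A2a), namely $f(x_k\vert x_{k-1},\theta_k)=0$ for every $x_{k-1}$. The integrand above then vanishes identically, so $f(x_k\vert\theta_k)=0$ immediately. This step is purely a substitution and requires nothing beyond the representation.

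For the second claim I would differentiate the integral representation with respect to $x_k$ and then apply the second half of (A2a), $\partial f(x_k\vert x_{k-1},\theta_k)/\partial x_k=0$ on $\partial\mathcal{X}_k$, to conclude that $\partial f(x_k\vert\theta_k)/\partial x_k=0$. The only genuine obstacle is justifying the interchange of $\partial/\partial x_k$ with $\int d\lambda(x_{k-1})$, and this is precisely what (A2b) is designed to supply: for $n=1$ the uniform bound $\sup_{x_{k-1}}\vert\partial f(x_k\vert x_{k-1},\theta_k)/\partial x_k\vert<\infty$ furnishes a dominating function, that bound multiplied by the integrable density $f(x_{k-1}\vert\theta_k)$, so that dominated convergence (Leibniz's rule) licenses differentiation under the integral sign. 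Once the derivative is moved inside, (A2a) makes the new integrand vanish at $x_k\in\partial\mathcal{X}_k$, completing the proof. The remaining manipulations are routine; the substantive point to foreground is that (A2b) is exactly the hypothesis that permits the differentiation-under-the-integral step that (A2a) alone could not justify.
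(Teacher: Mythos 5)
Your proposal is correct and follows essentially the same route as the paper: the paper's (very terse) proof likewise writes $f(x_k\vert\theta_k)=\int f(x_k\vert x_{k-1},\theta_k)f(x_{k-1}\vert\theta_k)\,d\lambda(x_{k-1})$ via Bayes' formula and invokes (A2b) to justify passing the conclusion (including the derivative statement) through the integral. Your write-up merely makes explicit what the paper leaves implicit, namely that (A2b) supplies the dominating function needed for differentiation under the integral sign.
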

\begin{proof}
By (\ref{eq:ass3b}), the proof follows from Bayes formula, $f(x_k\vert \theta_k)=\int_{\mathcal{X}_k} f(x_k\vert x_{k-1},\theta_k) f(x_{k-1}\vert \theta_k) d\lambda(x_{k-1}).$
\end{proof}

\begin{example}
From the linear state-space (\ref{eq:eq1a}), we have $f(x_k\vert x_{k-1},\theta_k)=N(x_k\vert F_k x_{k-1}+ G_ku_k,Q_k)$. Whereas from (\ref{eq:solofx}), $x_k$ has probability density $f(x_k\vert\theta_k)$ of a multivariate normal distribution with mean $\mu_{x_k}:=\Phi_{k+1,1}\mu + \sum_{n=1}^k \Phi_{k+1,n+1} G_n u_n$ and covariance matrix $\Sigma_{x_k}:=\Phi_{k+1,1}P_0 \Phi_{k+1,1}^{\top} + \sum_{n=1}^k \Phi_{k+1,n+1}Q_n  \Phi_{k+1,n+1}^{\top}$.  Thus, the distribution of $x_k$ clearly satisfies (A2) and (\ref{eq:ass3B}).
\end{example}
\begin{proposition}\label{prop:propass2}
For any given $\theta_k$, (\ref{eq:ass3B}) implies that
\begin{align*}
&\hspace{2.5cm}\mathbb{E}\Big[\frac{\partial \log f(x_k^0\vert \theta_k)}{\partial x_k}\big\vert \theta_k\Big]=0,\\[4pt]
&\mathbb{E}\Big[-\frac{\partial^2 f(x_k^0\vert \theta_k)}{\partial x_k \partial x_k^{\top}}\big\vert \theta_k\Big]=\mathbb{E}\Big[\frac{\partial \log f(x_k^0\vert \theta_k)}{\partial x_k}\frac{\partial \log f(x_k^0\vert \theta_k)}{\partial x_k^{\top}}\big\vert \theta_k\Big].
\end{align*}
\end{proposition}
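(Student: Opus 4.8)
The plan is to read both identities as the standard score-mean and information-matrix equalities for the marginal density $f(x_k\vert\theta_k)$, where the only nonstandard ingredient is that the two vanishing boundary terms are supplied directly by the Neumann-type conditions in (\ref{eq:ass3B}) rather than by differentiability at an interior maximum. Throughout, $\mathbb{E}[\bullet\vert\theta_k]$ denotes integration over $\mathcal{X}_k$ against $f(x_k\vert\theta_k)\,d\lambda(x_k)$, and the uniform bounds (\ref{eq:ass3b}) are what justify both differentiating under the integral sign and the componentwise use of the fundamental theorem of calculus.

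For the first identity I would write
\[
\mathbb{E}\Big[\frac{\partial \log f(x_k^0\vert\theta_k)}{\partial x_k}\Big\vert\theta_k\Big]
=\int_{\mathcal{X}_k}\frac{\partial \log f(x_k\vert\theta_k)}{\partial x_k}\,f(x_k\vert\theta_k)\,d\lambda(x_k)
=\int_{\mathcal{X}_k}\frac{\partial f(x_k\vert\theta_k)}{\partial x_k}\,d\lambda(x_k),
\]
using $\frac{\partial\log f}{\partial x_k}\,f=\frac{\partial f}{\partial x_k}$. Integrating each coordinate along its own axis and invoking the fundamental theorem of calculus, the right-hand side reduces to the values of $f(x_k\vert\theta_k)$ on the boundary $\partial\mathcal{X}_k$, which are zero by the first condition in (\ref{eq:ass3B}); hence the score has zero mean.

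For the second identity I would reuse the product-rule decomposition already exploited in the proof of Proposition \ref{prop:theo2}, but now applied to the marginal density:
\[
-\frac{\partial^2\log f(x_k\vert\theta_k)}{\partial x_k\partial x_k^{\top}}\,f(x_k\vert\theta_k)
=-\frac{\partial}{\partial x_k}\Big[\frac{\partial \log f(x_k\vert\theta_k)}{\partial x_k^{\top}}\,f(x_k\vert\theta_k)\Big]
+\frac{\partial\log f(x_k\vert\theta_k)}{\partial x_k}\frac{\partial\log f(x_k\vert\theta_k)}{\partial x_k^{\top}}\,f(x_k\vert\theta_k).
\]
Integrating over $\mathcal{X}_k$ and using $\frac{\partial\log f}{\partial x_k^{\top}}\,f=\frac{\partial f}{\partial x_k^{\top}}$, the first term collapses to $-\int_{\mathcal{X}_k}\frac{\partial^2 f(x_k\vert\theta_k)}{\partial x_k\partial x_k^{\top}}\,d\lambda(x_k)$, which by the same componentwise fundamental-theorem argument equals the boundary values of $\frac{\partial f(x_k\vert\theta_k)}{\partial x_k^{\top}}$ on $\partial\mathcal{X}_k$; these vanish by the second condition in (\ref{eq:ass3B}). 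What survives is precisely $\mathbb{E}\big[\frac{\partial\log f}{\partial x_k}\frac{\partial\log f}{\partial x_k^{\top}}\big\vert\theta_k\big]$, i.e. the observed-information (Hessian-of-$\log f$) matrix on the left equals the outer product of the score, as claimed.

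I expect the main obstacle to be making the two integration-by-parts steps rigorous in the multivariate setting with a general $\sigma$-finite $\lambda$: one must justify interchanging $\partial/\partial x_k$ with $\int_{\mathcal{X}_k}(\cdot)\,d\lambda$ (supplied by the uniform domination (\ref{eq:ass3b}) together with the dominated convergence theorem, exactly as in Proposition \ref{prop:prop1} and Proposition \ref{prop:theo2}) and then control the resulting boundary contributions. This is where the strength of (\ref{eq:ass3B}) over a mere $f\to0$ condition is essential: the mean-score identity needs only $f=0$ on $\partial\mathcal{X}_k$, whereas the information-matrix equality additionally requires $\partial f/\partial x_k=0$ there, which is exactly the extra Neumann condition carried over from (\ref{eq:ass3}) through the preceding lemma.
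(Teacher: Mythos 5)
Your proposal is correct and takes essentially the same route as the paper's Appendix C: the zero-mean score identity is obtained by rewriting the score-weighted integral as $\int_{\mathcal{X}_k}\frac{\partial f(x_k\vert\theta_k)}{\partial x_k}\,d\lambda(x_k)$ and reducing it, coordinate by coordinate via the fundamental theorem of calculus, to boundary values of $f$ that vanish under (\ref{eq:ass3B}), while the information identity follows from the same product-rule/chain-rule rearrangement of $-\frac{\partial^2\log f}{\partial x_k\partial x_k^{\top}}f$ (your decomposition and the paper's chain-rule identity are the same computation written in two ways), with the surviving boundary term $\frac{\partial f}{\partial x_k}$ killed by the Neumann part of (\ref{eq:ass3B}). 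Note also that you correctly interpret the second identity as concerning $\frac{\partial^2 \log f}{\partial x_k\partial x_k^{\top}}$ — the statement's left-hand side prints $\frac{\partial^2 f}{\partial x_k\partial x_k^{\top}}$, apparently a typo, and the paper's own proof establishes the $\log$ version exactly as you do.
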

\begin{proof} See Appendix C for details of the proof.
\end{proof}
It is worth mentioning that for the score function and information matrix of parameter, we do not need to impose the condition (A2) to arrive at the corresponding results\footnote{$\mathbb{E}\big[\frac{\partial \log f(x_k^0\vert \theta_k)}{\partial \theta_k}\big\vert \theta_k\big]=0$ and $\mathbb{E}\big[-\frac{\partial^2 \log f(x_k^0\vert \theta_k)}{\partial \theta_k \partial \theta_k^{\top}}\big\vert \theta_k\big]=\mathbb{E}\big[\frac{\partial \log f(x_k^0\vert \theta_k)}{\partial \theta_k}\frac{\partial \log f(x_k^0\vert \theta_k)}{\partial \theta_k^{\top}}\big\vert \theta_k\big]$. See e.g., \cite{Schervish}.} for distribution parameter, see for e.g., \cite{Schervish}.

Applying a first-order Taylor expansion around the true value $x_k^0$ to the score function $\frac{\partial \log f(\widehat{x}_k^0, \underline{y}_k  \vert\theta_k)}{\partial x_k}$ and replacing the observed information matrix $J_{\xi}(x_k^0,\underline{y}_k\vert\theta_k]$ by expected information matrix $I_{\xi}(x_k^0,\underline{y}_k\vert\theta_k]=\mathbb{E}\big[J_{\xi}(x_k^0,\underline{y}_k\vert\theta_k]\vert\theta_k\big]$, similar to those employed in \cite{Freedman} for parameter estimation, the ML estimator $\widehat{x}_k^0$ (\ref{eq:mle2}) reads
\begin{align}\label{eq:euler}
\frac{\partial \log f(x_k^0,\underline{y}_k\vert \theta_k)}{\partial x_k}= I_{\xi}(x_k^0,\underline{y}_k\vert \theta_k) \big(\widehat{x}_k^0 -x_k^0\big).
\end{align}
Since $\widehat{x}_k^0$ is the maximizer of the loglikelihood function $\log f(x_k,\underline{y}_k\vert \theta_k)$, $\frac{\partial \log f(x_k^0,\underline{y}_k\vert \theta_k)}{\partial x_k}>0$ for $x_k^0<\widehat{x}_k^0$, and $\frac{\partial \log f(x_k^0,\underline{y}_k\vert \theta_k)}{\partial x_k}<0$ for $x_k^0>\widehat{x}_k^0$. We deduce from this observation and (\ref{eq:euler}) that $I_{\xi}(x_k^0,\underline{y}_k \vert \theta_k)>0$ at true state-vector $x_k^0$ and hence is invertible in view of  Theorem 7.2.1 on p.438 of \cite{Horn}. Thus, $\widehat{x}_k^0$ is given by
\begin{align}\label{eq:taylor}
\widehat{x}_k^0= x_k^0 + I_{\xi}^{-1}(x_k^0,\underline{y}_k\vert \theta_k) \mathcal{S}(x_k^0,\underline{y}_k\vert \theta_k),
\end{align}
where $\mathcal{S}(x_k^0,\underline{y}_l\vert \theta_k)\equiv \frac{\partial \log f(x_k^0,\underline{y}_k\vert \theta_k)}{\partial x_k}$ is evaluated using (\ref{eq:mainidentity}).

\subsection{Covariance matrix of estimation error and the Cram\'er-Rao lower bound}
This section discusses unbiased property of $\widehat{x}_k^0$, covariance matrix of estimation error $\widehat{x}_k^0-x_k^0$ and the corresponding Cram\'er-Rao lower bound for the covariance matrix.

\begin{theorem}\label{theo:unbiased}
Under (\ref{eq:ass3B}), $\widehat{x}_k^0$ is an unbiased estimator of $x_k^0$ with covariance matrix $P_k$ given by $I_{\xi}^{-1}(x_k^0,\underline{y}_k\vert \theta_k)$, i.e.,
\begin{align*}
\mathbb{E}[\widehat{x}_k^0 - x_k^0\big\vert \theta_k\big]=&0\\
\mathbb{E}[(\widehat{x}_k^0-x_k^0)(\widehat{x}_k^0-x_k^0)^{\top}\vert\theta_k]=&I_{\xi}^{-1}(x_k^0,\underline{y}_k\vert \theta_k).
\end{align*}
\end{theorem}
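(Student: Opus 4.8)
The plan is to take the linearized representation (\ref{eq:taylor}) of the estimator as the starting point and reduce both claims to two facts about the incomplete-data score $\mathcal{S}(x_k^0,\underline{y}_k\vert\theta_k)=\frac{\partial\log f(\xi_k\vert\theta_k)}{\partial x_k}$ with $\xi_k=(x_k^0,\underline{y}_k)$: that it has mean zero given $\theta_k$, and that its covariance equals the expected information matrix $I_\xi=\mathbb{E}[J_{\xi}(x_k^0,\underline{y}_k\vert\theta_k)\vert\theta_k]$. Throughout I treat $I_\xi$ as deterministic (the expectation over $(x_k^0,\underline{y}_k)$ at the true value having already been taken), following the convention set when (\ref{eq:taylor}) was derived from \cite{Freedman}. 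Writing (\ref{eq:taylor}) as $\widehat{x}_k^0-x_k^0 = I_\xi^{-1}\mathcal{S}(x_k^0,\underline{y}_k\vert\theta_k)$ and taking $\mathbb{E}[\,\cdot\,\vert\theta_k]$, unbiasedness follows once $\mathbb{E}[\mathcal{S}(x_k^0,\underline{y}_k\vert\theta_k)\vert\theta_k]=0$.

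First I would prove the zero-mean property by conditioning. By the tower property, $\mathbb{E}[\mathcal{S}\vert\theta_k]=\mathbb{E}\big[\mathbb{E}[\mathcal{S}\vert x_k^0,\theta_k]\big\vert\theta_k\big]$; the inner expectation is exactly $\frac{\partial\log f(x_k^0\vert\theta_k)}{\partial x_k}$ by identity (\ref{eq:mainidentityB}), and the outer expectation of that quantity vanishes by the first claim of Proposition \ref{prop:propass2}, which is where the boundary condition (\ref{eq:ass3B}) enters. Hence $\mathbb{E}[\widehat{x}_k^0-x_k^0\vert\theta_k]=I_\xi^{-1}\cdot \mathbf{0}=\mathbf{0}$.

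For the covariance, since $I_\xi$ is symmetric, (\ref{eq:taylor}) gives $\mathbb{E}[(\widehat{x}_k^0-x_k^0)(\widehat{x}_k^0-x_k^0)^{\top}\vert\theta_k]=I_\xi^{-1}\,\mathbb{E}[\mathcal{S}\mathcal{S}^{\top}\vert\theta_k]\,I_\xi^{-1}$, so the claim $P_k=I_\xi^{-1}$ reduces to the information identity $\mathbb{E}[\mathcal{S}\mathcal{S}^{\top}\vert\theta_k]=I_\xi$; the factors then collapse as $I_\xi^{-1}I_\xi I_\xi^{-1}=I_\xi^{-1}$. To establish this identity I would use the pointwise decomposition $-\frac{\partial^2\log f(\xi_k\vert\theta_k)}{\partial x_k\partial x_k^{\top}}=-\frac{1}{f(\xi_k\vert\theta_k)}\frac{\partial^2 f(\xi_k\vert\theta_k)}{\partial x_k\partial x_k^{\top}}+\frac{\partial\log f(\xi_k\vert\theta_k)}{\partial x_k}\frac{\partial\log f(\xi_k\vert\theta_k)}{\partial x_k^{\top}}$, take $\mathbb{E}[\,\cdot\,\vert\theta_k]=\int_{\mathcal{O}_k}(\cdot)f(\xi_k\vert\theta_k)\,d\lambda(\xi_k)$, and observe that the second term integrates to $\mathbb{E}[\mathcal{S}\mathcal{S}^{\top}\vert\theta_k]$ while the first becomes $-\int_{\mathcal{O}_k}\frac{\partial^2 f(\xi_k\vert\theta_k)}{\partial x_k\partial x_k^{\top}}\,d\lambda(\xi_k)$. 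It then remains to show this last integral is $\mathbf{0}$, which is the analogue for $f(\xi_k\vert\theta_k)$ of the total-derivative cancellation already carried out in the proof of Proposition \ref{prop:theo2}.

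The hard part will be precisely this last step. I would interchange $\frac{\partial^2}{\partial x_k\partial x_k^{\top}}$ with the integration over $x_k\in\mathcal{X}_k$ — justified by the uniform bound (\ref{eq:ass3b}) via dominated convergence — reducing the integral to boundary terms over $\partial\mathcal{X}_k$ involving $f(\xi_k\vert\theta_k)$ and $\frac{\partial f(\xi_k\vert\theta_k)}{\partial x_k}$. Although the Neumann conditions (\ref{eq:ass3B}) are stated for the marginal $f(x_k\vert\theta_k)$ rather than the joint $f(\xi_k\vert\theta_k)$, the factorization $f(\xi_k\vert\theta_k)=f(\underline{y}_k\vert x_k,\theta_k)f(x_k\vert\theta_k)$ transfers them: at $x_k\in\partial\mathcal{X}_k$ the value $f(x_k\vert\theta_k)=0$ annihilates $f(\xi_k\vert\theta_k)$ and the first product term of its derivative, while $\frac{\partial f(x_k\vert\theta_k)}{\partial x_k}=0$ annihilates the second, the conditional factor and its derivative being bounded by (\ref{eq:ass3b}). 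With the boundary terms gone the integral vanishes, the information identity holds, and the covariance formula follows. The main obstacle is thus making this boundary argument rigorous — controlling the interchange of derivative and integral and verifying that the joint density inherits the marginal's Neumann behaviour — rather than the essentially algebraic reductions in the first two steps.
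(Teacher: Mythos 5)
Your proof is correct in substance and, for the unbiasedness claim and the overall reduction of the covariance claim, it follows the paper exactly: the paper likewise writes $\widehat{x}_k^0-x_k^0=I_{\xi}^{-1}(x_k^0,\underline{y}_k\vert\theta_k)\,\mathcal{S}(x_k^0,\underline{y}_k\vert\theta_k)$ from (\ref{eq:taylor}), obtains the zero mean from (\ref{eq:mainidentityB}), Proposition \ref{prop:propass2} and the tower property, and collapses the covariance to $I_{\xi}^{-1}\,\mathbb{E}[\mathcal{S}\mathcal{S}^{\top}\vert\theta_k]\,I_{\xi}^{-1}$. Where you genuinely diverge is in how the information identity $\mathbb{E}[\mathcal{S}\mathcal{S}^{\top}\vert\theta_k]=I_{\xi}(x_k^0,\underline{y}_k\vert\theta_k)$ is established. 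The paper has this ready-made as Proposition \ref{prop:fishermatrix}, proved in Appendix D by differentiating the score identity (\ref{eq:mainidentityB}) with respect to $x_k^0$: there the only derivative--integral interchange is over $\underline{y}_k$, so no boundary terms in $x_k$ ever arise, and the boundary conditions (\ref{eq:ass3B}) enter solely through the prior marginal $f(x_k^0\vert\theta_k)$ via Proposition \ref{prop:propass2}. You instead run the classical argument directly on the joint density: expand $-\partial^2\log f(\xi_k\vert\theta_k)/\partial x_k\partial x_k^{\top}$ pointwise, integrate, and kill $\int \partial^2 f(\xi_k\vert\theta_k)/\partial x_k\partial x_k^{\top}\,d\lambda(\xi_k)$ via boundary terms over $\partial\mathcal{X}_k$. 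This is more elementary and self-contained, but it is exactly why your ``hard part'' exists at all: you must transfer the Neumann conditions from the marginal to the joint density and justify the interchange over $\mathcal{X}_k$, which requires boundedness of $f(\underline{y}_k\vert x_k,\theta_k)$ and its $x_k$-derivative. Be aware that (\ref{eq:ass3b}) does not actually supply this: it bounds the transition density $f(x_k\vert x_{k-1},\theta_k)$ uniformly in $x_{k-1}$, not the measurement likelihood $f(\underline{y}_k\vert x_k,\theta_k)$, so your appeal to it is a misattribution and you are implicitly adding a regularity hypothesis (harmless for Gaussian measurements, but it should be stated). Avoiding that extra hypothesis and the boundary bookkeeping is precisely what the paper's conditional-on-$x_k^0$ route through Proposition \ref{prop:fishermatrix} buys.
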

To prove the theorem, the following result is required.

\begin{proposition}\label{prop:fishermatrix}
Under (\ref{eq:ass3B}), it holds for any given $\theta_k$,
\begin{align}\label{eq:fishermatrix}
\mathbb{E}\big[\mathcal{S}(x_k^0,\underline{y}_k\vert \theta_k)\mathcal{S}^{\top}(x_k^0,\underline{y}_k\vert \theta_k)\big\vert \theta_k\big]=I_{\xi}(x_k^0,\underline{y}_k\vert \theta_k).
\end{align}
\end{proposition}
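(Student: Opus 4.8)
The plan is to establish the second Bartlett (information-matrix) identity for the incomplete-data likelihood $f(\xi_k\vert\theta_k)=f(x_k,\underline{y}_k\vert\theta_k)$, differentiated in the state variable $x_k$, with the boundary conditions (\ref{eq:ass3B}) playing exactly the role that is vacuous in the parameter case (cf. the footnote to Proposition \ref{prop:propass2}). First I would record the pointwise chain-rule identity, obtained just as in the proof of Proposition \ref{prop:theo2} but applied to $f(\xi_k\vert\theta_k)$ directly,
\begin{align*}
-\frac{\partial^2 \log f(\xi_k\vert\theta_k)}{\partial x_k\partial x_k^{\top}} = \mathcal{S}(\xi_k\vert\theta_k)\mathcal{S}^{\top}(\xi_k\vert\theta_k) - \frac{1}{f(\xi_k\vert\theta_k)}\frac{\partial^2 f(\xi_k\vert\theta_k)}{\partial x_k\partial x_k^{\top}},
\end{align*}
where $\mathcal{S}(\xi_k\vert\theta_k)=\partial\log f(\xi_k\vert\theta_k)/\partial x_k$. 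Taking $\mathbb{E}[\bullet\vert\theta_k]$ of both sides and recalling $J_{\xi}(\xi_k\vert\theta_k)=-\partial^2\log f(\xi_k\vert\theta_k)/(\partial x_k\partial x_k^{\top})$ together with $I_{\xi}(x_k^0,\underline{y}_k\vert\theta_k)=\mathbb{E}[J_{\xi}(x_k^0,\underline{y}_k\vert\theta_k)\vert\theta_k]$, the left-hand side becomes $I_{\xi}(x_k^0,\underline{y}_k\vert\theta_k)$ and the first term on the right becomes the desired $\mathbb{E}[\mathcal{S}\mathcal{S}^{\top}\vert\theta_k]$. Hence it suffices to show that the expectation of the last term vanishes.

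Since $\mathbb{E}[\bullet\vert\theta_k]$ integrates against $f(\xi_k\vert\theta_k)\,d\lambda(\xi_k)$, the factor $1/f(\xi_k\vert\theta_k)$ cancels and the problem reduces to proving $\int \partial^2 f(\xi_k\vert\theta_k)/(\partial x_k\partial x_k^{\top})\,d\lambda(\xi_k)=\mathbf{0}$. Using (A1) and (A2b) to justify Fubini's theorem and the interchange of differentiation and integration, I would integrate in $x_k$ first, so that, entrywise, the fundamental theorem of calculus collapses the $x_k$-integral of the second partials into a boundary evaluation of the first partials $\partial f(\xi_k\vert\theta_k)/\partial x_k$ on $\partial\mathcal{X}_k$. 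To see that this boundary term is zero I would pass from the marginal boundary conditions (\ref{eq:ass3B}) to the joint density via Bayes' formula $f(x_k,\underline{y}_k\vert\theta_k)=f(\underline{y}_k\vert x_k,\theta_k)\,f(x_k\vert\theta_k)$; differentiating gives
\begin{align*}
\frac{\partial f(x_k,\underline{y}_k\vert\theta_k)}{\partial x_k}=\frac{\partial f(\underline{y}_k\vert x_k,\theta_k)}{\partial x_k}f(x_k\vert\theta_k)+f(\underline{y}_k\vert x_k,\theta_k)\frac{\partial f(x_k\vert\theta_k)}{\partial x_k},
\end{align*}
and on $\partial\mathcal{X}_k$ both $f(x_k\vert\theta_k)=0$ and $\partial f(x_k\vert\theta_k)/\partial x_k=0$ by (\ref{eq:ass3B}), while $f(\underline{y}_k\vert x_k,\theta_k)$ and its $x_k$-derivative remain bounded by (A2b); hence the gradient of the joint density vanishes on $\partial\mathcal{X}_k$. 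The remaining integral over $\underline{y}_k$ of a zero integrand is then $\mathbf{0}$, which yields (\ref{eq:fishermatrix}). This mirrors the marginal computation behind Proposition \ref{prop:propass2} carried out in Appendix C.

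The main obstacle is not the algebra but the analytic bookkeeping at the boundary: one must legitimately interchange differentiation with the $\lambda$-integral and apply Fubini across the $(x_k,\underline{y}_k)$ split (for which (A1) supplies the dominating integrable bound), and, crucially, upgrade the boundary conditions (\ref{eq:ass3B}), stated only for the marginal $f(x_k\vert\theta_k)$, to the gradient of the \emph{joint} density $f(x_k,\underline{y}_k\vert\theta_k)$ that actually enters the expectation. The boundedness in (A2b) is what makes this upgrade clean; without it the boundary term need not vanish and the identity could fail, exactly as in the parameter case where no such condition is needed because differentiation and integration there act on different variables.
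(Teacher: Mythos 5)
Your route is genuinely different from the paper's: you prove the second Bartlett identity directly for the incomplete-data density $f(\xi_k\vert\theta_k)$, reducing (\ref{eq:fishermatrix}) to showing $\int \frac{\partial^2 f(\xi_k\vert\theta_k)}{\partial x_k\partial x_k^{\top}}\,d\lambda(\xi_k)=\mathbf{0}$, whereas the paper differentiates the conditional score identity (\ref{eq:mainidentityB}) in $x_k^0$, takes $\mathbb{E}[\bullet\vert\theta_k]$, and cancels the resulting marginal-score terms using Proposition \ref{prop:propass2}. That detour is not cosmetic: it confines every boundary term to the \emph{marginal} density $f(x_k^0\vert\theta_k)$, which is precisely what (\ref{eq:ass3B}) controls. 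Your direct route instead requires the gradient of the \emph{joint} density $f(x_k,\underline{y}_k\vert\theta_k)$ to vanish on $\partial\mathcal{X}_k$, and your justification of that step is where the proof has a genuine gap.

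Concretely, you factor $f(x_k,\underline{y}_k\vert\theta_k)=f(\underline{y}_k\vert x_k,\theta_k)f(x_k\vert\theta_k)$ and assert that $f(\underline{y}_k\vert x_k,\theta_k)$ and its $x_k$-derivative ``remain bounded by (A2b)''. Assumption (\ref{eq:ass3b}) bounds the state transition density $f(x_k\vert x_{k-1},\theta_k)$ and its $x_k$-derivative uniformly in $x_{k-1}$; it says nothing about the conditional observation density. Indeed $f(\underline{y}_k\vert x_k,\theta_k)=f(x_k,\underline{y}_k\vert\theta_k)/f(x_k\vert\theta_k)$ is a ratio whose numerator and denominator both tend to zero on $\partial\mathcal{X}_k$ under your own hypotheses, so neither its boundedness nor that of its derivative follows from anything assumed; if either blows up, the products in your product-rule expansion need not vanish and the boundary term survives. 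The repair is to avoid the factorization through $f(x_k\vert\theta_k)$ altogether: by (\ref{eq:ass2}), $f(x_k,\underline{y}_k\vert\theta_k)=\int f(y_k\vert x_k,\theta_k)f(x_k\vert x_{k-1},\theta_k)f(x_{k-1},\underline{y}_{k-1}\vert\theta_k)\,d\lambda(x_{k-1})$, so $\frac{\partial f(x_k,\underline{y}_k\vert\theta_k)}{\partial x_k}$ is an integral of terms each containing either $f(x_k\vert x_{k-1},\theta_k)$ or $\frac{\partial f(x_k\vert x_{k-1},\theta_k)}{\partial x_k}$, both of which vanish on $\partial\mathcal{X}_k$ by (\ref{eq:ass3}), while (\ref{eq:ass3b}) supplies the domination needed to pass the limit inside the integral (granting, as the paper implicitly does, that the measurement density $f(y_k\vert x_k,\theta_k)$ and its $x_k$-derivative are bounded). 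This is exactly the argument of the paper's unnumbered lemma deducing (\ref{eq:ass3B}) from (A2), applied to the joint rather than the marginal density. With that substitution your proof closes; as written, the crucial boundary step rests on a claim your cited hypotheses do not deliver.
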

\begin{proof} See Appendix D for details of the proof.
\end{proof}
The identity (\ref{eq:fishermatrix}) justifies positive definiteness of the expected information matrix $I_{\xi}(x_k^0,\underline{y}_k\vert \theta_k)$ deduced from the above arguments on optimality of the estimator $\widehat{x}_k^0$. 

\medskip

\noindent \textit{Proof}[of Theorem \ref{theo:unbiased}]
The unbiased property of $\widehat{x}_k^0$ follows immediately from (\ref{eq:mainidentityB}), (\ref{eq:taylor}) and (\ref{eq:ass3B}). That is,
\begin{align*}
\mathbb{E}[\widehat{x}_k^0 -x_k^0\vert x_k^0,\theta_k]=&I_{\xi}^{-1}(x_k^0,\underline{y}_k\vert \theta_k)\mathbb{E}\big[\mathcal{S}(x_k^0,\underline{y}_k\vert \theta_k) \big\vert x_k^0,\theta_k\big]\\
=&I_{\xi}^{-1}(x_k^0,\underline{y}_k\vert \theta_k)\frac{\partial \log f(x_k^0\vert\theta_k)}{\partial x_k},
\end{align*}
which by Proposition \ref{prop:propass2} and tower property of expectation lead to the claim. Using (\ref{eq:taylor}) the covariance matrix is
\begin{align*}
&\mathbb{E}\big[(\widehat{x}_k^0-x_k^0)(\widehat{x}_k^0-x_k^0)^{\top}\big\vert \theta_k\big]=
 I_{\xi}^{-1}(x_k^0,\underline{y}_k\vert \theta_k)\\
 &\hspace{1cm}\times\mathbb{E}\big[\mathcal{S}(x_k^0,\underline{y}_k\vert\theta_k)\mathcal{S}^{\top}(x_k^0,\underline{y}_k\vert\theta_k)\vert\theta_k\big] I_{\xi}^{-1}(x_k^0,\underline{y}_k\vert \theta_k),
\end{align*}
which establishes the claim on account of (\ref{eq:fishermatrix}). $\square$

\begin{corollary}
Applying (\ref{eq:mainidentity}) to (\ref{eq:fisher}), (\ref{eq:fishermatrix}) results in
\begin{eqnarray}\label{eq:fishermatrix2}
\mathbb{E}\Big[\frac{\partial \log f(z_k \vert \theta_k)}{\partial x_k} \frac{\partial \log f(z_k \vert \theta_k)}{\partial x_k^{\top}}    \Big\vert \theta_k \Big]=\mathbb{E}\big[J_z(\xi_k\vert\theta_k)\big\vert\theta_k\big].
\end{eqnarray}
\end{corollary}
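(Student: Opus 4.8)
The plan is to take the unconditional expectation $\mathbb{E}[\,\cdot\,\vert\theta_k]$ of the explicit formula (\ref{eq:fisher}) for $J_{\xi}(\xi_k\vert\theta_k)$ and to read off the claim from a cancellation. First I would note that, by the very definition of $J_z(\xi_k\vert\theta_k)$, the first term on the right of (\ref{eq:fisher}) is exactly $J_z(\xi_k\vert\theta_k)$, so that (\ref{eq:fisher}) may be rewritten compactly as
\begin{align*}
J_{\xi}(\xi_k\vert \theta_k)=& J_z(\xi_k\vert\theta_k) -\mathbb{E}\Big[\frac{\partial \log f(z_k \vert \theta_k)}{\partial x_k} \frac{\partial \log f(z_k \vert \theta_k)}{\partial x_k^{\top}} \Big\vert \xi_k,\theta_k \Big]\\
&+\mathbb{E}\Big[\frac{\partial \log f(z_k \vert \theta_k)}{\partial x_k} \Big\vert \xi_k,\theta_k \Big]\mathbb{E}\Big[\frac{\partial \log f(z_k \vert \theta_k)}{\partial x_k^{\top}} \Big\vert \xi_k,\theta_k \Big].
\end{align*}

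Next I would apply $\mathbb{E}[\,\cdot\,\vert\theta_k]$ to both sides, using the tower property $\mathbb{E}\big[\mathbb{E}[\,\cdot\,\vert\xi_k,\theta_k]\big\vert\theta_k\big]=\mathbb{E}[\,\cdot\,\vert\theta_k]$, which is licensed by the integrability hypothesis (\ref{eq:ass}). On the left this gives $\mathbb{E}\big[J_{\xi}(\xi_k\vert\theta_k)\big\vert\theta_k\big]=I_{\xi}(x_k^0,\underline{y}_k\vert\theta_k)$ by the definition of the expected information matrix. The first term on the right becomes $\mathbb{E}\big[J_z(\xi_k\vert\theta_k)\big\vert\theta_k\big]$, the target of the right-hand side of (\ref{eq:fishermatrix2}); the second term becomes the full conditional second-moment averaged over $\theta_k$, which is precisely the quantity appearing on the left of (\ref{eq:fishermatrix2}).

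The crucial step is the third (outer-product) term. By the score identity (\ref{eq:mainidentity}) each factor $\mathbb{E}\big[\frac{\partial \log f(z_k\vert\theta_k)}{\partial x_k}\big\vert\xi_k,\theta_k\big]$ equals the incomplete-data score $\mathcal{S}(\xi_k\vert\theta_k)=\frac{\partial \log f(\xi_k\vert\theta_k)}{\partial x_k}$, so this term reduces to $\mathbb{E}\big[\mathcal{S}(x_k^0,\underline{y}_k\vert\theta_k)\mathcal{S}^{\top}(x_k^0,\underline{y}_k\vert\theta_k)\big\vert\theta_k\big]$, which by Proposition \ref{prop:fishermatrix} (i.e.\ (\ref{eq:fishermatrix})) equals $I_{\xi}(x_k^0,\underline{y}_k\vert\theta_k)$ once more. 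Collecting everything, the averaged identity reads
\[
I_{\xi}= \mathbb{E}\big[J_z(\xi_k\vert\theta_k)\big\vert\theta_k\big] - \mathbb{E}\Big[\frac{\partial \log f(z_k \vert \theta_k)}{\partial x_k} \frac{\partial \log f(z_k \vert \theta_k)}{\partial x_k^{\top}} \Big\vert \theta_k \Big] + I_{\xi},
\]
and cancelling the two copies of $I_{\xi}=I_{\xi}(x_k^0,\underline{y}_k\vert\theta_k)$ from both sides yields (\ref{eq:fishermatrix2}) immediately.

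I expect no genuine analytic obstacle here: the entire content is bookkeeping, with the one decisive observation being that averaging (\ref{eq:fisher}) turns its outer-product term into $I_{\xi}$ (via (\ref{eq:mainidentity}) and (\ref{eq:fishermatrix})), which then exactly annihilates the $I_{\xi}$ arising on the left from $\mathbb{E}[J_{\xi}\vert\theta_k]$. The points requiring care are procedural rather than conceptual: I would make explicit that Proposition \ref{prop:fishermatrix} is stated at the true value $x_k^0$, so the computation (and hence the corollary) must be read with $\xi_k=(x_k^0,\underline{y}_k)$ for the cancellation to be valid; I would verify that the tower-property interchange of expectations is permitted by (\ref{eq:ass}); and I would confirm that $I_{\xi}$ is finite and positive definite (as established from the optimality argument following (\ref{eq:fishermatrix})) so that it may legitimately be subtracted from both sides.
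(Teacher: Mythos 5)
Your proposal is correct and is essentially the paper's own (implicit) derivation: rewrite the first term of (\ref{eq:fisher}) as $J_z(\xi_k\vert\theta_k)$, use (\ref{eq:mainidentity}) to identify the outer-product term with $\mathcal{S}(\xi_k\vert\theta_k)\mathcal{S}^{\top}(\xi_k\vert\theta_k)$, average over $\theta_k$ by the tower property, and cancel the two copies of $I_{\xi}(x_k^0,\underline{y}_k\vert\theta_k)$ produced by the definition of $I_{\xi}$ and by (\ref{eq:fishermatrix}). Your procedural caveats --- reading $\xi_k=(x_k^0,\underline{y}_k)$ so that Proposition \ref{prop:fishermatrix} applies, invoking (\ref{eq:ass}) for the interchange of expectations, and noting finiteness of $I_{\xi}$ so the matrix cancellation is legitimate --- are sound and consistent with the paper's standing assumptions.
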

\begin{example}
See Section \ref{sec:application} for verification of the two matrix identities (\ref{eq:fishermatrix}) and (\ref{eq:fishermatrix2}) for linear state-space (1).
\end{example}

The theorem below gives a lower bound for the covariance matrix of estimation error. It extends the corresponding Cram\'er-Rao lower bound for covariance matrix of parameter estimates to that of for an estimator of state-vector.
\begin{theorem}\label{theo:cramer-rao}
Let $\widehat{x}_{k\vert k}^0$ be an estimator of $x_k^0$ under observations $\underline{y}_k$ with covariance matrix $P_{k\vert k}$. Define $\phi(x_k^0)=\mathbb{E}[\widehat{x}_{k\vert k}^0\vert x_k^0,\theta_k]$ and $b(x_k^0):=\phi(x_k^0)-x_k^0$. Under (\ref{eq:ass3B}),
\begin{equation}
\begin{split}
&\Big(\mathbb{E}\big[\frac{\partial \phi^{\top}(x_k^0)}{\partial x_k}\big\vert \theta_k\big] \Big)^{\top}I_{\xi}^{-1}(x_k^0,\underline{y}_k\vert \theta_k) \Big(\mathbb{E}\big[\frac{\partial \phi^{\top}(x_k^0)}{\partial x_k}\big\vert \theta_k\big] \Big)\\
&\hspace{2cm} + \mathbb{E}\big[b(x_k^0)b^{\top}(x_k^0)\vert\theta_k\big]\leq P_{k\vert k},
\end{split}
\end{equation}
with equality obtained for $\widehat{x}_{k\vert k}^0-\phi(x_k^0)=\mathcal{S}(x_k^0,\underline{y}_k\vert \theta_k)$ up to a multiplicative constant matrix (or Affine transform).
\end{theorem}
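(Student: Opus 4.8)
The plan is to read the inequality as a matrix Cram\'er--Rao bound for the \emph{centered} estimator $\varepsilon:=\widehat{x}_{k\vert k}^0-\phi(x_k^0)$, with the bias contribution $\mathbb{E}[b(x_k^0)b^\top(x_k^0)\vert\theta_k]$ peeled off at the very start. First I would expand the mean-square-error matrix $P_{k\vert k}=\mathbb{E}[(\widehat{x}_{k\vert k}^0-x_k^0)(\widehat{x}_{k\vert k}^0-x_k^0)^\top\vert\theta_k]$ by writing $\widehat{x}_{k\vert k}^0-x_k^0=\varepsilon+b(x_k^0)$. Since $b(x_k^0)$ is a deterministic function of $x_k^0$ while $\mathbb{E}[\varepsilon\vert x_k^0,\theta_k]=0$ by the definition of $\phi$, the two cross terms vanish after conditioning on $x_k^0$ and using the tower property, leaving $P_{k\vert k}=\mathbb{E}[\varepsilon\varepsilon^\top\vert\theta_k]+\mathbb{E}[b(x_k^0)b^\top(x_k^0)\vert\theta_k]$. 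The theorem then reduces to the single Loewner inequality $\mathbb{E}[\varepsilon\varepsilon^\top\vert\theta_k]\geq D^\top I_{\xi}^{-1}(x_k^0,\underline{y}_k\vert\theta_k)D$, where $D:=\mathbb{E}[\partial\phi^\top(x_k^0)/\partial x_k\vert\theta_k]$.

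The core step is to evaluate the cross-covariance between $\varepsilon$ and the score $\mathcal{S}(x_k^0,\underline{y}_k\vert\theta_k)$. I would again condition on $x_k^0$ and split the joint score as $\mathcal{S}=\partial\log f(\underline{y}_k\vert x_k^0,\theta_k)/\partial x_k+\partial\log f(x_k^0\vert\theta_k)/\partial x_k$. The prior part is $x_k^0$-measurable, hence annihilated by $\mathbb{E}[\varepsilon\vert x_k^0,\theta_k]=0$; the conditional part has conditional mean zero, and differentiating the identity $\phi(x_k^0)=\int\widehat{x}_{k\vert k}^0\,f(\underline{y}_k\vert x_k^0,\theta_k)\,d\lambda(\underline{y}_k)$ under the integral sign gives $\mathbb{E}[\varepsilon\,\partial\log f(\underline{y}_k\vert x_k^0,\theta_k)/\partial x_k^\top\vert x_k^0,\theta_k]=\partial\phi(x_k^0)/\partial x_k^\top$. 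Taking the outer expectation over $x_k^0$ yields $\mathbb{E}[\varepsilon\mathcal{S}^\top\vert\theta_k]=\mathbb{E}[\partial\phi(x_k^0)/\partial x_k^\top\vert\theta_k]=D^\top$.

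With this in hand I would assemble the joint second-moment matrix of the stacked vector $(\varepsilon^\top,\mathcal{S}^\top)^\top$. Both blocks are centered: $\mathbb{E}[\varepsilon\vert\theta_k]=0$ by the tower property, while $\mathbb{E}[\mathcal{S}\vert\theta_k]=0$ follows from (\ref{eq:mainidentityB}) together with Proposition \ref{prop:propass2}, which is exactly where the Neumann boundary condition (\ref{eq:ass3B}) enters. The $(2,2)$ block equals $I_{\xi}(x_k^0,\underline{y}_k\vert\theta_k)$ by Proposition \ref{prop:fishermatrix} and is positive definite, hence invertible; the $(1,2)$ block is $D^\top$ by the previous step. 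The whole matrix is a genuine covariance matrix and therefore positive semidefinite, so forming its Schur complement with respect to $I_{\xi}$ delivers $\mathbb{E}[\varepsilon\varepsilon^\top\vert\theta_k]-D^\top I_{\xi}^{-1}D\geq 0$. Substituting into the decomposition of the first step proves the stated bound.

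For equality, the Schur complement vanishes precisely when the regression residual $\varepsilon-D^\top I_{\xi}^{-1}\mathcal{S}$ has zero second moment, i.e. when $\widehat{x}_{k\vert k}^0-\phi(x_k^0)=D^\top I_{\xi}^{-1}\mathcal{S}(x_k^0,\underline{y}_k\vert\theta_k)$ almost surely, which is the claimed proportionality to the score up to a constant matrix (affine transform). I expect the main obstacle to be the cross-covariance computation: one must cleanly separate the two sources of randomness---the prior draw of $x_k^0$ from $f(x_k\vert\theta_k)$ and the conditional law of $\underline{y}_k$ given $x_k^0$---and justify the interchange of differentiation and integration defining $\partial\phi/\partial x_k^\top$, using the integrability and boundedness afforded by (\ref{eq:ass}) and (\ref{eq:ass3b}). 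Getting this conditioning right is what turns the cross term into $D^\top$ rather than collapsing it to zero, as would happen for the uncentered estimator $\widehat{x}_{k\vert k}^0$ itself.
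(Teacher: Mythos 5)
Your proposal is correct, and its overall skeleton matches the paper's: peel off the bias term via $P_{k\vert k}=\mathbb{E}[\varepsilon\varepsilon^{\top}\vert\theta_k]+\mathbb{E}[b(x_k^0)b^{\top}(x_k^0)\vert\theta_k]$ with $\varepsilon=\widehat{x}_{k\vert k}^0-\phi(x_k^0)$, bound $\mathbb{E}[\varepsilon\varepsilon^{\top}\vert\theta_k]$ against the score, identify the cross term with the derivative of $\phi$, and read off the equality case. Where you genuinely depart from the paper is in the key lemma, the cross-covariance $\mathbb{E}[\varepsilon\,\mathcal{S}^{\top}\vert\theta_k]$. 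The paper computes $\mathbb{E}[\widehat{x}_{k\vert k}^0\mathcal{S}^{\top}\vert x_k^0,\theta_k]$ by manipulating the joint density through Bayes' formula, obtaining $\frac{\partial \log f(x_k^0\vert\theta_k)}{\partial x_k}\phi^{\top}(x_k^0)+\frac{\partial\phi^{\top}(x_k^0)}{\partial x_k}$, and then must integrate by parts over $x_k^0$ -- invoking the Neumann condition (\ref{eq:ass3B}) a second time -- to dispose of the prior-score term; along the way it records the identity $\mathbb{E}[\widehat{x}_{k\vert k}^0\mathcal{S}^{\top}\vert\theta_k]=0$. You instead split $\mathcal{S}$ into the conditional score of $\underline{y}_k$ given $x_k^0$ plus the prior score of $x_k^0$: the prior part dies against $\mathbb{E}[\varepsilon\vert x_k^0,\theta_k]=0$, and the conditional part gives $\partial\phi(x_k^0)/\partial x_k^{\top}$ by differentiation under the integral. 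This buys two things: it localizes where (\ref{eq:ass3B}) is actually indispensable (only in Proposition \ref{prop:fishermatrix}, identifying $\mathbb{E}[\mathcal{S}\mathcal{S}^{\top}\vert\theta_k]$ with $I_{\xi}$; your appeal to Proposition \ref{prop:propass2} for centering $\mathcal{S}$ is not even needed, since the joint second-moment matrix is positive semidefinite whether or not the blocks are centered), and it keeps the transpose bookkeeping straight -- you correctly get $\mathbb{E}[\varepsilon\,\mathcal{S}^{\top}\vert\theta_k]=D^{\top}$ with $D=\mathbb{E}[\partial\phi^{\top}(x_k^0)/\partial x_k\vert\theta_k]$, whereas the paper's displayed intermediate identity states the cross-covariance as $D$ itself, a harmless but real transpose slip. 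Conversely, the paper's route yields the independently interesting orthogonality $\mathbb{E}[\widehat{x}_{k\vert k}^0\mathcal{S}^{\top}\vert\theta_k]=0$ for an arbitrary estimator, which your argument bypasses. Your Schur-complement step and the paper's citation of Tripathi's matrix Cauchy--Schwarz inequality are the same tool in different clothing, and your equality condition $\varepsilon=D^{\top}I_{\xi}^{-1}\mathcal{S}$ a.s.\ agrees with the paper's ``up to a multiplicative constant matrix.''
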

The theorem implies that for a MVU estimator $\widehat{x}_{k\vert k}^0:=\mathbb{E}[x_k^0\vert \underline{y}_k,\theta_k]$ for which case $\phi(x_k^0)=x_k^0$ and $\frac{\partial \phi^{\top}(x_k^0)}{\partial x_k}=I$, 
\begin{align}\label{eq:mvu}
P_{k\vert k} \geq P_k=I_{\xi}^{-1}(x_k^0,\underline{y}_k\vert \theta_k)>0.
\end{align}
Thus, the ML estimator $\widehat{x}_k^0$ has smaller standard error (covariance matrix) than that of MVU estimator $\widehat{x}_{k\vert k}^0$. Any MVU estimator $\widehat{x}_{k\vert k}^0$ with covariance matrix equal to $I_{\xi}^{-1}(x_k^0,\underline{y}_k\vert \theta_k)$ is called a \textit{fully efficient state estimator}.
\begin{proof}
Since the score function $\mathcal{S}(x_k^0,\underline{y}_k\vert\theta_k)$ gives complete description of the joint distribution of $x_k^0$ and $\underline{y}_k$ compared to $\widehat{x}_{k\vert k}^0-\phi(x_k^0)$, the proof follows from applying orthogonal projection of $\widehat{x}_{k\vert k}^0 -\phi(x_k^0)$ onto a space spanned by $\mathcal{S}(x_k^0,\underline{y}_k\vert\theta_k)$. On account of (\ref{eq:fishermatrix}), the matrix extension of the Cauchy-Schwarz inequality, see \cite{Tripathi}, yields
\begin{equation*}
\begin{split}
&\Big(\mathbb{E}\big[\big(\widehat{x}_{k\vert k}^0 -\phi(x_k^0)\big)\mathcal{S}^{\top}(x_k^0,\underline{y}_k\vert\theta_k)\vert\theta_k\big]\Big)I_{\xi}^{-1}(x_k^0,\underline{y}_k\vert \theta_k)\\
&\hspace{0.5cm}\times \Big(\mathbb{E}\big[\big(\widehat{x}_{k\vert k}^0 -\phi(x_k^0)\big)\mathcal{S}^{\top}(x_k^0,\underline{y}_k\vert\theta_k)\vert\theta_k\big]\Big)^{\top}\\
&\hspace{2cm} \leq \mathbb{E}\big[\big(\widehat{x}_{k\vert k}^0 -\phi(x_k^0)\big)\big(\widehat{x}_{k\vert k}^0 -\phi(x_k^0)\big)^{\top}\big\vert \theta_k\big].
\end{split}
\end{equation*}
It is straightforward to check that the equality is obtained for $\widehat{x}_{k\vert k}^0-\phi(x_k^0)=\mathcal{S}(x_k^0,\underline{y}_k\vert \theta_k)$ up to a multiplicative constant matrix (or Affine transform). Furthermore, 
\begin{align*}
&\mathbb{E}\big[\big(\widehat{x}_{k\vert k}^0 -\phi(x_k^0)\big)\big(\widehat{x}_{k\vert k}^0 -\phi(x_k^0)\big)^{\top}\big\vert \theta_k\big]\\
&\hspace{2cm}=P_{k\vert k} - \mathbb{E}\big[b(x_k^0)b^{\top}(x_k^0)\vert\theta_k\big].
\end{align*}
The proof is complete once we show that
\begin{align*}
\mathbb{E}\big[\big(\widehat{x}_{k\vert k}^0 -\phi(x_k^0)\big)\mathcal{S}^{\top}(x_k^0,\underline{y}_k\vert\theta_k)\vert\theta_k\big]=\mathbb{E}\Big[\frac{\partial \phi^{\top}(x_k^0)}{\partial x_k}\Big\vert \theta_k\Big],
\end{align*}
To establish this equality, we have by Bayes formula,
\begin{align*}
& \mathbb{E}\big[\widehat{x}_{k\vert k}^0 \mathcal{S}^{\top}(x_k^0,\underline{y}_k\vert\theta_k)\big\vert x_k^0, \theta_k\big]\\
&\hspace{0cm}=\int \widehat{x}_{k\vert k}^0 \frac{\partial \log f(x_k^0,\underline{y}_k\vert\theta_k)}{\partial x_k^{\top}} f(\underline{y}_k\vert x_k^0,\theta_k)d\lambda (\underline{y}_k)\\
&\hspace{0cm}=\frac{1}{f(x_k^0\vert\theta_k)} \int \widehat{x}_{k\vert k}^0 \frac{\partial f(x_k^0,\underline{y}_k\vert\theta_k)}{\partial x_k^{\top}}d\lambda (\underline{y}_k)\\
&\hspace{0cm}=\frac{1}{f(x_k^0\vert\theta_k)} \frac{\partial }{\partial x_k} \int f(x_k^0,\underline{y}_k\vert\theta_k) (\widehat{x}_{k\vert k}^0)^{\top} d\lambda (\underline{y}_k)\\
&\hspace{0cm}=\frac{1}{f(x_k^0\vert\theta_k)} \frac{\partial }{\partial x_k} \left( f(x_k^0\vert\theta_k) \int   f(\underline{y}_k\vert x_k^0,\theta_k)(\widehat{x}_{k\vert k}^0)^{\top}d\lambda (\underline{y}_k) \right)\\
&\hspace{0cm}=\frac{1}{f(x_k^0\vert\theta_k)} \frac{\partial }{\partial x_k} \left( f(x_k^0\vert\theta_k) \mathbb{E}\big[(\widehat{x}_{k\vert k}^0)^{\top} \big\vert x_k^0,\theta_k\big] \right)\\
&= \frac{\partial \log f(x_k^0\vert\theta_k) }{\partial x_k}\mathbb{E}\big[(\widehat{x}_{k\vert k}^0)^{\top} \big\vert x_k^0,\theta_k\big]+  \frac{\partial \mathbb{E}\big[(\widehat{x}_{k\vert k}^0)^{\top} \big\vert x_k^0,\theta_k\big]  }{\partial x_k} \\
&=\frac{\partial \log f(x_k^0\vert\theta_k) }{\partial x_k}\phi^{\top}(x_k^0)+ \frac{\partial \phi^{\top}(x_k^0) }{\partial x_k}.
\end{align*}
Furthermore, by (\ref{eq:ass3B}) and integration by part,
\begin{align*}
&\mathbb{E}\Big[\frac{\partial \log f(x_k^0\vert\theta_k) }{\partial x_k} \phi^{\top}(x_k^0) \Big\vert \theta_k\Big]\\
&=\int  \frac{\partial \log f(x_k^0\vert\theta_k) }{\partial x_k} \phi^{\top}(x_k^0) f(x_k^0\vert\theta_k) d x_k^0\\
&=\int  \frac{\partial f(x_k^0\vert\theta_k) }{\partial x_k} \phi^{\top}(x_k^0) d x_k^0 = -\int f(x_k^0\vert\theta_k)  \frac{\partial \phi^{\top}(x_k^0)}{\partial x_k} d x_k^0\\
&=-\mathbb{E}\Big[\frac{\partial \phi^{\top}(x_k^0)}{\partial x_k}\Big\vert \theta_k\Big],
\end{align*}
which by iterated law of conditional expectation leads to $
\mathbb{E}\big[\widehat{x}_{k\vert k}^0 \mathcal{S}^{\top}(x_k^0,\underline{y}_k\vert\theta_k)\big\vert \theta_k\big]=0$. Following the above, we have 
\begin{align*}
\mathbb{E}\big[\phi(x_k^0)\big)\mathcal{S}^{\top}(x_k^0,\underline{y}_k\vert\theta_k)\vert\theta_k\big]=-\left(\mathbb{E}\Big[\frac{\partial \phi^{\top}(x_k^0)}{\partial x_k}\Big\vert \theta_k\Big]\right)^{\top},
\end{align*}
which in turn completes the proof of the theorem.
\end{proof}

\subsection{Recursive ML estimation}

Using observed information matrices $J_{\xi}(\xi_k\vert \theta_k)$ and $J_z(\xi_k\vert \theta_k)$, recursive equations for $\widehat{x}_k^0$ can be derived. 

\subsubsection{Newton-Raphson iterative estimator}
Since the ML estimator $\widehat{x}_k^0$ is expressed in terms of the unknown true state-vector $x_k^0$ (and expected information matrix $I_{\xi}(x_k^0,\underline{y}_k\vert\theta_k)$), (\ref{eq:taylor}) can not be used directly. Hence, it should be used iteratively by replacing true state $x_k^0$ with $\widehat{x}_k^{(\ell)}$, an estimate of $x_k$ obtained after $\ell-$iteration, while $\widehat{x}_k^0$ replaced by $\widehat{x}_k^{(\ell +1)}$. By replacing $I_{\xi}(\widehat{x}_k^{(\ell)},\underline{y}_k\vert\theta_k)$ with $J_{\xi}(\widehat{x}_k^{(\ell)},\underline{y}_k\vert\theta_k)$, we derive a recursive equation for $\widehat{x}_k^0$ as
\begin{align}\label{eq:algo1}
\widehat{x}_k^{(\ell +1)}= \widehat{x}_k^{(\ell)} + J_{\xi}^{-1}(\widehat{x}_k^{(\ell)},\underline{y}_k\vert \theta_k) \mathcal{S}(\widehat{x}_k^{(\ell)},\underline{y}_k\vert \theta_k).
\end{align}
This iteration is run until a stopping criterion below is met,
\begin{align}\label{eq:stoppingrule}
\Vert \widehat{x}_k^{(\ell +1)} -  \widehat{x}_k^{(\ell)} \Vert^2 <\epsilon,
\end{align}
for $\epsilon>0$. At the convergence, the limiting estimator $\widehat{x}_k^{\infty}$ corresponds to the ML estimate $\widehat{x}_k^0$ as $ \mathcal{S}(\widehat{x}_k^{\infty},\underline{y}_k\vert \theta_k)=0$. 

To run iteration (\ref{eq:algo1}), we choose an initial condition:
\begin{align}\label{eq:initial1b}
\widehat{x}_k^{(0)}=\mathbb{E}[x_k\vert \underline{y}_k,\theta_k].
\end{align}
For a general state-space, this condition can be set up using particle filtering, see Section \ref{sec:MCMC}. The choice of initial condition (\ref{eq:initial1b}) would result in $\widehat{x}_k^0$ being refinement of the MVU estimator $\widehat{x}_{k\vert k}^0$ with smaller estimation error (\ref{eq:mvu}). 

\subsubsection{EM-Gradient iterative estimator}

To speed up the convergence of Newton-Raphson iteration (\ref{eq:algo1}), we may replace the information matrix $J_{\xi}(\widehat{x}_k^{(\ell)},\underline{y}_k\vert \theta_k)$ in (\ref{eq:algo1}) by $J_z(\widehat{x}_k^{(\ell)},\underline{y}_k\vert \theta_k)$. The reason is as follows. If $J_{\xi}(\widehat{x}_k^{(\ell)},\underline{y}_k \vert \theta_k)>0$, then by the resulting loss of information in incomplete data, (\ref{eq:infoloss}) implies that $J_z(\widehat{x}_k^{(\ell)},\underline{y}_k\vert \theta_k)>0$ and is invertible with $J_{\xi}^{-1}(\widehat{x}_k^{(\ell)},\underline{y}_k \vert \theta_k)>J_z^{-1}(\widehat{x}_k^{(\ell)},\underline{y}_k\vert \theta_k)>0$, by Theorem 7.2.1 on p.438 of \cite{Horn}, also Lemma 1.3, p.4 in \cite{Chui}. Replacing $J_{\xi}^{-1}(\widehat{x}_k^{(\ell)},\underline{y}_k\vert \theta_k)$ by $J_z^{-1}(\widehat{x}_k^{\ell},\underline{y}_k\vert \theta_k)$ in (\ref{eq:algo1}) yields
\begin{align}\label{eq:algo2}
\widehat{x}_k^{(\ell +1)}= \widehat{x}_k^{(\ell)} +J_z^{-1}(\widehat{x}_k^{(\ell)}, \underline{y}_k\vert \theta_k) \mathcal{S}(\widehat{x}_k^{(\ell)},\underline{y}_k\vert \theta_k),
\end{align}
a faster convergence of $\{\widehat{x}_k^{(\ell)}\}_{\ell \geq 0}$ than that of (\ref{eq:algo1}). The update $\widehat{x}_k^{(\ell +1)}$ (\ref{eq:algo2}) corresponds to the maximizer of quadratic function $\mathcal{Q}\big(x_k\vert \widehat{x}_k^{(\ell)}\big):=\mathcal{Q}(x_k\vert \widehat{x}_k^{(\ell)},\underline{y}_k,\theta_k)$ defined by 
\begin{align*}
\mathcal{Q}(x_k\vert \widehat{x}_k^{(\ell)})=&\mathcal{Q}(\widehat{x}_k^{(\ell)}\vert \widehat{x}_k^{(\ell)}) + \mathcal{S}(\widehat{x}_k^{(\ell)},\underline{y}_k\vert\theta_k)\big(x_k-\widehat{x}_k^{(\ell)}\big)\\
&-\frac{1}{2}\big(x_k-\widehat{x}_k^{(\ell)}\big)^{\top} J_z(\widehat{x}_k^{(\ell)},\underline{y}_k\vert\theta_k)\big(x_k-\widehat{x}_k^{(\ell)}\big),
\end{align*}
such that the update $\widehat{x}_k^{(\ell +1)}$ (\ref{eq:algo2}) is read as
\begin{align}\label{eq:EMfunc}
\widehat{x}_k^{(\ell+1)}=\argmax_{x_k} \mathcal{Q}(x_k\vert \widehat{x}_k^{(\ell)}).
\end{align}

When the stopping criterion (\ref{eq:stoppingrule}) is reached at the converging estimate $\widehat{x}_k^{\infty}$, $\mathcal{S}(\widehat{x}_k^{\infty},\underline{y}_k\vert \theta_k)=0$. Thus, $\widehat{x}_k^{\infty}$ corresponds to the ML estimator $\widehat{x}_k^0$ (\ref{eq:mle2}). To run the recursive equation (\ref{eq:algo2}), we use the same initial condition as in (\ref{eq:initial1b}). The recursive equation (\ref{eq:algo2}) extends further the EM-Gradient algorithm for ML parameter estimation of \cite{Lange} to a recursive ML estimation $\widehat{x}_k$ of state-vector $x_k$.

The result below shows that (\ref{eq:algo2}) is locally equivalent to EM algorithm of \cite{Dempster} for state vector.
\begin{proposition}[\textbf{EM-algorithm}]
The recursive equation (\ref{eq:algo2}) is locally equivalent to the EM algorithm. Namely,
\begin{eqnarray}\label{eq:EM}
\widehat{x}_k^{(\ell+1)}=\argmax_{x_k} \mathbb{E}\big[\log f(x_k,\underline{y}_k,\underline{x}_{k-1}\vert \theta_k) \big\vert \widehat{x}_k^{(\ell)}, \underline{y}_k, \theta_k\big].
\end{eqnarray}
\end{proposition}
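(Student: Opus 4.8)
The plan is to identify the objective maximised in the EM update (\ref{eq:EM}) with the conditional expectation of the complete-data loglikelihood, and then to recognise the quadratic surrogate $\mathcal{Q}(\cdot\vert\widehat{x}_k^{(\ell)})$ already introduced above (\ref{eq:EMfunc}) as its second-order Taylor polynomial about the current iterate. Since $z_k=(x_k,\underline{y}_k,\underline{x}_{k-1})$, the function being maximised in (\ref{eq:EM}) is
\[
Q(x_k\vert\widehat{x}_k^{(\ell)}):=\mathbb{E}\big[\log f(z_k\vert\theta_k)\,\big\vert\,\widehat{x}_k^{(\ell)},\underline{y}_k,\theta_k\big],
\]
a function of the free variable $x_k$ appearing inside $f(z_k\vert\theta_k)$, while the conditioning value is frozen at $\widehat{x}_k^{(\ell)}$. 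I read ``locally equivalent'' as the statement that the two iterations produce the same update because $\mathcal{Q}(\cdot\vert\widehat{x}_k^{(\ell)})$ agrees with $Q(\cdot\vert\widehat{x}_k^{(\ell)})$ in value, gradient and Hessian at $x_k=\widehat{x}_k^{(\ell)}$.

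First I would compute the gradient and Hessian of $Q(\cdot\vert\widehat{x}_k^{(\ell)})$ at the current iterate. Because the conditioning value is held fixed, differentiation acts only on the $x_k$ inside $f(z_k\vert\theta_k)$; interchanging $\partial/\partial x_k$ with the conditional expectation, legitimate under the integrability hypothesis (\ref{eq:ass}) via dominated convergence exactly as in the proof of Proposition \ref{prop:prop1}, gives
\[
\frac{\partial Q(x_k\vert\widehat{x}_k^{(\ell)})}{\partial x_k}\bigg\vert_{x_k=\widehat{x}_k^{(\ell)}}=\mathbb{E}\Big[\frac{\partial\log f(z_k\vert\theta_k)}{\partial x_k}\,\Big\vert\,\widehat{x}_k^{(\ell)},\underline{y}_k,\theta_k\Big],
\]
which by the key identity (\ref{eq:mainidentity}) equals the score $\mathcal{S}(\widehat{x}_k^{(\ell)},\underline{y}_k\vert\theta_k)$. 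Applying the same interchange once more yields the Hessian $\mathbb{E}\big[\partial^2_{x_k}\log f(z_k\vert\theta_k)\,\vert\,\widehat{x}_k^{(\ell)},\underline{y}_k,\theta_k\big]$, which is precisely $-J_z(\widehat{x}_k^{(\ell)},\underline{y}_k\vert\theta_k)$ by the definition of $J_z(\xi_k\vert\theta_k)$.

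These two quantities are exactly the linear and quadratic coefficients of $\mathcal{Q}(x_k\vert\widehat{x}_k^{(\ell)})$, so (with the constant term set to $Q(\widehat{x}_k^{(\ell)}\vert\widehat{x}_k^{(\ell)})$) the surrogate $\mathcal{Q}(\cdot\vert\widehat{x}_k^{(\ell)})$ is the second-order Taylor expansion of the exact EM objective $Q(\cdot\vert\widehat{x}_k^{(\ell)})$ at $\widehat{x}_k^{(\ell)}$. Consequently $\argmax_{x_k}\mathcal{Q}(x_k\vert\widehat{x}_k^{(\ell)})$, which is the EM-gradient update (\ref{eq:algo2}), coincides with a single Newton step taken on $Q(\cdot\vert\widehat{x}_k^{(\ell)})$ and hence agrees to first order with the exact M-step maximiser defining (\ref{eq:EM}); this is the asserted local equivalence, and both recursions share the common fixed point $\widehat{x}_k^{0}$ at which $\mathcal{S}(\widehat{x}_k^{0},\underline{y}_k\vert\theta_k)=0$.

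I expect the main obstacle to be conceptual rather than computational: one must keep the dual role of $x_k$ straight, frozen at $\widehat{x}_k^{(\ell)}$ in the conditioning but free inside $f(z_k\vert\theta_k)$, and must justify differentiating twice under the conditional-expectation integral, for which the integrability hypothesis (\ref{eq:ass}) with $n=1,2$ supplies the needed domination. The remaining delicate point is to make precise that ``local'' means agreement of the quadratic surrogate and the true EM objective through second order at each iterate, so that the EM-gradient recursion reproduces one Newton step of the EM M-step rather than the exact maximiser; the two differ only by higher-order terms that vanish at the shared stationary point.
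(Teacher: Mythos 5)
Your proposal is correct and takes essentially the same route as the paper's own proof: both arguments rest on identifying the gradient of the EM objective at the current iterate with the incomplete-data score $\mathcal{S}(\widehat{x}_k^{(\ell)},\underline{y}_k\vert\theta_k)$ via identity (\ref{eq:mainidentity}) and its Hessian with $-J_z(\widehat{x}_k^{(\ell)},\underline{y}_k\vert\theta_k)$, so that the update (\ref{eq:algo2}) is the Newton step on that objective. The only difference is presentational --- the paper Taylor-expands the complete-data score to first order inside the conditional expectation and invokes the first-order condition of the M-step, whereas you expand the objective itself to second order and identify $\mathcal{Q}$ as its Taylor polynomial; this is the same computation read in the opposite direction, with the added merit of spelling out the dominated-convergence interchange and the precise meaning of ``locally equivalent''.
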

The notion of being locally equivalent between the two algorithms (\ref{eq:EM}) and (\ref{eq:algo2}) refers to (\ref{eq:algo2}) having quadratic convergence compared to linear convergence of (\ref{eq:EM}). See Section 3.9 of \cite{McLachlan} and \cite{Lange} for details of discussions (for parameter estimation). 

\begin{proof}
In view of (\ref{eq:EMfunc}), we can identify the objective function $\mathcal{Q}(x_k\vert \widehat{x}_k^{(\ell)})= \mathbb{E}\big[\log f(x_k,\underline{y}_k,\underline{x}_{k-1}\vert \theta_k) \big\vert \widehat{x}_k^{(\ell)}, \underline{y}_k, \theta_k\big]$. 
Since $\widehat{x}_k^{(\ell +1)}$ is the maximizer of EM criterion (\ref{eq:EM}), the proof follows from applying first-order Taylor expansion around current estimate $\widehat{x}_k^{(\ell)}$ to work out the score function
\begin{align*}
\frac{\partial \log f(\widehat{x}_k^{(\ell+1)},\underline{y}_k,\underline{x}_{k-1}\vert \theta_k)}{\partial x_k}=&\frac{\partial \log f(\widehat{x}_k^{(\ell)},\underline{y}_k,\underline{x}_{k-1}\vert \theta_k)}{\partial x_k} \\
&\hspace{-2.25cm}+ \frac{\partial^2 \log f(\widehat{x}_k^{(\ell)},\underline{y}_k,\underline{x}_{k-1}\vert \theta_k)}{\partial x_k \partial x_k^{\top}}(\widehat{x}_k^{(\ell+1)} - \widehat{x}_k^{(\ell)}).
\end{align*}
Taking conditional expectation $\mathbb{E}[\bullet \vert \widehat{x}_k^{(\ell)},\underline{y}_k,\theta_k]$ on both sides, the proof is complete on account that $\mathbb{E}\big[\frac{\partial \log f(\widehat{x}_k^{(\ell+1)},\underline{y}_k,\underline{x}_{k-1}\vert \theta_k)}{\partial x_k}\big\vert \widehat{x}_k^{(\ell)},\underline{y}_k,\theta_k\big]=0$ by (\ref{eq:EM}), and $\mathbb{E}\big[\frac{\partial \log f(\widehat{x}_k^{(\ell)},\underline{y}_k,\underline{x}_{k-1}\vert \theta_k)}{\partial x_k}\big\vert \widehat{x}_k^{(\ell)},\underline{y}_k,\theta_k\big]=\frac{\partial \log f(\widehat{x}_k^{(\ell)},\underline{y}_k\vert \theta_k)}{\partial x_k}$.
\end{proof}

Observe that the EM criterion (\ref{eq:EM}) is slightly different from that of used in a recent work of \cite{Ramadan}. As the maximizer of (\ref{eq:EM}), $\widehat{x}_k^{(\ell+1)}$ increases loglikelihood function of incomplete information $(x_k,\underline{y}_k)$.

\begin{theorem}
The EM update $\widehat{x}_k^{(\ell+1)}$ (\ref{eq:EM}) increases the incomplete-data loglikelihood function $\log f(\xi_k\vert \theta_k)$, 
\begin{align}\label{eq:increase}
\log f(\widehat{x}_k^{(\ell+1)}, \underline{y}_k\vert \theta_k) > \log f(\widehat{x}_k^{(\ell)}, \underline{y}_k\vert \theta_k).
\end{align}
\end{theorem}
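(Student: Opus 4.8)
The plan is to reproduce, at the level of the state-vector, the classical monotonicity argument for the EM algorithm due to \cite{Dempster}. The starting point is the multiplicative identity of Lemma \ref{eq:probidentity}, which upon taking logarithms yields the additive decomposition $\log f(z_k\vert\theta_k)=\log f(\xi_k\vert\theta_k)+\log f(\underline{x}_{k-1}\vert\xi_k,\theta_k)$, where $\xi_k=(x_k,\underline{y}_k)$. First I would take the conditional expectation $\mathbb{E}[\bullet\vert\widehat{x}_k^{(\ell)},\underline{y}_k,\theta_k]$ of both sides. Since the incomplete-data term $\log f(\xi_k\vert\theta_k)=\log f(x_k,\underline{y}_k\vert\theta_k)$ does not depend on the integration variable $\underline{x}_{k-1}$, it factors out of the expectation, giving the key decomposition
\begin{align*}
\log f(x_k,\underline{y}_k\vert\theta_k)=\mathcal{Q}(x_k\vert\widehat{x}_k^{(\ell)})-\mathcal{H}(x_k\vert\widehat{x}_k^{(\ell)}),
\end{align*}
where $\mathcal{Q}$ is the objective function already identified in (\ref{eq:EM}) and $\mathcal{H}(x_k\vert\widehat{x}_k^{(\ell)}):=\mathbb{E}\big[\log f(\underline{x}_{k-1}\vert x_k,\underline{y}_k,\theta_k)\big\vert\widehat{x}_k^{(\ell)},\underline{y}_k,\theta_k\big]$ is an auxiliary functional.

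Next I would evaluate this identity at the two successive iterates $x_k=\widehat{x}_k^{(\ell+1)}$ and $x_k=\widehat{x}_k^{(\ell)}$ and subtract, so that
\begin{align*}
&\log f(\widehat{x}_k^{(\ell+1)},\underline{y}_k\vert\theta_k)-\log f(\widehat{x}_k^{(\ell)},\underline{y}_k\vert\theta_k)\\
&\quad=\big[\mathcal{Q}(\widehat{x}_k^{(\ell+1)}\vert\widehat{x}_k^{(\ell)})-\mathcal{Q}(\widehat{x}_k^{(\ell)}\vert\widehat{x}_k^{(\ell)})\big]\\
&\qquad-\big[\mathcal{H}(\widehat{x}_k^{(\ell+1)}\vert\widehat{x}_k^{(\ell)})-\mathcal{H}(\widehat{x}_k^{(\ell)}\vert\widehat{x}_k^{(\ell)})\big].
\end{align*}
The first bracket is nonnegative because, by definition (\ref{eq:EM}), $\widehat{x}_k^{(\ell+1)}$ maximizes $\mathcal{Q}(\bullet\vert\widehat{x}_k^{(\ell)})$. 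The core of the argument is then to show the second bracket is nonpositive, i.e. $\mathcal{H}(x_k\vert\widehat{x}_k^{(\ell)})\leq\mathcal{H}(\widehat{x}_k^{(\ell)}\vert\widehat{x}_k^{(\ell)})$ for every $x_k$.

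This last inequality is the information (Gibbs) inequality: the difference $\mathcal{H}(\widehat{x}_k^{(\ell)}\vert\widehat{x}_k^{(\ell)})-\mathcal{H}(x_k\vert\widehat{x}_k^{(\ell)})$ equals the Kullback--Leibler divergence of $f(\underline{x}_{k-1}\vert x_k,\underline{y}_k,\theta_k)$ from $f(\underline{x}_{k-1}\vert\widehat{x}_k^{(\ell)},\underline{y}_k,\theta_k)$, which by Jensen's inequality applied to the concave logarithm is nonnegative, and strictly positive unless the two conditional densities coincide $\lambda$-a.e. Combining the nonnegativity of the first bracket with the nonpositivity of the second yields the claimed increase (\ref{eq:increase}); strictness follows because, prior to convergence, $\widehat{x}_k^{(\ell+1)}\neq\widehat{x}_k^{(\ell)}$ forces at least one of the two brackets to be strict. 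I expect the main obstacle to be the careful justification of the information inequality under the $\sigma$-finite dominating measure $\lambda$ and the integrability hypothesis (\ref{eq:ass}), together with pinning down the equality case so as to secure the strict inequality rather than merely ``$\geq$''.
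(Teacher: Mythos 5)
Your proposal is correct and follows essentially the same route as the paper's own proof: the decomposition $\log f(x_k,\underline{y}_k\vert\theta_k)=\mathcal{Q}(x_k\vert\widehat{x}_k^{(\ell)})-\mathcal{H}(x_k\vert\widehat{x}_k^{(\ell)})$ obtained by conditioning the Bayes identity of Lemma \ref{eq:probidentity}, the maximizing property (\ref{eq:EM}) for the $\mathcal{Q}$-bracket, and Jensen's inequality (your Kullback--Leibler/Gibbs formulation is the same argument, since the conditional expectation of the likelihood ratio equals one) for the $\mathcal{H}$-bracket. Your treatment of the strictness/equality case is in fact slightly more careful than the paper's, which asserts strict Jensen without discussing degeneracy, but the argument is the same.
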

\begin{proof}
To prove the claim, it is necessary to show that the function $x_k\rightarrow \mathcal{H}(x_k\vert \widehat{x}_k^{(\ell)},\underline{y}_k,\theta)$ defined by
\begin{eqnarray*}
&\mathcal{H}(x_k\vert \widehat{x}_k^{(\ell)},\underline{y}_k,\theta_k) \\
&\hspace{-0.5cm}\equiv \mathbb{E}\Big[\log f(x_k,\underline{y}_k,\underline{x}_{k-1}\vert x_k,\underline{y}_k,\theta_k) \Big\vert \widehat{x}_k^{(\ell)},\underline{y}_k,\theta_k\Big],
\end{eqnarray*}
is decreasing in $x_k$. Indeed, this is the case since
\begin{align*}
&\mathcal{H}\big(\widehat{x}_k^{(\ell+1)}\vert \widehat{x}_k^{(\ell)},\underline{y}_k,\theta_k\big)
-\mathcal{H}\big(\widehat{x}_k^{(\ell)}\vert \widehat{x}_k^{(\ell)},\underline{y}_k,\theta_k\big)\\
&\hspace{0cm}=\mathbb{E}\left[\log \left( \frac{f(\widehat{x}_k^{(\ell +1)},\underline{y}_k,\underline{x}_{k-1}\vert \widehat{x}_k^{(\ell+1)},\underline{y}_k,\theta)}{ f(\widehat{x}_k^{(\ell)},\underline{y}_k,\underline{x}_{k-1}\vert \widehat{x}_k^{(\ell)},\underline{y}_k,\theta) }\right) \Big\vert \widehat{x}_k^{(\ell)},\underline{y}_k,\theta_k\right]\\
&\hspace{0cm}<\log \mathbb{E}\left[ \frac{f(\widehat{x}_k^{(\ell +1)},\underline{y}_k,\underline{x}_{k-1}\vert \widehat{x}_k^{(\ell+1)},\underline{y}_k,\theta)}{ f(\widehat{x}_k^{(\ell)},\underline{y}_k,\underline{x}_{k-1}\vert \widehat{x}_k^{(\ell)},\underline{y}_k,\theta) } \Big\vert \widehat{x}_k^{(\ell)},\underline{y}_k,\theta_k\right]=0,
\end{align*}
where the inequality is due to applying Jensen's inequality, whilst the last equality corresponds to $\frac{f(\widehat{x}_k^{(\ell +1)},\underline{y}_k,\underline{x}_{k-1}\vert \widehat{x}_k^{(\ell+1)},\underline{y}_k,\theta)}{ f(\widehat{x}_k^{(\ell)},\underline{y}_k,\underline{x}_{k-1}\vert \widehat{x}_k^{(\ell)},\underline{y}_k,\theta) } $ being the likelihood ratio (Radon-Nikodym derivative) of changing underlying probability distribution with density function $ f(\widehat{x}_k^{(\ell)},\underline{y}_k,\underline{x}_{k-1}\vert \widehat{x}_k^{(\ell)},\underline{y}_k,\theta)$ to that of with $f(\widehat{x}_k^{(\ell +1)},\underline{y}_k,\underline{x}_{k-1}\vert \widehat{x}_k^{(\ell+1)},\underline{y}_k,\theta)$. Finally, the proof is complete after taking expectation $\mathbb{E}[\bullet \vert \widehat{x}_k^{(\ell)},\underline{y}_k,\theta_k]$ on both sides of (\ref{eq:eq3}) and taking account of (\ref{eq:EM}).
\end{proof}

\begin{remark}
To deal with invertibility of $J_z(\xi_k\vert \theta_k)$ in (\ref{eq:algo2}), we may consider replacing $J_z(\xi_k\vert \theta_k)$ by the matrix
\begin{align}\label{eq:matM}
M_z(\xi_k\vert \theta_k)\equiv \mathbb{E}\Big[\frac{\partial \log f(z_k \vert \theta_k)}{\partial x_k} \frac{\partial \log f(z_k \vert \theta_k)}{\partial x_k^{\top}}    \Big\vert \xi_k,\theta_k \Big].
\end{align}
By doing so, the recursive equation (\ref{eq:algo2}) becomes
\begin{align}\label{eq:algo2b}
\widehat{x}_k^{(\ell +1)}= \widehat{x}_k^{(\ell)} +M_z^{-1}(\widehat{x}_k^{(\ell)}, \underline{y}_k\vert \theta_k) \mathcal{S}(\widehat{x}_k^{(\ell)},\underline{y}_k\vert \theta_k).
\end{align}
When iteration (\ref{eq:algo2b}) converges to $\widehat{x}_k^{\infty}$, it corresponds to the ML estimator $\widehat{x}_k$ (\ref{eq:mle2}) since it satisfies $\mathcal{S}(\widehat{x}_k^{\infty},\underline{y}_k\vert \theta_k)=0$. 

Thus, iteration (\ref{eq:algo2b}) extends further the BHHH algorithm of \cite{BHHH} and EM-Gradient algorithm of \cite{Lange} for recursive estimation of state-vector $x_k$.
\end{remark}

\subsection{Recursive Monte Carlo ML estimation}\label{sec:MCMC}
To deal with nonlinear state-space in general, we impose the following assumption on the distributions of $x_k$ and $y_k$.
\begin{assumption}
For a given $x_k$, the observation random vector $y_k$ is independent of $x_{k-1}$ and $\underline{y}_{k-1}$. Furthermore, for a given $x_{k-1}$, $x_k$ is independent of $\underline{y}_{k-1}$. That is,
\begin{align}\label{eq:ass2}
y_k\vert x_k  \perp (x_{k-1},\underline{y}_{k-1}) \quad \mathrm{and}\quad x_k\vert x_{k-1} \perp \underline{y}_{k-1}.\tag{A3} 
\end{align}
\end{assumption}

\subsubsection{Bayes formula for $f(x_{k-1}\vert x_k,\underline{y}_k,\theta_k)$}

The following results simplify the conditional probability density and play important roles for numerical simulation and to solve estimation problem (\ref{eq:mle2}) for linear state-space.
\begin{proposition}
For given $x_k$ and $\theta_k$, under (\ref{eq:ass2}),
\begin{align}\label{eq:bayes2}
f(x_{k-1}\vert x_k,\underline{y}_k,\theta_k)=f(x_{k-1}\vert x_k,\underline{y}_{k-1},\theta_k),
\end{align}
and
\begin{equation}\label{eq:bayes3}
\begin{split}
&f(x_{k-1}\vert x_k,\underline{y}_{k-1},\theta_k)\\
&\hspace{1cm}=\frac{1}{\mathcal{C}}f(x_k\vert x_{k-1},\theta_k)f(x_{k-1}\vert \underline{y}_{k-1},\theta_k),
\end{split}
\end{equation}
where the constant $\mathcal{C}$ is defined by $f(x_k\vert \underline{y}_{k-1},\theta_k)$.
\end{proposition}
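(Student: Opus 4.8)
The plan is to obtain both identities as direct consequences of Bayes' formula, with the two conditional-independence statements in (A3) doing all the work of collapsing the relevant conditional densities. No derivatives or integrability conditions enter here; everything reduces to algebraic manipulation of conditional densities, so the argument should be short.

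First I would treat the marginalisation identity (\ref{eq:bayes2}). Writing $\underline{y}_k=(\underline{y}_{k-1},y_k)$, I would regard $y_k$ as the ``new'' datum and apply Bayes' formula in the variable $x_{k-1}$ conditionally on $(x_k,\underline{y}_{k-1},\theta_k)$:
\[
f(x_{k-1}\vert x_k,\underline{y}_k,\theta_k)=\frac{f(y_k\vert x_{k-1},x_k,\underline{y}_{k-1},\theta_k)\,f(x_{k-1}\vert x_k,\underline{y}_{k-1},\theta_k)}{f(y_k\vert x_k,\underline{y}_{k-1},\theta_k)}.
\]
The first part of (A3), namely $y_k\vert x_k\perp(x_{k-1},\underline{y}_{k-1})$, reduces the numerator likelihood $f(y_k\vert x_{k-1},x_k,\underline{y}_{k-1},\theta_k)$ to $f(y_k\vert x_k,\theta_k)$, and the same independence reduces the denominator $f(y_k\vert x_k,\underline{y}_{k-1},\theta_k)$ to $f(y_k\vert x_k,\theta_k)$. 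These two factors cancel, leaving exactly $f(x_{k-1}\vert x_k,\underline{y}_{k-1},\theta_k)$, which is (\ref{eq:bayes2}).

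Next I would establish (\ref{eq:bayes3}) by applying Bayes' formula to the pair $(x_{k-1},x_k)$ conditionally on $(\underline{y}_{k-1},\theta_k)$:
\[
f(x_{k-1}\vert x_k,\underline{y}_{k-1},\theta_k)=\frac{f(x_k\vert x_{k-1},\underline{y}_{k-1},\theta_k)\,f(x_{k-1}\vert \underline{y}_{k-1},\theta_k)}{f(x_k\vert \underline{y}_{k-1},\theta_k)}.
\]
Here the second part of (A3), $x_k\vert x_{k-1}\perp\underline{y}_{k-1}$, collapses the transition likelihood $f(x_k\vert x_{k-1},\underline{y}_{k-1},\theta_k)$ to the state-transition density $f(x_k\vert x_{k-1},\theta_k)$. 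Since the denominator $f(x_k\vert\underline{y}_{k-1},\theta_k)$ does not depend on $x_{k-1}$, it plays the role of the normalising constant $\mathcal{C}$, and the stated form of (\ref{eq:bayes3}) follows.

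The calculation is routine, so I do not expect a genuine obstacle; the only care required is bookkeeping of the conditioning sets, i.e. verifying that each independence statement in (A3) is invoked in exactly the form $y_k\vert x_k\perp(x_{k-1},\underline{y}_{k-1})$ and $x_k\vert x_{k-1}\perp\underline{y}_{k-1}$, and confirming that the quantity identified as $\mathcal{C}=f(x_k\vert\underline{y}_{k-1},\theta_k)$ is indeed free of $x_{k-1}$ so that it may legitimately be treated as a constant when normalising in $x_{k-1}$.
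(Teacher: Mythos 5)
Your proposal is correct and follows essentially the same route as the paper: both identities are obtained by applying Bayes' formula to the relevant conditional densities and then invoking the two independence statements in (A3) to collapse $f(y_k\vert x_{k-1},x_k,\underline{y}_{k-1},\theta_k)$ and $f(y_k\vert x_k,\underline{y}_{k-1},\theta_k)$ to $f(y_k\vert x_k,\theta_k)$ (so they cancel), and $f(x_k\vert x_{k-1},\underline{y}_{k-1},\theta_k)$ to $f(x_k\vert x_{k-1},\theta_k)$. Your write-up is, if anything, slightly more explicit than the paper's about exactly where each part of (A3) is used and why $\mathcal{C}=f(x_k\vert\underline{y}_{k-1},\theta_k)$ may be treated as a constant in $x_{k-1}$.
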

\begin{proof}
By applying the Bayes formula, we obtain,
\begin{align*}
&f(x_{k-1}\vert x_k,\underline{y}_k,\theta_k)=f(x_{k-1}\vert x_k,y_k,\underline{y}_{k-1},\theta_k)\\
&\hspace{0.5cm}=\frac{f(x_{k-1},y_k\vert x_k, \underline{y}_{k-1},\theta_k)}{f(y_k\vert x_k,\underline{y}_{k-1},\theta_k)}\\
&\hspace{0.5cm}=\frac{f(x_{k-1}\vert x_k,\underline{y}_{k-1},\theta_k)f(y_k\vert x_k, x_{k-1},\underline{y}_{k-1},\theta_k)}{f(y_k\vert x_k,\underline{y}_{k-1},\theta_k)},
\end{align*}
which by (\ref{eq:ass2}) leads to (\ref{eq:bayes2}). By similar arguments, 
\begin{align*}
&f(x_{k-1}\vert x_k,\underline{y}_{k-1},\theta_k)=\frac{f(x_{k-1},x_k\vert \underline{y}_{k-1},\theta_k)}{f(x_k\vert \underline{y}_{k-1},\theta_k)}\\
&\hspace{0.5cm}=\frac{f(x_{k-1}\vert \underline{y}_{k-1},\theta_k)f(x_k\vert x_{k-1},\underline{y}_{k-1},\theta_k)}{f(x_k\vert \underline{y}_{k-1},\theta_k)},
\end{align*}
which completes the proof of the two identities.
\end{proof}

It follows from identities (\ref{eq:bayes2}) and (\ref{eq:bayes3}) that since the normalized distribution $\mathcal{C}:=f(x_k\vert \underline{y}_{k-1},\theta_k)$ does not contain information on random state-vector $x_{k-1}$, generating $x_{k-1}$ therefore only requires the knowledge of $f(x_{k-1}\vert \underline{y}_{k-1},\theta_k)$. 
\begin{remark}
Recall that since the probability density function $f(x_{k-1}\vert \underline{y}_{k-1},\theta_k)$ does not depend on $x_k$, we do not need to sample $x_{k-1}$ repeatedly at each iteration step $\ell$. 
\end{remark}

Suppose that a sequence of random state vectors $x_{k}:=\{x_{k}^{n}: n=1,\ldots,N\}$, $N\in\mathbb{N}$, have been generated. Let $\Delta x_{k}:=\big(\max\{x_{k}^{n}\}-\min\{x_{k}^{n}\})/N$. The function $f(x_{k-1}\vert x_k,\underline{y}_{k-1},\theta_k)$ is evaluated using formula (\ref{eq:bayes3}). While the state transition distribution $f(x_k\vert x_{k-1},\theta_k)$ is specified, the probability $\alpha_k^n:=f(x_k^n\vert \underline{y}_k,\theta_k)\Delta x_{k}$ is computed as 
\begin{align*}
&\alpha_k^n=\lim_{\Delta x_k\rightarrow 0}\mathbb{P}\{x_k^n\leq x_k<x_k^n+\Delta x_k \vert y_k,\underline{y}_{k-1},\theta_k\}\\
&\hspace{-0.75cm}=\lim_{\Delta x_k,\Delta y\rightarrow 0}\frac{\mathbb{P}\{x_k^n\leq x_k<x_k^n+\Delta x_k, y_k\leq y<y_k+\Delta y \vert \underline{y}_{k-1},\theta_k\}}{\mathbb{P}\{y_k\leq y<y_k+\Delta y \vert \underline{y}_{k-1},\theta_k\}}\\
=& \lim_{\Delta x_k\rightarrow 0} \frac{f(y_k\vert x_k^n,\theta_k)\mathbb{P}\{x_k^n \leq x_k< x_k^n +\Delta x_k \vert \underline{y}_{k-1},\theta_k\}}{\sum_{n=1}^N f(y_k\vert x_k^n,\theta_k) \mathbb{P}\{x_k^n \leq x_k< x_k^n +\Delta x_k \vert \underline{y}_{k-1},\theta_k\}}.
\end{align*}
Since $\lim_{\Delta x_k\rightarrow 0}\mathbb{P}\{x_k^n \leq x_k< x_k^n +\Delta x_k \vert \underline{y}_{k-1},\theta_k\}$ corresponds to the probability of randomly selecting $x_k=x_k^n$ out of $N$ possible realizations $\{x_{k}^{n}: n=1,\ldots,N\}$, we set $\lim_{\Delta x_k\rightarrow 0}\mathbb{P}\{x_k^n \leq x_k< x_k^n +\Delta x_k \vert \underline{y}_{k-1},\theta_k\}=1/N$. Hence, 
\begin{align}\label{eq:alphakn}
\alpha_k^n= \frac{f(y_k\vert x_k^n,\theta_k)}{\sum_{n=1}^N f(y_k\vert x_k^n,\theta_k) }.
\end{align}
The probabilities $\{\alpha_k^n: n=1,\ldots,N\}$ will be used to generate the state-vector $\{x_k^n:n=1,\ldots,N\}$, see Section \ref{sec:kitagawa}. 

In particle filtering literature, see e.g., \cite{Doucet,Doucet2000} and \cite{Kitagawa,Kitagawa2021}, the posterior distribution function $f(x_k\vert \underline{y}_k,\theta_k)$ is usually presented as
\begin{align}\label{eq:dirac}
 f(x_k\vert \underline{y}_k,\theta_k)=\sum_{n=1}^N \alpha_k^n \delta_{x_{k}^n}(x_k),
\end{align}
where $\delta(x)$ is the Dirac delta function. By this construction,
\begin{align}\label{eq:pdf}
f(x_k\vert\underline{y}_k,\theta_k)=f(y_k\vert x_k,\theta_k)\sum_{n=1}^N \alpha_{k-1}^n f(x_k\vert x_{k-1}^n,\theta_k),
\end{align}
up to a positive multiplicative constant $1/f(y_k\vert \underline{y}_{k-1},\theta_k)$. 
\begin{remark}
We see that attempting to solve (\ref{eq:mle}) using (\ref{eq:pdf}) would complicate the estimation problem for state-vector.  
\end{remark}

Subsequently, $f(x_{k-1}\vert x_k,\underline{y}_k,\theta_k)$ (\ref{eq:bayes2}) is evaluated as
\begin{eqnarray}\label{eq:measure}
f(x_{k-1}\vert x_k,\underline{y}_k,\theta_k)=\frac{1}{\mathcal{C}}\sum_{n=1}^N \alpha_{k-1}^n f(x_k\vert x_{k-1},\theta_k) \delta_{x_{k-1}^n}(x_{k-1}).
\end{eqnarray}
This distribution is used throughout the rest of this paper. 

\subsubsection*{Computing the constant $\mathcal{C}$}
In the case where the constant $\mathcal{C}:=f(x_k\vert \underline{y}_{k-1},\theta_k)$ (\ref{eq:bayes3}) needs to be computed, for instance in numerically evaluating the information matrix $J_{\xi}(x_k,\underline{y}_k\vert\theta_k)$, we have by (\ref{eq:dirac})
\begin{align}\label{eq:constant}
\mathcal{C}=&f(x_k\vert \underline{y}_{k-1},\theta_k) \nonumber\\
=&\int_{\mathcal{X}_k} f(x_k\vert x_{k-1},\theta_k)  f(x_{k-1}\vert \underline{y}_{k-1},\theta_k) d \lambda(x_{k-1})  \nonumber\\
=&\sum_{n=1}^N \alpha_{k-1}^n \int_{\mathcal{X}_k} f(x_k\vert x_{k-1},\theta_k) \delta_{x_{k-1}^n}(x_{k-1}) d \lambda(x_{k-1})   \nonumber\\
=&\sum_{n=1}^N \alpha_{k-1}^n  f(x_k\vert x_{k-1}^n,\theta_k).
\end{align}
For application purposes, see Section \ref{sec:appl} for details, we denote by $w_{k-1}^n(x_k\vert \theta_k)$ the weight function defined as
\begin{align*}
w_{k-1}^n(x_k\vert\ x_{k-1}^n,\theta_k)= \frac{\alpha_{k-1}^n f(x_k\vert x_{k-1}^n,\theta_k)}{\sum_{n=1}^N \alpha_{k-1}^n f(x_k\vert x_{k-1}^n,\theta_k)}.
\end{align*}

\subsubsection{Sampling from $f(x_k\vert\underline{y}_k,\theta_k)$}\label{sec:kitagawa}
To sample state-vector $x_{k}$ from $f(x_{k}\vert \underline{y}_{k},\theta_k)$, we use the method of \cite{Kitagawa2021,Kitagawa,Kitagawa93}. The method can be used to derive $\widehat{x}_{k\vert k}=\mathbb{E}[x_k\vert \underline{y}_k,\theta_k]$, the MMSE estimator (\ref {eq:objfunc}) such as discussed in Section \ref{sec:mmse}. To this end, consider nonlinear state-space $x_k=F(x_{k-1},u_k,v_k)$, with $x_0\sim q(x)$, where $\{u_k\}$ is a given sequence of control variables, whilst $v_k$ is a measurement noise with known distribution $g(v)$. Assume that $x_k$ is partially observed through observation vector $y_k$ with known probability density function $f(y_k\vert x_k,\theta_k)$. Suppose that a random sample $\{x_{k-1}^n: n=1,\ldots,N\}$ have been drawn from $f(x_{k-1}\vert \underline{y}_{k-1},\theta_k)$. Independently of $\{x_{k-1}^n\}_{n\geq 1}$, a set of measurement noises $\{v_k\}_{k\geq 1}$ are generated from the distribution $g(v)$, i.e., 
\begin{align*}
x_{k-1}^n \sim f(x_{k-1}\vert \underline{y}_{k-1},\theta_k), \quad v_k^n\sim g(v).
\end{align*}
Having obtained $x_{k-1}^n$, $u_k$ and $v_k^n$, set
\begin{equation}\label{eq:kitagawa2}
\begin{split}
\gamma_k^n=&F(x_{k-1}^n,u_k,v_k^n),\\[4pt]
\alpha_k^n=& \frac{f(y_k\vert \gamma_k^n,\theta_k)}{\sum_{n=1}^N f(y_k\vert \gamma_k^n,\theta_k) }.
\end{split}
\end{equation}
The state vectors $\{x_k^1,\ldots, x_k^N\}$ is obtained by resampling of $\{\gamma_k^1,\ldots, \gamma_k^N\}$ with probabilities $\{\alpha_k^1,\ldots,\alpha_k^N\}$. That is,
\begin{align}\label{eq:kitagawa3}
x_k^n =
\begin{cases}
\gamma_k^1, &\textrm{with probability $\alpha_k^1$},\\
\;\vdots & \quad \quad \quad \quad\vdots\\
\gamma_k^N, &\textrm{with probability $\alpha_k^N$},
\end{cases}
\end{align}
which is a realization of $x_k^n$ from posterior distribution (\ref{eq:dirac}).

To apply (\ref{eq:measure}) for evaluating $\mathbb{E}[F_k(x_{k-1},u_k)\vert x_k,\underline{y}_k,\theta_k]$ for a given sequence of control variables $\{u_k\}_{k\geq 1}$, see Section \ref{sec:appl}, an algorithm for sampling state-vector $x_{k}^n$ from the distribution $f(x_k\vert \underline{y}_k,\theta_k)$ is required. It is described below

\medskip

\textbf{Particle filtering algorithm}
\begin{enumerate}
\item[(i)] \textbf{Step 0}. Generate an initial $p-$dimensional state-vector $x_0^n\sim q(x)$, $n\in\{1,\ldots,N\}$.

\item[(ii)] \textbf{Step k}. Suppose that at each step $k=1,\ldots,K$ we have obtained $u_k$, $x_{k-1}^n$ and $v_k^n\sim g(v)$, $n\in\{1,\ldots,N\}$.
\begin{enumerate}
\item Compute $\gamma_k^n$ and $\alpha_k^n$, $n\in\{1,\ldots,N\}$, using (\ref{eq:kitagawa2}).
\item Generate $x_k^n$, $n\in\{1,\ldots,N\}$, according to (\ref{eq:kitagawa3}).
\end{enumerate}
\end{enumerate}

\subsubsection{Monte Carlo evaluation of score function and observed information matrices}\label{sec:filterS}

To examine recursive equations (\ref{eq:algo1}) and (\ref{eq:algo2}) for nonlinear state-space models, a recursive Monte Carlo based ML estimation is proposed on account of the following remark.
\begin{remark}
Notice that since we are primarily interested in estimating $x_k$, it is in principle sufficient to consider $(x_k,\underline{y}_k, x_{k-1})$ as ''complete information'' by which the value of Gaussian noises $v_k$ and $w_k$ are identifiable, given respectively by $v_k=x_k-F_k x_{k-1} -G_ku_k$ and $w_k= y_k -H_k x_k$.
\end{remark}

Using the probability density function (\ref{eq:measure}), the score function $\mathcal{S}(x_k,\underline{y}_k\vert\theta_k)$ and observed information matrix $J_{\xi}(x_k,\underline{y}_k\vert\theta_k)$ can be evaluated numerically as follows.
\begin{align}
&\mathcal{S}(x_k,\underline{y}_k\vert\theta_k)=\mathbb{E}\Big[\frac{\partial \log f(x_k,\underline{y}_k,x_{k-1}\vert\theta_k)}{\partial x_k}\big\vert x_k,\underline{y}_k,\theta_k\Big] \nonumber\\
&=\int_{\mathcal{X}_k} f(x_{k-1}\vert x_k,\underline{y}_k,\theta_k)\frac{\partial \log f(x_k,\underline{y}_k,x_{k-1}\vert\theta_k)}{\partial x_k} d\lambda(x_{k-1})  \nonumber\\
&=\frac{1}{\mathcal{C}}\sum_{n=1}^N \alpha_{k-1}^n \int_{\mathcal{X}_k}  f(x_k\vert x_{k-1},\theta_k)\delta_{x_{k-1}^n}(x_{k-1})  \nonumber\\
&\hspace{2.5cm}\times  \frac{\partial \log f(x_k,\underline{y}_k,x_{k-1}\vert\theta_k)}{\partial x_k} d\lambda(x_{k-1})  \nonumber\\
&=\sum_{n=1}^N \frac{\alpha_{k-1}^n   f(x_k\vert x_{k-1}^n,\theta_k)}{\mathcal{C}}\frac{\partial \log f(x_k,\underline{y}_k,x_{k-1}^n\vert\theta_k)}{\partial x_k} \nonumber
\end{align}

Thus, the score function $\mathcal{S}(x_k,\underline{y}_k\vert\theta_k)$ is determined by
\begin{equation}
\begin{split}
&\mathcal{S}(x_k,\underline{y}_k\vert\theta_k)
=\sum_{n=1}^N w_{k-1}^n(x_k\vert x_{k-1}^n,\theta_k) \\
&\hspace{3cm}\times \frac{\partial \log f(x_k,\underline{y}_k,x_{k-1}^n\vert\theta_k)}{\partial x_k}. \label{eq:filterS}
\end{split}
\end{equation}
Similarly defined, the matrix $J_{z}(x_k,\underline{y}_k\vert\theta_k)$ is evaluated as
\begin{align*}
&J_{z}(x_k,\underline{y}_k\vert\theta_k)= -\sum_{n=1}^N w_{k-1}^n(x_k\vert x_{k-1}^n,\theta_k)\\
&\hspace{3cm}\times \frac{\partial^2 \log f(x_k,\underline{y}_k,x_{k-1}^n\vert\theta_k)}{\partial x_k \partial x_k^{\top}},
\end{align*}
whilst the matrix $M_z(x_k,\underline{y}_k\vert\theta_k)$ (\ref{eq:matM}) is represented by
\begin{align*}
&M_z(x_k,\underline{y}_k\vert\theta_k)=
 \sum_{n=1}^N w_{k-1}^n(x_k\vert x_{k-1}^n,\theta_k) \\
 &\hspace{1cm}\times \frac{\partial \log f(x_k,\underline{y}_k,x_{k-1}^n\vert\theta_k)}{\partial x_k} \frac{\partial \log f(x_k,\underline{y}_k,x_{k-1}^n\vert\theta_k)}{\partial x_k^{\top}}.
 \end{align*}
Hence, the information matrix $J_{\xi}(x_k,\underline{y}_k\vert\theta_k)$ is given by
\begin{align*}
J_{\xi}(x_k,\underline{y}_k\vert\theta_k)=&J_{z}(x_k,\underline{y}_k\vert\theta_k) - M_z(x_k,\underline{y}_k\vert\theta_k)\\
&+ \mathcal{S}(x_k,\underline{y}_k\vert\theta_k) \mathcal{S}^{\top}(x_k,\underline{y}_k\vert\theta_k).
\end{align*}

\begin{remark}
Notice that since we are concerned with taking derivative with respect to $x_k$ in the score function and information matrices, it is sufficient to consider $ f(x_k,\underline{y}_k,x_{k-1}^n\vert\theta_k) \sim f(x_k\vert x_{k-1}^n,\theta_k) f(y_k\vert x_k,\theta_k)$.
\end{remark}

\subsubsection{EM-Gradient-Particle filter}\label{sec:mcmc1}
The method is based on employing the particle filtering algorithm for evaluation of  the score function $S(\xi_k\vert\theta_k)$ and information matrices $J_{\xi}(\xi_k\vert \theta_k)$ and $J_z(\xi_k\vert\theta_k)$ as described above. Thus, the recursive equation (\ref{eq:algo2}) can be written as 
\begin{align}\label{eq:algo3}
\widehat{x}_k^{(\ell +1)}=&\widehat{x}_k^{(\ell)} + J_z^{-1}(\widehat{x}_k^{(\ell)},\underline{y}_k\vert\theta_k)\mathcal{S}(\widehat{x}_k^{(\ell)},\underline{y}_k\vert\theta_k).
\end{align}
Run iteration using initial condition (\ref{eq:initial1b}) until the stopping rule (\ref{eq:stoppingrule}) is met and the sequence $\{\widehat{x}_k^{(\ell)}\}_{\ell}$ converges to $\widehat{x}_k^{\infty}$. 

The recursive equations (\ref{eq:algo1}) and (\ref{eq:algo2b}) can be reformulated in similar way as (\ref{eq:algo3}), although the derived equation for (\ref{eq:algo1}) might be slightly complex in view of formula (\ref{eq:fisher}).

\subsubsection{Covariance matrix of estimation error}\label{sec:mcmc2}
To derive an estimate of the covariance matrix of estimation error $\widehat{x}_k^{\infty}-x_k$, we use repeated sampling method. To be more details, draw $M$ random samples of size $N$ each of $x_{k-1}$ using the Monte Carlo method described above. To each subsample $\{x_{k-1,m}^{n}: n=1,\ldots,N\}$, $m\in \{1,\ldots,M\}$, apply recursive equation (\ref{eq:algo3}) until convergence to $\widehat{x}_{k,m}^{\infty}$ at which $\mathcal{S}(\widehat{x}_{k,m}^{\infty},\underline{y}_k,\theta_k)= \sum_{n=1}^N  w_{k-1}^n (\widehat{x}_{k,m}^{\infty}\vert x_{k-1,m}^n,\theta_k) \frac{\partial f(\widehat{x}_{k,m}^{\infty},\underline{y}_k,x_{k-1,m}^n\vert \theta_k)}{\partial x_k}$ is (close to) zero. Then, for each $\widehat{x}_{k,m}^{\infty}$ evaluate the matrices
\begin{align*}
&M_z^{(m)}(\widehat{x}_{k,m}^{\infty},\underline{y}_k\vert\theta_k)\\
&\hspace{-0.5cm}=  \sum_{n=1}^N w_{k-1}^n (\widehat{x}_{k,m}^{\infty}\vert x_{k-1,m}^n,\theta_k)   \
\frac{\partial \log f(\widehat{x}_{k,m}^{\infty},\underline{y}_k,x_{k-1,m}^n\vert \theta_k)}{\partial x_k} \\
&\hspace{2cm}\times \frac{\partial \log f(\widehat{x}_{k,m}^{\infty},\underline{y}_k,x_{k-1,m}^n\vert \theta_k)}{\partial x_k^{\top}},\\[3pt]
&J_{z}^{(m)}(\widehat{x}_{k,m}^{\infty},\underline{y}_k\vert \theta_k)\\
&\hspace{-0.5cm}= - \sum_{n=1}^N   w_{k-1}^n (\widehat{x}_{k,m}^{\infty}\vert x_{k-1,m}^n,\theta_k)  \
\frac{\partial^2 \log f(\widehat{x}_{k,m}^{\infty},\underline{y}_k,x_{k-1,m}^n\vert \theta_k)}{\partial x_k \partial x_k^{\top}},
\end{align*}
and the observed information matrix 
\begin{eqnarray*}
J_{\xi}^{(m)}(\widehat{x}_{k,m}^{\infty},\underline{y}_k \vert \theta_k)=J_{z}^{(m)}(\widehat{x}_{k,m}^{\infty},\underline{y}_k\vert \theta_k) -M_z^{(m)}(\widehat{x}_{k,m}^{\infty},\underline{y}_k \vert \theta_k).
\end{eqnarray*}
After all $M$ samples have been used, we have $M$ estimators $\{\widehat{x}_{k,m}^{\infty}: m=1,\ldots,M\}$ of $x_k^0$ and $M$ estimates $J_{\xi}^{(m)}(\widehat{x}_{k,m}^{\infty},\underline{y}_k\vert \theta_k)$ of the information matrix $J_{\xi}(x_k^0,\underline{y}_k\vert\theta_k)$. The estimators of $x_k^0$ and $I_{\xi}(x_k^0,\underline{y}_k\vert\theta_k)$ are defined by 
\begin{align}
\widehat{x}_k^0:=&\frac{1}{M}\sum_{m=1}^M \widehat{x}_{k,m}^{\infty}, \label{eq:mcmcmle}\\[5pt]
\widehat{I}_{\xi}(\widehat{x}_k^0,\underline{y}_k\vert\theta_k):=&\frac{1}{M}\sum_{m=1}^M J_{\xi}^{(m)}(\widehat{x}_{k,m}^{\infty},\underline{y}_k\vert \theta_k)>0, \nonumber
\end{align}
respectively. Thus, an estimate $\widehat{P}_k$ of the covariance matrix of $\widehat{x}_k^0-x_k^0$ is given by the inverse of $\widehat{I}_{\xi}(\widehat{x}_k^0,\underline{y}_k\vert\theta_k)$, i.e., 
\begin{eqnarray}\label{eq:estcovmat}
\widehat{P}_k:=\widehat{I}_{\xi}^{-1}(\widehat{x}_k^0,\underline{y}_k\vert\theta_k).
\end{eqnarray}

\section{Application of the main results}\label{sec:appl}

\subsection{Linear state-space models}\label{sec:application}

This section applies the main results of Section \ref{sec:sec3} to the linear state-space (1). In particular, we show that the ML estimator $\widehat{x}_k$ (\ref{eq:mle2}) corresponds to the Kalman filter $\widehat{x}_{k\vert k}$ (\ref{eq:kalman}) and that it is a fully efficient estimator of state vector $x_k$. 

 \subsubsection{Loglikelihood of complete data}

Following linear state-space (1), the observations $z_k$ gives a complete information of the systems in the sense of observing the sequence $\underline{v}_k$ and $\underline{w}_k$ of Gaussian random vectors as well as initial state $x_0$. By independence of $\underline{v}_k$, $\underline{w}_k$ and $x_0$, the likelihood function $f(z_k \vert \theta_k)$ of complete data $\{z_k\}$ is
\begin{eqnarray}
&&\hspace{-1cm}f(z_k \vert \theta_k)=\frac{1}{\sqrt{2\pi\vert P_0\vert}}  \exp\Big[-\frac{1}{2}(x_0-\mu)^{\top}P_0^{-1}(x_0-\mu)\Big] \nonumber \\ 
&&\hspace{-1cm}\times
 \prod_{\ell =1}^k \frac{1}{ 2\pi \vert Q_{\ell} \vert \vert R_{\ell} \vert }\exp\Big[-\frac{1}{2} \big( v_{\ell}^{\top} Q_{\ell}^{-1} v_{\ell} + w_{\ell}^{\top} R_{\ell}^{-1} w_{\ell}\big)\Big], \label{eq: complete}
\end{eqnarray}
where for each $k$, $v_k$ and $w_k$ are subject to the constraint described by the linear dynamics (1). Replacing $v_k=x_k - F_k x_{k-1} - G_k u_k$ and $w_k=y_k-H_k x_k$ in the likelihood function (\ref{eq: complete}) and taking logarithm, the loglikelihood function $\log f(z_k \vert \theta_k)$ of complete data $\{z_k\}$ becomes
\begin{equation*}
\begin{split}
\log f(z_k \vert \theta_k)=-&\frac{1}{2}\log\big(2 \pi \vert P_0\vert\big) -\frac{1}{2}(x_0-\mu)^{\top}P_0^{-1}(x_0-\mu)\\
&\hspace{-2.5cm} -\frac{1}{2} \sum_{\ell=1}^k\big(x_{\ell}-F_{\ell} x_{\ell-1}- G_{\ell} u_{\ell}\big)^{\top}Q_{\ell}^{-1} \big(x_{\ell}-F_{\ell} x_{\ell-1}- G_{\ell} u_{\ell}\big)\\
&\hspace{-2.5cm} -\frac{1}{2} \sum_{\ell=1}^k\big(y_{\ell}-H_{\ell} x_{\ell} \big)^{\top}R_{\ell}^{-1} \big(y_{\ell}-H_{\ell} x_{\ell}\big).
\end{split}
\end{equation*}

\subsubsection{Score functions and information matrices}

By applying basic rules of vector derivatives, see for e.g. \cite{Petersen}, the complete-data score function of the state-vector $x_k$ is specified explicitly by  
\begin{equation*}
\begin{split}
\frac{\partial \log f(z_k\vert \theta_k)}{\partial x_k}=&-H_k^{\top}R_k^{-1}\big(H_k x_k -y_k\big)\\
&-Q_k^{-1}\big(x_k-F_k x_{k-1} -G_k u_k\big),
\end{split}
\end{equation*}
with the complete-data information matrix given by
\begin{align*}
-\frac{\partial^2 \log f(z_k\vert \theta_k)}{\partial x_k \partial x_k^{\top}}=H_k^{\top} R_k^{-1} H_k + Q_k^{-1},
\end{align*}
Hence, the observed information matrix $J_z(x_k,\underline{y}_k\vert \theta_k)$ is 
\begin{align}\label{eq:infomat}
J_z(x_k,\underline{y}_k\vert \theta_k)=&\mathbb{E}\Big[-\frac{\partial^2 \log f(z_k\vert \theta_k)}{\partial x_k \partial x_k^{\top}}\Big\vert x_k,\underline{y}_k,\theta_k\Big] \nonumber\\
=&H_k^{\top} R_k^{-1} H_k + Q_k^{-1},
\end{align}
which is positive definite. Therefore, the expected information matrix is given by $I_z(x_k,\underline{y}_k\vert \theta_k)=H_k^{\top} R_k^{-1} H_k + Q_k^{-1}$. 

To derive incomplete-data score function $\mathcal{S}(\xi_k\vert\theta_k)$ and information matrix $J_{\xi}(\xi_k\vert\theta_k)$, the result below is needed.
\begin{lemma}\label{lem:smoothing}
For any given vector $(x_k,\underline{y}_k)$ and $\theta_k$,
\begin{align*}
\mathbb{E}[x_{k-1}\vert x_k, \underline{y}_{k},\theta_k]=&\widehat{x}_{k-1\vert k-1} + \Sigma_{k\vert k-1}P_{k\vert k-1}^{-1}(x_k-\widehat{x}_{k\vert k-1}),
\end{align*}
where $ \Sigma_{k\vert k-1} \equiv \mathbb{E}\big[(x_{k-1}-\widehat{x}_{k-1\vert k-1})(x_k - \widehat{x}_{k\vert k-1}) \vert \theta_k\big]$ with 
\begin{align*}
\Sigma_{k+1\vert k}=&P_{k\vert k-1}\big(F_{k+1} - F_{k+1}K_k  H_k\big)^{\top},\\
F_k\Sigma_{k\vert k-1}=& P_{k\vert k-1}-Q_k.
\end{align*}
\end{lemma}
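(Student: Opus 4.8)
The plan is to exploit the joint Gaussianity of the linear model together with the conditional-independence identity (\ref{eq:bayes2}) already established. First I would use (\ref{eq:bayes2}) to reduce the smoother to a one-step problem, $\mathbb{E}[x_{k-1}\vert x_k,\underline{y}_k,\theta_k]=\mathbb{E}[x_{k-1}\vert x_k,\underline{y}_{k-1},\theta_k]$, so that only the joint law of $(x_{k-1},x_k)$ given $\underline{y}_{k-1}$ matters. Since (\ref{eq:solofx}) shows each $x_\ell$ is Gaussian and, by (1), $(x_{k-1},x_k)$ is jointly Gaussian, conditioning on $\underline{y}_{k-1}$ keeps the pair jointly Gaussian with conditional means $\widehat{x}_{k-1\vert k-1}$ and $\widehat{x}_{k\vert k-1}$. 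Applying the standard Gaussian conditional-mean formula (Theorem 3.2 of Ch. 7 in \cite{Astrom}, as already invoked for (\ref{eq:id2})) then yields
\[
\mathbb{E}[x_{k-1}\vert x_k,\underline{y}_{k-1},\theta_k]=\widehat{x}_{k-1\vert k-1}+\Sigma_{k\vert k-1}P_{k\vert k-1}^{-1}(x_k-\widehat{x}_{k\vert k-1}),
\]
once $\Sigma_{k\vert k-1}$ is identified as the innovation cross-covariance and $P_{k\vert k-1}=\mathrm{Cov}(x_k\vert\underline{y}_{k-1},\theta_k)$. One subtlety to dispatch here is that the cross-covariance appearing in the Gaussian formula is the conditional one given $\underline{y}_{k-1}$, whereas $\Sigma_{k\vert k-1}$ is defined unconditionally; these agree because in the linear Gaussian model the conditional second moments are deterministic and the estimation errors $x_{k-1}-\widehat{x}_{k-1\vert k-1}$, $x_k-\widehat{x}_{k\vert k-1}$ are orthogonal to every function of $\underline{y}_{k-1}$, so iterated expectation collapses the unconditional cross-covariance onto the conditional one.

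Next I would establish the auxiliary identity $\Sigma_{k\vert k-1}=P_{k-1\vert k-1}F_k^{\top}$. Writing $x_k-\widehat{x}_{k\vert k-1}=F_k(x_{k-1}-\widehat{x}_{k-1\vert k-1})+v_k$ from (\ref{eq:eq1a}) and (\ref{eq:sol2}), and using that $v_k$ is independent of $x_{k-1}$ and of $\widehat{x}_{k-1\vert k-1}$ (a function of $\underline{y}_{k-1}$), the $v_k$ term drops out of the cross-moment, leaving $\Sigma_{k\vert k-1}=P_{k-1\vert k-1}F_k^{\top}$.

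From this closed form the two stated recursions follow by algebra. For the second, multiplying on the left by $F_k$ gives $F_k\Sigma_{k\vert k-1}=F_kP_{k-1\vert k-1}F_k^{\top}=P_{k\vert k-1}-Q_k$, where the last equality is exactly the a priori-covariance recursion in (\ref{eq:sol2}). For the first, shifting the index gives $\Sigma_{k+1\vert k}=P_{k\vert k}F_{k+1}^{\top}$; then I would use the a posteriori-covariance identity (\ref{eq:kalmancovmat}), $P_{k\vert k}=(I-K_kH_k)P_{k\vert k-1}$, together with the symmetry of $P_{k\vert k}$ and $P_{k\vert k-1}$ to rewrite $P_{k\vert k}=P_{k\vert k-1}(I-K_kH_k)^{\top}$, whence $\Sigma_{k+1\vert k}=P_{k\vert k-1}(I-K_kH_k)^{\top}F_{k+1}^{\top}=P_{k\vert k-1}(F_{k+1}-F_{k+1}K_kH_k)^{\top}$. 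The only mildly delicate point in the whole argument is the conditional-versus-unconditional covariance identification in the first step; everything afterwards is a direct consequence of the Kalman recursions (\ref{eq:kalman})--(\ref{eq:sol2}) and the symmetry of the covariance matrices.
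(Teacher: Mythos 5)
Your proof is correct, and while the first half mirrors the paper, the second half takes a genuinely different and more economical route. For the smoothing formula itself, both you and the paper reduce to $\mathbb{E}[x_{k-1}\vert x_k,\underline{y}_{k-1},\theta_k]$ via (\ref{eq:bayes2}) and invoke joint Gaussianity; the paper routes through Kailath's innovations decomposition (Theorems 3.3 and 3.2 of Ch.~7 in \cite{Astrom}, applied unconditionally, with the orthogonality $\mathbb{E}[\widehat{x}_{k-1\vert k-1}(x_k-\widehat{x}_{k\vert k-1})^{\top}\vert\theta_k]=0$ identifying the cross-covariance as $\Sigma_{k\vert k-1}$), whereas you apply the Gaussian conditional-mean formula directly given $\underline{y}_{k-1}$ and then collapse the conditional covariances to the unconditional ones — the same subtlety, handled on the other side of the conditioning. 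The real divergence is in the covariance recursions: the paper proves $\Sigma_{k+1\vert k}=P_{k\vert k-1}(F_{k+1}-F_{k+1}K_kH_k)^{\top}$ by writing $x_k-\widehat{x}_{k\vert k}$ and $x_{k+1}-\widehat{x}_{k+1\vert k}$ in innovations form and expanding the product into six cross-moments ($\mathbb{E}[\overline{y}_k\overline{y}_k^{\top}\vert\theta_k]=H_kP_{k\vert k-1}H_k^{\top}+R_k$, $\mathbb{E}[v_k\overline{y}_k^{\top}\vert\theta_k]=Q_kH_k^{\top}$, etc.), while you first establish the closed form $\Sigma_{k\vert k-1}=P_{k-1\vert k-1}F_k^{\top}$ in one line (since $v_k$ drops out of the cross-moment) and then obtain both stated identities by pure algebra: $F_k\Sigma_{k\vert k-1}=F_kP_{k-1\vert k-1}F_k^{\top}=P_{k\vert k-1}-Q_k$ from (\ref{eq:sol2}), and $\Sigma_{k+1\vert k}=P_{k\vert k}F_{k+1}^{\top}=P_{k\vert k-1}(I-K_kH_k)^{\top}F_{k+1}^{\top}$ from (\ref{eq:kalmancovmat}) plus symmetry of $P_{k\vert k}$ and $P_{k\vert k-1}$. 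Your route is shorter and makes the structural reason for both recursions transparent (they are the same identity $\Sigma_{k\vert k-1}=P_{k-1\vert k-1}F_k^{\top}$ read two ways); the paper's longer computation has the side benefit of exhibiting the innovation cross-moments that it reuses elsewhere (e.g.\ in Appendices F and G), but as a proof of the lemma your argument is complete and arguably cleaner.
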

This result is employed to replace the posterior distribution (\ref{eq:density}) and one-lag smoother $\widehat{x}_{k-1\vert k}=\mathbb{E}[x_{k-1}\vert \underline{y}_k,\theta_k]$, see e.g. \cite{Anderson}, in deriving the MLE $\widehat{x}_k$. 
\begin{proof}
See Appendix E for details of the proof.
\end{proof}
The incomplete-data score function $\mathcal{S}(x_k,\underline{y}_k\vert\theta_k)$ and observed information matrix $J_{\xi}(x_k,\underline{y}_k\vert\theta_k)$ are given below.
\begin{proposition}\label{prop:propscore}
$\mathcal{S}(x_k,\underline{y}_k\vert\theta_k)$ and $J_{\xi}(x_k,\underline{y}_k\vert\theta_k)$ are resp.,
\begin{align*}
\mathcal{S}(x_k,\underline{y}_k\vert\theta_k)=&-(H_k^{\top}R_k^{-1}H_k + P_{k\vert k-1}^{-1})(x_k -\widehat{x}_{k\vert k-1}) \\
&+ H_k^{\top}R_k^{-1}\big(y_k - H_k \widehat{x}_{k\vert k-1}\big),\\[4pt]
J_{\xi}(x_k,\underline{y}_k\vert\theta_k)=&(H_k^{\top}R_k^{-1}H_k + P_{k\vert k-1}^{-1}).
\end{align*}
\end{proposition}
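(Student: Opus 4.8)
The plan is to compute the score function $\mathcal{S}(x_k,\underline{y}_k\vert\theta_k)$ directly from the master identity (\ref{eq:mainidentity}), which states that $\mathcal{S}(x_k,\underline{y}_k\vert\theta_k)=\mathbb{E}\big[\frac{\partial \log f(z_k\vert\theta_k)}{\partial x_k}\big\vert x_k,\underline{y}_k,\theta_k\big]$, and then obtain the information matrix by differentiating the resulting score. First I would observe that in the explicit complete-data score function derived just above the statement, namely $\frac{\partial \log f(z_k\vert\theta_k)}{\partial x_k}=-H_k^{\top}R_k^{-1}(H_k x_k-y_k)-Q_k^{-1}(x_k-F_k x_{k-1}-G_k u_k)$, the only quantity not measurable with respect to $(x_k,\underline{y}_k)$ is $x_{k-1}$, and it enters linearly. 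Hence the conditional expectation simply replaces $x_{k-1}$ by the smoother $\mathbb{E}[x_{k-1}\vert x_k,\underline{y}_k,\theta_k]$, giving
\begin{align*}
\mathcal{S}(x_k,\underline{y}_k\vert\theta_k)=&-H_k^{\top}R_k^{-1}(H_k x_k-y_k)\\
&-Q_k^{-1}\big(x_k-F_k\mathbb{E}[x_{k-1}\vert x_k,\underline{y}_k,\theta_k]-G_k u_k\big).
\end{align*}

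The core of the argument is then to simplify the second term using Lemma \ref{lem:smoothing}. Substituting $\mathbb{E}[x_{k-1}\vert x_k,\underline{y}_k,\theta_k]=\widehat{x}_{k-1\vert k-1}+\Sigma_{k\vert k-1}P_{k\vert k-1}^{-1}(x_k-\widehat{x}_{k\vert k-1})$ and invoking the two identities supplied by the same lemma, $F_k\Sigma_{k\vert k-1}=P_{k\vert k-1}-Q_k$, together with the prediction step $\widehat{x}_{k\vert k-1}=F_k\widehat{x}_{k-1\vert k-1}+G_k u_k$ from (\ref{eq:sol2}), I would show that $F_k\Sigma_{k\vert k-1}P_{k\vert k-1}^{-1}=I-Q_k P_{k\vert k-1}^{-1}$ and hence that the inner bracket collapses to $x_k-F_k\mathbb{E}[x_{k-1}\vert\cdots]-G_k u_k=Q_k P_{k\vert k-1}^{-1}(x_k-\widehat{x}_{k\vert k-1})$. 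Pre-multiplying by $Q_k^{-1}$ cancels $Q_k$ and leaves exactly $P_{k\vert k-1}^{-1}(x_k-\widehat{x}_{k\vert k-1})$. Collecting terms and regrouping the $H_k^{\top}R_k^{-1}$ contributions about $\widehat{x}_{k\vert k-1}$ yields the stated form $\mathcal{S}(x_k,\underline{y}_k\vert\theta_k)=-(H_k^{\top}R_k^{-1}H_k+P_{k\vert k-1}^{-1})(x_k-\widehat{x}_{k\vert k-1})+H_k^{\top}R_k^{-1}(y_k-H_k\widehat{x}_{k\vert k-1})$.

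For the information matrix, since $J_{\xi}(x_k,\underline{y}_k\vert\theta_k)=-\frac{\partial^2\log f(\xi_k\vert\theta_k)}{\partial x_k\partial x_k^{\top}}=-\frac{\partial \mathcal{S}(x_k,\underline{y}_k\vert\theta_k)}{\partial x_k^{\top}}$ and $\widehat{x}_{k\vert k-1}$ does not depend on $x_k$, I would simply differentiate the closed form just obtained; the affine dependence on $x_k$ immediately produces the constant matrix $J_{\xi}(x_k,\underline{y}_k\vert\theta_k)=H_k^{\top}R_k^{-1}H_k+P_{k\vert k-1}^{-1}$. (As a consistency check one could alternatively verify this through the general decomposition (\ref{eq:fisher}), combining $J_z=H_k^{\top}R_k^{-1}H_k+Q_k^{-1}$ from (\ref{eq:infomat}) with the $M_z$ and $\mathcal{S}\mathcal{S}^{\top}$ terms, but the direct differentiation is cleaner.)

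I expect the main obstacle to be the bookkeeping in the second paragraph: correctly chaining the two Lemma \ref{lem:smoothing} relations with the prediction identity so that the $Q_k$ factors cancel against $Q_k^{-1}$ and the $P_{k\vert k-1}^{-1}$ term emerges. Everything else is linear algebra, but this cancellation is where a sign slip or a misplaced factor would derail the derivation, so I would carry those substitutions out term by term before regrouping.
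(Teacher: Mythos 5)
Your proposal is correct, and its first half coincides with the paper's own argument: Appendix F derives the score exactly as you do, by applying identity (\ref{eq:mainidentity}) to the complete-data score, substituting the smoother of Lemma \ref{lem:smoothing} for $x_{k-1}$, and collapsing the bracket via $F_k\widehat{x}_{k-1\vert k-1}=\widehat{x}_{k\vert k-1}-G_k u_k$ and $F_k\Sigma_{k\vert k-1}=P_{k\vert k-1}-Q_k$; your intermediate cancellation $x_k-F_k\mathbb{E}[x_{k-1}\vert x_k,\underline{y}_k,\theta_k]-G_k u_k=Q_kP_{k\vert k-1}^{-1}(x_k-\widehat{x}_{k\vert k-1})$ is a clean way to package the same algebra. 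Where you genuinely diverge is the information matrix. The paper's primary derivation of $J_{\xi}$ goes through the general Louis-type formula (\ref{eq:fisher}): it computes $\mathcal{S}\mathcal{S}^{\top}-M_z$ by a lengthy conditional-covariance decomposition (adding and subtracting $\mathbb{E}[x_{k-1}\vert x_k,\underline{y}_{k-1},\theta_k]$, invoking orthogonality and Lemma \ref{lem:smoothing} again), arrives at $P_{k\vert k-1}^{-1}-Q_k^{-1}$, and adds $J_z=H_k^{\top}R_k^{-1}H_k+Q_k^{-1}$; only in its closing line does it remark that the same answer follows by differentiating $\mathcal{S}$ directly, which is precisely your route. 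Your shortcut is legitimate — by Theorem \ref{theo:main} the closed form you obtained \emph{is} the incomplete-data score as a function of $x_k$, it is affine in $x_k$, and $\widehat{x}_{k\vert k-1}$, $P_{k\vert k-1}$ depend only on $\underline{y}_{k-1}$ and $\theta_k$, so $J_{\xi}=-\partial\mathcal{S}/\partial x_k^{\top}$ gives the constant matrix at once (note your sign convention here is the correct one; the paper's final remark drops the minus sign, evidently a typo). What the paper's longer computation buys, and yours does not, is an independent verification of the general formula (\ref{eq:fisher}) in the linear-Gaussian case, together with the intermediate identity $M_z-\mathcal{S}\mathcal{S}^{\top}=Q_k^{-1}-P_{k\vert k-1}^{-1}$, which makes the loss-of-information inequality (\ref{eq:infoloss}) explicit; your argument is more economical but exercises none of that machinery.
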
  
\begin{proof}
See Appendix F for details of the proof.
\end{proof}
From (\ref{eq:sol2}), it follows that $P_{k\vert k-1}> Q_k>0$, or equivalently $Q_k^{-1}>P_{k\vert k-1}^{-1}>0$ which by (\ref{eq:infomat}) results in $J_{z}(x_k,\underline{y}_k\vert\theta_k)>J_{\xi}(x_k,\underline{y}_k\vert\theta_k)$. The latter verifies the loss-of-information inequality (\ref{eq:infoloss}), and certainly the inequality $I_{z}(x_k,\underline{y}_k\vert\theta_k)>I_{\xi}(x_k,\underline{y}_k\vert\theta_k)>0.$ Furthermore, we can show that the two matrix identities (\ref{eq:fishermatrix}) and (\ref{eq:fishermatrix2}) hold.
\begin{lemma}\label{lem:fishermatrix2}
For a given system parameter $\theta_k$, 
\begin{align*}
\mathbb{E}\left[\mathcal{S}(x_k,\underline{y}_k\vert\theta_k)\mathcal{S}^{\top}(x_k,\underline{y}_k\vert\theta_k)\big\vert\theta_k\right]=&H_k^{\top}R_k^{-1}H_k + P_{k\vert k-1}^{-1},\\[4pt]
\mathbb{E}\left[\frac{\partial \log f(z_k\vert\theta_k)}{\partial x_k}\frac{\partial \log f(z_k\vert\theta_k)}{\partial x_k^{\top}}\big\vert\theta_k\right]=&H_k^{\top} R_k^{-1} H_k + Q_k^{-1}.
\end{align*}
\end{lemma}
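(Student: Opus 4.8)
The plan is to verify each identity by a direct second-moment computation, exploiting the fact that in the linear Gaussian model both score functions collapse to affine combinations of the independent, mean-zero noise vectors $v_k$ and $w_k$, whose covariances $Q_k$ and $R_k$ are known.

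First I would treat the second identity. Substituting the constraints $v_k=x_k-F_k x_{k-1}-G_k u_k$ and $w_k=y_k-H_k x_k$ into the complete-data score listed just above (\ref{eq:infomat}) rewrites it compactly as
\[
\frac{\partial \log f(z_k\vert\theta_k)}{\partial x_k}=H_k^{\top}R_k^{-1}w_k-Q_k^{-1}v_k .
\]
Since $v_k\sim N(0,Q_k)$ and $w_k\sim N(0,R_k)$ are independent and mean-zero, forming the outer product and taking $\mathbb{E}[\bullet\vert\theta_k]$ annihilates the two cross terms, leaving $H_k^{\top}R_k^{-1}\mathbb{E}[w_k w_k^{\top}]R_k^{-1}H_k+Q_k^{-1}\mathbb{E}[v_k v_k^{\top}]Q_k^{-1}=H_k^{\top}R_k^{-1}H_k+Q_k^{-1}$, which is the claim.

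For the first identity I would start from the incomplete-data score of Proposition \ref{prop:propscore}, set $e_k:=x_k-\widehat{x}_{k\vert k-1}$, and use $y_k-H_k\widehat{x}_{k\vert k-1}=H_k e_k+w_k$ to collapse the expression to
\[
\mathcal{S}(x_k,\underline{y}_k\vert\theta_k)=-P_{k\vert k-1}^{-1}e_k+H_k^{\top}R_k^{-1}w_k,
\]
the $H_k^{\top}R_k^{-1}H_k e_k$ contributions cancelling. The a priori error $e_k$ is mean-zero with covariance $P_{k\vert k-1}$ by definition of $P_{k\vert k-1}$, while $w_k$ is independent of $(x_k,\underline{y}_{k-1})$ and hence of $e_k$; so once more the cross terms vanish and $\mathbb{E}[\mathcal{S}\mathcal{S}^{\top}\vert\theta_k]=P_{k\vert k-1}^{-1}P_{k\vert k-1}P_{k\vert k-1}^{-1}+H_k^{\top}R_k^{-1}R_k R_k^{-1}H_k=P_{k\vert k-1}^{-1}+H_k^{\top}R_k^{-1}H_k$.

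The main obstacle is the careful justification of the cross-term cancellations: I must confirm that $w_k$ is independent of the a priori error $e_k$ (which follows because $w_k$ is independent of $x_k$ and of $\underline{y}_{k-1}$, hence of $\widehat{x}_{k\vert k-1}$), and that $e_k$ is genuinely mean-zero with covariance exactly $P_{k\vert k-1}$. As a consistency check I would note that these are precisely (\ref{eq:fishermatrix}) and (\ref{eq:fishermatrix2}) specialized to the linear model: here $J_{\xi}$ and $J_z$ are the nonrandom matrices of Proposition \ref{prop:propscore} and (\ref{eq:infomat}), so that $I_{\xi}=\mathbb{E}[J_{\xi}\vert\theta_k]=J_{\xi}$ and $\mathbb{E}[J_z\vert\theta_k]=J_z$, matching the computed second moments.
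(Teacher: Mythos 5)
Your proof is correct, but it takes a genuinely different route from the paper's Appendix G. For the first identity, both you and the paper start from the score in Proposition \ref{prop:propscore}, but the paper keeps the innovation $\overline{y}_k=y_k-H_k\widehat{x}_{k\vert k-1}$ as a primitive and expands the full quadratic form, invoking the covariance identities $\mathbb{E}[(x_k-\widehat{x}_{k\vert k-1})\overline{y}_k^{\top}\vert\theta_k]=P_{k\vert k-1}H_k^{\top}$ and $\mathbb{E}[\overline{y}_k\overline{y}_k^{\top}\vert\theta_k]=H_kP_{k\vert k-1}H_k^{\top}+R_k$, after which several matrix products cancel; you instead substitute $\overline{y}_k=H_k e_k+w_k$ so that the score collapses to $-P_{k\vert k-1}^{-1}e_k+H_k^{\top}R_k^{-1}w_k$, a sum of two independent mean-zero vectors, and the second moment is immediate. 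For the second identity the divergence is larger: the paper does not compute it directly at all, but deduces it from the general formula (\ref{eq:fisher}) combined with Proposition \ref{prop:propscore} and the complete-data matrix (\ref{eq:infomat}), whereas you compute it from scratch by writing the complete-data score as $H_k^{\top}R_k^{-1}w_k-Q_k^{-1}v_k$ and using independence of $v_k$ and $w_k$. Your version is shorter, more elementary, and makes the structure transparent (each score is an orthogonal decomposition into noise-driven terms); its only extra obligation is the justification you correctly flag, that $w_k$ is independent of the a priori error $e_k$ (it is, since $w_k\perp(x_k,\underline{y}_{k-1})$ and $\widehat{x}_{k\vert k-1}$ is a function of $\underline{y}_{k-1}$), and that $\mathbb{E}[e_k\vert\theta_k]=0$ with $\mathbb{E}[e_ke_k^{\top}\vert\theta_k]=P_{k\vert k-1}$ by definition. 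The paper's version, by contrast, stays inside the innovations framework of Section \ref{sec:sec2} and exercises the general incomplete-data identity (\ref{eq:fisher}), which serves as an internal consistency check of the general theory against the linear Gaussian case.
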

\begin{proof}
See Appendix G for details of the proof.
\end{proof}

\subsubsection{Kalman filter as fully efficient ML estimator}
Setting the score function $\mathcal{S}(x_k,\underline{y}_k \vert \theta_k)$ to zero followed by premultiplying both sides by $\big(H_k^{\top} R_k^{-1}H_k + P_{k\vert k-1}^{-1}\big)^{-1}$ yields
\begin{align*}
\widehat{x}_k=\widehat{x}_{k\vert k-1} + \big(H_k^{\top} R_k^{-1}H_k + P_{k\vert k-1}^{-1}\big)^{-1}H_k^{\top}R_k^{-1}\overline{y}_k,
\end{align*}
which by the result below leads to the claim that $\widehat{x}_k=\widehat{x}_{k\vert k}$.

\begin{lemma}\label{lem:inverse}
If the matrix $H_k P_{k\vert k-1} H_k^{\top}+R_k$ is invertible,
\begin{align}\label{eq:woodbury}
&\big(H_k^{\top} R_k^{-1}H_k + P_{k\vert k-1}^{-1}\big)^{-1} \\
&\hspace{0cm}=
P_{k\vert k-1} - P_{k\vert k-1} H_k^{\top}\big(H_k P_{k\vert k-1} H_k^{\top} + R_k\big)^{-1} H_k P_{k\vert k-1} \nonumber,
\end{align}
from which the following identity holds
\begin{align*}
&\big(H_k^{\top} R_k^{-1}H_k + P_{k\vert k-1}^{-1}\big)^{-1}H_k^{\top} R_k^{-1} \\
&\hspace{1.5cm}= P_{k\vert k-1} H_k^{\top}\big(H_k P_{k\vert k-1} H_k^{\top} + R_k\big)^{-1} \nonumber,
\end{align*}
\end{lemma}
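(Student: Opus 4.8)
The plan is to recognise (\ref{eq:woodbury}) as an instance of the matrix inversion lemma (Woodbury identity) and to verify it by direct multiplication rather than by invoking an external reference. All the inverses appearing are legitimate: $P_{k\vert k-1}^{-1}$ exists because $P_{k\vert k-1}$ is positive definite (Section \ref{sec:mmse}), $R_k^{-1}$ exists since $R_k$ is a nonsingular covariance, and $S:=H_k P_{k\vert k-1}H_k^{\top}+R_k$ is invertible by hypothesis; moreover $H_k^{\top}R_k^{-1}H_k+P_{k\vert k-1}^{-1}$ is positive definite and hence invertible, so it suffices to exhibit a right inverse. First I would abbreviate $S=H_k P_{k\vert k-1}H_k^{\top}+R_k$ and multiply $H_k^{\top}R_k^{-1}H_k+P_{k\vert k-1}^{-1}$ on the right by the claimed inverse $P_{k\vert k-1}-P_{k\vert k-1}H_k^{\top}S^{-1}H_kP_{k\vert k-1}$.

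The key algebraic step that makes the cross terms collapse is the factorisation
\begin{align*}
H_k^{\top}R_k^{-1}H_k P_{k\vert k-1}H_k^{\top}+H_k^{\top}=H_k^{\top}R_k^{-1}\big(H_k P_{k\vert k-1}H_k^{\top}+R_k\big)=H_k^{\top}R_k^{-1}S.
\end{align*}
Expanding the product and using this identity yields
\begin{align*}
H_k^{\top}R_k^{-1}H_kP_{k\vert k-1}+I-H_k^{\top}R_k^{-1}S\,S^{-1}H_kP_{k\vert k-1},
\end{align*}
whose last term simplifies to $H_k^{\top}R_k^{-1}H_kP_{k\vert k-1}$ and cancels the first, leaving $I$. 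This establishes (\ref{eq:woodbury}).

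For the second identity I would postmultiply (\ref{eq:woodbury}) by $H_k^{\top}R_k^{-1}$ and factor $P_{k\vert k-1}H_k^{\top}$ out on the left, so that the task reduces to showing the bracketed factor $R_k^{-1}-S^{-1}H_kP_{k\vert k-1}H_k^{\top}R_k^{-1}$ equals $S^{-1}$. Substituting $H_kP_{k\vert k-1}H_k^{\top}=S-R_k$, the bracket becomes $R_k^{-1}-S^{-1}(S-R_k)R_k^{-1}=R_k^{-1}-R_k^{-1}+S^{-1}=S^{-1}$, which gives $P_{k\vert k-1}H_k^{\top}S^{-1}$ and completes the proof.

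I would expect no serious obstacle here, since the statement is the standard matrix inversion lemma; the only care required is bookkeeping of the factorisations and tracking where the invertibility of $S$, $R_k$ and $P_{k\vert k-1}$ is actually used. The mildest subtlety is simply to record that $P_{k\vert k-1}$ is invertible so that $H_k^{\top}R_k^{-1}H_k+P_{k\vert k-1}^{-1}$ is well defined, which follows from the positive definiteness noted in Section \ref{sec:mmse}.
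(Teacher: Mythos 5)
Your proposal is correct and follows essentially the same route as the paper's Appendix H: both verify (\ref{eq:woodbury}) by multiplying $\big(H_k^{\top}R_k^{-1}H_k+P_{k\vert k-1}^{-1}\big)$ against the claimed inverse and collapsing the cross terms via the factorisation $H_k^{\top}+H_k^{\top}R_k^{-1}H_kP_{k\vert k-1}H_k^{\top}=H_k^{\top}R_k^{-1}\big(H_kP_{k\vert k-1}H_k^{\top}+R_k\big)$, then obtain the second identity from the first. Your write-up merely adds two details the paper leaves implicit — the justification that a right inverse suffices, and the explicit substitution $H_kP_{k\vert k-1}H_k^{\top}=S-R_k$ in the second identity — both of which are fine.
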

\begin{proof} See Appendix H for details of the proof.
\end{proof}

From the covariance matrix $P_{k\vert k}$ (\ref{eq:kalmancovmat}) of Kalman filter $\widehat{x}_{k\vert k}$, we notice from (\ref{eq:woodbury}) and Proposition \ref{prop:propscore} that $\widehat{x}_{k\vert k}$ is a fully efficient state estimator since $P_{k\vert k}=I_{\xi}^{-1}(x_l,\underline{y}_k\vert\theta_k)$. This conclusion is in line with the statement of Theorem \ref{theo:cramer-rao} as
\begin{align*}
\widehat{x}_{k\vert k}- x_k=\big(H_k^{\top} R_k^{-1} H_k + P_{k\vert k-1}^{-1}\big)^{-1}\mathcal{S}(x_k,\underline{y}_k\vert\theta_k);
\end{align*}
that is the estimation error $\widehat{x}_{k\vert k}- x_k$ is given by a matrix multiplicative constant of the score function $\mathcal{S}(x_k,\underline{y}_k\vert\theta_k)$.

\subsubsection{EM-Gradient-Particle filter}
Applying recursive equation (\ref{eq:algo3}) to linear state-space (1), 
\begin{align*}
&\widehat{x}_k^{(\ell+1)}=\widehat{x}_k^{(\ell)} + \left(H_k^{\top}R_k^{-1} H_k + Q_k^{-1}\right)^{-1}\\&\times \sum_{n=1}^N w_{k-1}^n(\widehat{x}_k^{(\ell)}\vert x_{k-1}^n,\theta_k)\Big[-\big(H_k^{\top}R_k^{-1} H_k + Q_k^{-1}\big)x_k^{(\ell)} \\
&+ H_k^{\top} R_k^{-1} y_k + Q_k^{-1}\big(F_k x_{k-1}^n + G_ku_k\big)  \Big] \\
=&  \left(H_k^{\top}R_k^{-1} H_k + Q_k^{-1}\right)^{-1}\Big[H_k^{\top}R_k^{-1} y_k \\
&+ Q_k^{-1}\Big(G_ku_k+ F_k \sum_{n=1}^n  x_{k-1}^ n w_{k-1}^n(\widehat{x}_k^{(\ell)}\vert x_{k-1}^n,\theta_k)\Big)\Big].
\end{align*}
Notice that $f(x_k\vert x_{k-1}^n,\theta_k)$ in $w_{k-1}^n(x_k\vert\theta_k)$ denotes the probability density function of multivariate normal with mean $F_k x_{k-1}^n + G_ku_k$ and covariance matrix $Q_k$. Using matrix inversion similar to (\ref{eq:woodbury}) simplifies the estimate $\widehat{x}_k^{(\ell+1)}$ to,
\begin{align}\label{eq:recmcmc}
&\widehat{x}_k^{(\ell+1)}=F_k \sum_{n=1}^N x_{k-1}^n w_{k-1}^n(\widehat{x}_k^{(\ell)}\vert x_{k-1}^n,\theta_k) \nonumber\\&\hspace{1cm}+ G_ku_k  +Q_k H_k^{\top} \big(H_kQ_k H_k^{\top} + R_k\big)^{-1}\\
\times &\Big[y_k - H_k \Big( F_k \sum_{n=1}^N x_{k-1}^nw_{k-1}^n(\widehat{x}_k^{(\ell)}\vert x_{k-1}^n,\theta_k) + G_ku_k\Big) \Big]. \nonumber
\end{align}
To summarize the application of main results to the linear state-space (1), the Kalman filter (\ref{eq:kalman}) takes slightly different form than the recursive EM-Gradient-Particle estimator (\ref{eq:recmcmc}), although at convergence the two estimators coincide. 

\subsection{Nonlinear state-space models}\label{sec:application2}

This section generalizes iterative scheme (\ref{eq:recmcmc}) for a nonlinear Gaussian state-space models. 

\subsubsection{EM-Gradient-Particle filter}

\begin{theorem}
Consider nonlinear state-space 
\begin{align}\label{eq:nls1}
x_k=F_k(x_{k-1},u_k) + v_k, 
\end{align}
where $v_k\sim N(0,Q_k)$, with linear measurement
\begin{align}\label{eq:nls2}
y_k=H_k x_k + w_k,
\end{align}
and $w_k\sim N(0,R_k)$. Based on this information, we have
\begin{align}\label{eq:recmcmc2}
\widehat{x}_k^{(\ell+1)}=&\sum_{n=1}^N w_{k-1}^n(\widehat{x}_k^{(\ell)}\vert x_{k-1}^n, \theta_k) F_k(x_{k-1}^n,u_k) \nonumber \\&\hspace{0.5cm}+ Q_k H_k^{\top}\left(H_k Q_k H_k^{\top} + R_k\right)^{-1}\\
&\times\Big[ y_k -H_k \sum_{n=1}^N w_{k-1}^n(\widehat{x}_k^{(\ell)}\vert\theta_k) F_k(x_{k-1}^n,u_k)\Big].\nonumber
\end{align}
\end{theorem}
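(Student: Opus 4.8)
The plan is to follow the same route as the linear case treated in Proposition \ref{prop:propscore} and equation (\ref{eq:recmcmc}), but now carrying the nonlinear drift $F_k(x_{k-1},u_k)$ through the computation. Because we only differentiate with respect to $x_k$, by the final remark of Section \ref{sec:filterS} it suffices to work with $f(x_k,\underline{y}_k,x_{k-1}^n\vert\theta_k)\sim f(x_k\vert x_{k-1}^n,\theta_k)f(y_k\vert x_k,\theta_k)$, where from (\ref{eq:nls1})--(\ref{eq:nls2}) the two factors are $N\big(x_k\,\vert\,F_k(x_{k-1}^n,u_k),Q_k\big)$ and $N\big(y_k\,\vert\,H_k x_k,R_k\big)$. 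Taking logarithms and differentiating gives the complete-data score
\begin{equation*}
\frac{\partial \log f(x_k,\underline{y}_k,x_{k-1}^n\vert\theta_k)}{\partial x_k}=-Q_k^{-1}\big(x_k-F_k(x_{k-1}^n,u_k)\big)+H_k^{\top}R_k^{-1}\big(y_k-H_kx_k\big),
\end{equation*}
whose Hessian satisfies $-\partial^2\log f/\partial x_k\partial x_k^{\top}=Q_k^{-1}+H_k^{\top}R_k^{-1}H_k$ and is therefore constant in both $x_k$ and $x_{k-1}^n$. Hence $J_z(x_k,\underline{y}_k\vert\theta_k)=Q_k^{-1}+H_k^{\top}R_k^{-1}H_k$.

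Next I would assemble the particle-filter score via (\ref{eq:filterS}). Using that the weights sum to one, $\sum_{n=1}^N w_{k-1}^n(x_k\vert x_{k-1}^n,\theta_k)=1$, the terms linear in $x_k$ factor out and
\begin{equation*}
\mathcal{S}(x_k,\underline{y}_k\vert\theta_k)=-\big(Q_k^{-1}+H_k^{\top}R_k^{-1}H_k\big)x_k+Q_k^{-1}\overline{F}_k(x_k)+H_k^{\top}R_k^{-1}y_k,
\end{equation*}
where $\overline{F}_k(x_k):=\sum_{n=1}^N w_{k-1}^n(x_k\vert x_{k-1}^n,\theta_k)F_k(x_{k-1}^n,u_k)$ is the weighted mean drift. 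Substituting $x_k=\widehat{x}_k^{(\ell)}$ into the EM-Gradient recursion (\ref{eq:algo3}) with the constant $J_z$ just computed, the coefficient $Q_k^{-1}+H_k^{\top}R_k^{-1}H_k$ of $\widehat{x}_k^{(\ell)}$ in $\mathcal{S}$ equals exactly $J_z$; multiplying through by $J_z^{-1}$ therefore cancels the $\widehat{x}_k^{(\ell)}$ that is added back, leaving
\begin{equation*}
\widehat{x}_k^{(\ell+1)}=\big(Q_k^{-1}+H_k^{\top}R_k^{-1}H_k\big)^{-1}\Big[Q_k^{-1}\overline{F}_k\big(\widehat{x}_k^{(\ell)}\big)+H_k^{\top}R_k^{-1}y_k\Big].
\end{equation*}

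To reach the stated form, I would invoke Lemma \ref{lem:inverse} verbatim with $P_{k\vert k-1}$ replaced by $Q_k$ (the argument in Appendix H uses only invertibility of $H_kQ_kH_k^{\top}+R_k$), which gives $\big(Q_k^{-1}+H_k^{\top}R_k^{-1}H_k\big)^{-1}=(I-KH_k)Q_k$ and $\big(Q_k^{-1}+H_k^{\top}R_k^{-1}H_k\big)^{-1}H_k^{\top}R_k^{-1}=K$, where $K:=Q_kH_k^{\top}(H_kQ_kH_k^{\top}+R_k)^{-1}$. Collecting terms yields $\widehat{x}_k^{(\ell+1)}=(I-KH_k)\overline{F}_k(\widehat{x}_k^{(\ell)})+Ky_k=\overline{F}_k(\widehat{x}_k^{(\ell)})+K\big(y_k-H_k\overline{F}_k(\widehat{x}_k^{(\ell)})\big)$, which is precisely (\ref{eq:recmcmc2}). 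The one point needing care is the clean cancellation of the $\widehat{x}_k^{(\ell)}$ terms: it hinges on the Hessian being constant, so that $J_z$ coincides exactly with the Newton information, a feature special to the linear-measurement, additive-Gaussian-noise structure that would fail for a general nonlinear observation map. I expect this cancellation, together with verifying that Lemma \ref{lem:inverse} transfers unchanged to $Q_k$, to be the only substantive steps; everything else is routine Gaussian differentiation.
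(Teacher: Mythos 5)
Your proposal is correct and follows essentially the same route as the paper's proof: compute the complete-data Gaussian score and constant Hessian, assemble the particle-filter score via (\ref{eq:filterS}), substitute into the EM-Gradient recursion (\ref{eq:algo3}) so the $\widehat{x}_k^{(\ell)}$ terms cancel, and finish with the Woodbury-type inversion of Lemma \ref{lem:inverse} with $P_{k\vert k-1}$ replaced by $Q_k$. You in fact spell out the final matrix-inversion step more explicitly than the paper, which only says ``using matrix inversion formula,'' and your closing remark about the cancellation depending on the Hessian being constant is accurate.
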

\begin{remark}
Notice that the EM-Gradient-Particle filter (\ref{eq:recmcmc}) and (\ref{eq:recmcmc2}) respectively provide an improved EM-algorithm update $\widehat{x}_k^{(\ell+1)}$ for linear state-space (1) and the nonlinear models (\ref{eq:nls1})-(\ref{eq:nls2}) of those presented in eqns (35) and (14) of Theorem 3 in \cite{Ramadan}. 
\end{remark}

\begin{proof}
To derive the recursive estimator (\ref{eq:recmcmc2}), recall that
\begin{align*}
&\log f(x_k,\underline{y}_k,x_{k-1}\vert\theta_k)=-\log(2\pi \vert Q_k\vert) -\log(2\pi \vert R_k\vert)\\
&-\frac{1}{2}\big(x_k-F_k(x_{k-1},u_k)\big)^{\top}Q_k^{-1}\big(x_k-F_k(x_{k-1},u_k)\big)\\
&-\frac{1}{2}\big(y_k-H_kx_k\big)^{\top}R_k^{-1}\big(y_k-H_k x_k\big).
\end{align*}
Using basic rules of matrix derivatives, we obtain
\begin{align*}
\frac{\partial \log f(x_k,\underline{y}_k,x_{k-1}\vert\theta_k)}{\partial x_k}=&-(H_k^{\top}R_k^{-1}H_k+Q_k^{-1})x_k\\
&\hspace{-2cm}+Q_k^{-1}F_k(x_{k-1},u_k) + H_k^{\top}R_k^{-1} y_k,\\[4pt]
-\frac{\partial^2 \log f(x_k,\underline{y}_k,x_{k-1}\vert\theta_k)}{\partial x_k \partial x_k^{\top}}=&(H_k^{\top}R_k^{-1}H_k+Q_k^{-1}).
\end{align*}
Applying (\ref{eq:filterS}), the incomplete-data score function is 
\begin{align*}
&\mathcal{S}(x_k,\underline{y}_k\vert\theta_k)=-(H_k^{\top}R_k^{-1}H_k+Q_k^{-1})x_k+ \Big[H_k^{\top}R_k^{-1} y_k \\
&\hspace{1cm}+ Q_k^{-1}\sum_{n=1}^N w_{k-1}^n(x_k\vert x_{k-1}^n,\theta_k)F_k(x_{k-1}^n,u_k)\Big],
\end{align*}
where we have used the distribution (\ref{eq:measure}) to re-derive an explicit formula of $\mathcal{S}(x_k,\underline{y}_k\vert\theta_k)$ discussed in Section \ref{sec:mcmc1}. Similarly, the observed information matrix $J_z(x_k,\underline{y}_k\vert \theta_k)$ is 
\begin{align*}
J_z(x_k,\underline{y}_k\vert \theta_k) 
=H_k^{\top}R_k^{-1}H_k + Q_k^{-1}.
\end{align*}
Notice that, there is no need to use Monte Carlo particle filtering for the evaluation of information matrix $J_z(x_k,\underline{y}_k\vert \theta_k)$, which is in contrary to $J_{\xi}(x_k,\underline{y}_k\vert\theta_k)$ as we can see below. Thus, the recursive equation (\ref{eq:algo3}) is read as
\begin{align*}
&\widehat{x}_k^{(\ell+1)}=\widehat{x}_k^{(\ell)}+J_z^{-1}(\widehat{x}_k^{(\ell)},\underline{y}_k\vert \theta_k)\mathcal{S}(\widehat{x}_k^{(\ell)},\underline{y}_k\vert\theta_k)\\
=&\left(H_k^{\top}R_k^{-1}H_k + Q_k^{-1}\right)^{-1}\Big[H_k^{\top}R_k^{-1} y_k\\
&+Q_k^{-1}\sum_{n=1}^N w_{k-1}^n(\widehat{x}_k^{(\ell)}\vert x_{k-1}^n,\theta_k)F_k(x_{k-1}^n,u_k)\Big],
\end{align*}
which using matrix inversion formula leads to (\ref{eq:recmcmc2}).
\end{proof}

\subsubsection{Covariance matrix of estimation error}
Following the identity (\ref{eq:fisher}) and the expressions of Section \ref{sec:filterS}, the observed information matrix $J_{\xi}(x_k,\underline{y}_k\vert\theta_k)$ is
\begin{align*}
&J_{\xi}(x_k,\underline{y}_k\vert\theta_k)=(H_k^{\top} R_k^{-1}H_k + Q_k^{-1})\\
&\hspace{1cm}-\Big(\sum_{n=1}^N \frac{\partial w_{k-1}^n (x_k\vert x_{k-1}^n,\theta_k)}{\partial x_k}\Big) y_k^{\top} R_k^{-1}H_k\\
&\hspace{1cm}- \sum_{n=1}^N \frac{\partial w_{k-1}^n (x_k\vert x_{k-1}^n,\theta_k)}{\partial x_k} F_k^{\top}(x_{k-1}^n,u_k)Q_k^{-1},
\end{align*}
where partial derivative $ \frac{\partial w_{k-1}^n (x_k\vert x_{k-1}^n,\theta_k)}{\partial x_k}$ is given by
\begin{align*}
& \frac{\partial w_{k-1}^n (x_k\vert x_{k-1}^n,\theta_k)}{\partial x_k}\\
&\hspace{1cm}=w_{k-1}^n (x_k\vert x_{k-1}^n,\theta_k)\Big[\frac{\partial \log f(x_k\vert x_{k-1}^n,\theta_k)}{\partial x_k}\\
&\hspace{1.5cm}-\sum_{m=1}^N w_{k-1}^m (x_k\vert x_{k-1}^m,\theta_k)\frac{\partial \log f(x_k\vert x_{k-1}^m,\theta_k)}{\partial x_k}\Big].
 \end{align*}
Given that $\frac{\partial \log f(x_k\vert x_{k-1}^n,\theta_k)}{\partial x_k}=-Q_k^{-1}\big(x_k-F_k(x_{k-1},u_k)\big)$, the information matrix $J_{\xi}(x_k,\underline{y}_k\vert\theta_k)$ simplifies to
\begin{align*}
&J_{\xi}(x_k,\underline{y}_k\vert\theta_k)=(H_k^{\top} R_k^{-1}H_k + Q_k^{-1})\\
& - Q_k^{-1}\Big(\sum_{n=1}^N w_{k-1}^n(x_k\vert x_{k-1}^n,\theta_k)\big[x_k -F_k(x_{k-1}^n,u_k)\big]\\
&\hspace{2cm}\times \big[x_k^{\top} -F_k^{\top}(x_{k-1}^n,u_k)\big]\Big)Q_k^{-1}\\
&+Q_k^{-1}\Big(\sum_{n=1}^N w_{k-1}^n(x_k\vert x_{k-1}^n,\theta_k)\big[x_k -F_k(x_{k-1}^n,u_k)\big]\Big)\\
&\times \Big(\sum_{m=1}^N w_{k-1}^m(x_k\vert x_{k-1}^m,\theta_k)\big[x_k^{\top} - F_k^{\top}(x_{k-1}^m,u_k)\big]\Big)Q_k^{-1}.
\end{align*}

\section{Numerical examples}\label{sec:numeric}

\subsection{Linear state-space model}

Consider the linear state-space (1) with the following parameter values, used in \cite{Ramadan}:
\begin{align*}
x_k=&\left(
\begin{array}{c}
x_{k,1}\\
x_{k,2}\\
x_{k,3}
\end{array}\right),
\;\;
F_k=\left(
\begin{array}{ccc}
0.66 & -1.31 & -1.11\\
0.07 & 0.73 & -0.06\\
0.00 & 0.08 & 0.80
\end{array}\right),\\
H_k=&\left(
\begin{array}{ccc}
0 & 1 & 1
\end{array}\right),
\;\;
Q_k=\left(
\begin{array}{ccc}
0.2 & 0 & 0\\
0  & 0.3 & 0\\
0 & 0 & 0.5
\end{array}\right),\\
R_k=&0.1, \;\;
\mu=\left(
\begin{array}{c}
0\\
0\\
0
\end{array}\right),
\;\;
P_0=\left(
\begin{array}{ccc}
0.3 & 0 & 0\\
0  & 0.3 & 0\\
0 & 0 & 0.3
\end{array}\right),
\end{align*}
in the absence of control variable $u_k$, i.e., $G_k=\mathbf{0}$. From the structure of observation matrix $H_k$, the state-vector $x_k$, started at a random initial $x_0\sim N(\mu,P_0)$, is only partially observed through $y_k$ in terms of superposition (sum) of $x_{k,2}$ and $x_{k,3}$ subject to a measurement error $w_k\sim N(0,R_k)$. Thus, the degree of uncertainty in estimating the state-vector $x_k$ is relatively high. To estimate the state $x_k$, the EM-Gradient-Particle filter (\ref{eq:recmcmc}) is employed. The result is compared to the Kalman filter (\ref{eq:kalman}) and $\mathbb{E}[x_k\vert \underline{y}_k,\theta_k]$ using the particle filtering algorithm of \cite{Kitagawa93,Kitagawa}. In all numerical computations, \textbf{R} language (2013) was used. 

To get the results, a simulation over $K=100$ time-steps, with $N=2000$ particles, is  performed. They are presented in Figure \ref{fig:kalman}. Despite relatively high degree of uncertainty in the observation of state-vector $x_k$, the estimation results $\widehat{x}_k$ exhibit similar behavior of true state $x_k$ with all the three state estimators (the Kalman filter $\widehat{x}_{k\vert k}$, the particle filtering $\mathbb{E}[x_k\vert\underline{y}_k,\theta_k]$ and the ML estimate $\widehat{x}_k$) coincide.

This observation is in line with the fact that for the linear Gaussian state-space model (1), the Kalman filter $\widehat{x}_{k\vert k}$ is a fully efficient ML estimator for which case $\widehat{x}_{k\vert k}=\widehat{x}_k$.

To compute the covariance matrix $P_k$, the simulation is repeated $M=250$ times (at each time step $k$) and uses $\widehat{I}_{\xi}^{-1}(\widehat{x}_k,\underline{y}_k,\theta_k)$ (\ref{eq:estcovmat}) to derive $\widehat{P}_k$. The matrix estimate $\widehat{P}_k$ (\ref{eq:estcovmat}) is compared to the covariance matrix $P_{k\vert k}$ (\ref{eq:kalmancovmat}) and the recursive covariance matrix estimator $\Omega_k$ (\ref{eq:recOFI}), which is run at 50 iterations for each time-step $k$. Below are the results for two (randomly chosen) time steps $k=\{6,83\}$.
\begin{figure}[tp!]
\centering
\includegraphics[width=.965\linewidth]{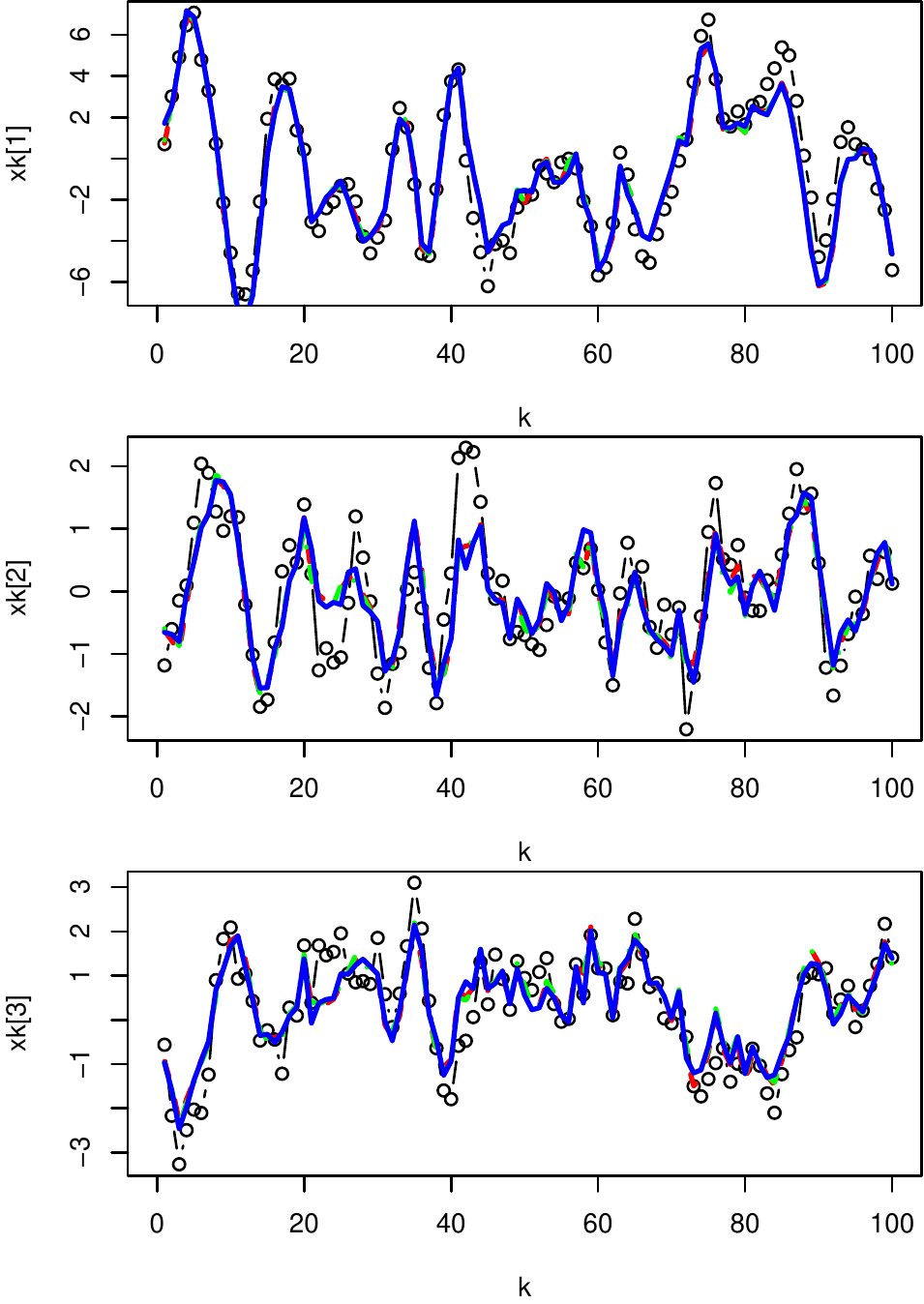}
\caption{Plot of ML estimate $\widehat{x}_k$ based on Kalman filter (\ref{eq:kalman}), particle filter (\ref{eq:kitagawa3}) and EM-Gradient-Particle filter (\ref{eq:recmcmc}). True state $x_k$ is denoted by $-o-$, Kalman filter $\widehat{x}_{k\vert k}$ (dashed), particle filter $\mathbb{E}[x_k\vert \underline{y}_l,\theta_k]$ (dotdashed) and ML estimator $\widehat{x}_k$ (solid). The top graph is for $x_{k,1}$, the second for $x_{k,2}$ and the bottom for $x_{k,3}$. All three estimates coincide. } \label{fig:kalman}
\end{figure}
\begin{enumerate}
\item[(i)] covariance matrices $P_6$, $\Omega_6$ and $P_{6\vert 6}$ are respectively
\begin{align*}
\widehat{P}_6=&\left(
\begin{array}{ccc}
 0.6442 & -0.0720  & 0.0658 \\
 -0.0720 & 0.4443 & -0.4088 \\
 0.0658 & -0.4088 & 0.4629
\end{array}\right),\\[4pt]
\Omega_{6}=&\left(
\begin{array}{ccc}
 0.6442 & -0.0720  & 0.0658\\
 -0.0720 & 0.4443 & -0.4088\\
 0.0658 & -0.4088  &0.4629
\end{array}\right),\\[4pt]
P_{6\vert 6}=&\left(
\begin{array}{ccc}
 0.6448 & -0.0778  & 0.0712 \\
 -0.0778 & 0.4458  & -0.4103 \\
  0.0712 & -0.4103 & 0.4644
\end{array}\right),
\end{align*}
\item[(ii)] covariance matrices $P_{83}$, $\Omega_{83}$, and $P_{83\vert 83}$ are given by
\begin{align*}
\widehat{P}_{83}=&\left(
\begin{array}{ccc}
 0.6591 & -0.0885 & 0.0817\\
 -0.0885 & 0.4531 & -0.4177\\
 0.0817 & -0.4177  & 0.4718
\end{array}\right),\\[4pt]
\Omega_{83}=&\left(
\begin{array}{ccc}
 0.6591 & -0.0885  & 0.0817 \\
 -0.0885 & 0.4531 & -0.4177 \\
 0.0817 & -0.4177 & 0.4718
\end{array}\right),\\[4pt]
P_{83\vert 83}=&\left(
\begin{array}{ccc}
0.6601 & -0.0867 & 0.0801 \\
 -0.0867 & 0.4530 & -0.4175 \\
 0.0801 &-0.4175  &0.4716
\end{array}\right).
\end{align*}
\end{enumerate}

\begin{figure}[tp!]
\centering
\includegraphics[width=.865\linewidth]{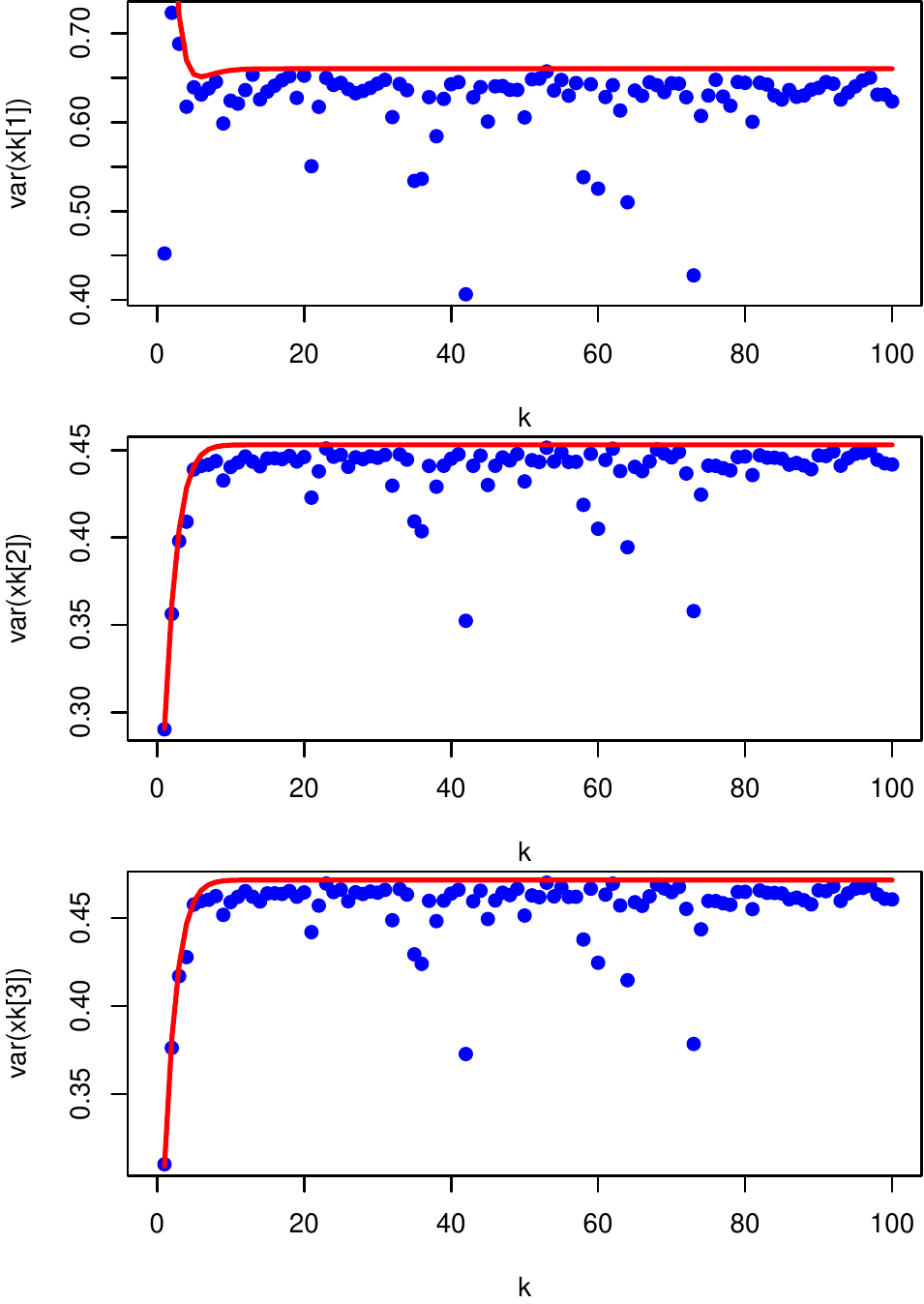}
\caption{Plots of estimated variance $\widehat{\textrm{var}}(\widehat{x}_{k})$, the diagonal element of the covariance matrix $\widehat{P}_k=\widehat{I}_{\xi}(\widehat{x}_k,\underline{y}_k\vert\theta_k)$, and that of obtained from the posterior variance $P_{k\vert k}$ (\ref{eq:kalmancovmat}). } \label{fig:stdev}
\end{figure}

From the results, we notice that recursive covariance matrix estimator $\Omega_k^{\ell}$ (\ref{eq:recOFI}) converges to observed information matrix $\widehat{P}_k$. However, there are some slight variations between two matrices $\widehat{P}_k$ and $P_{k\vert k}$, which may be attributed by a small sampling size $M=250$ for each time step $k$. 

Figures \ref{fig:stdev} and \ref{fig:stdeviter} display plots of estimated variance $\widehat{\textrm{var}}(\widehat{x}_k)$ of the ML estimator $\widehat{x}_k$ given by the diagonal element of the covariance matrices $\widehat{P}_k=\widehat{I}_{\xi}^{-1}(\widehat{x}_k,\underline{y}_k\vert\theta_k)$ and $\Omega_k$ (\ref{eq:recOFI}), respectively. That is $\widehat{\textrm{var}}(\widehat{x}_{k,1})$ is given by the first diagonal element of $\widehat{P}_k$, similarly for  $\widehat{\textrm{var}}(\widehat{x}_{k,2})$ and  $\widehat{\textrm{var}}(\widehat{x}_{k,3})$. These variances are compared with the corresponding element of $P_{k\vert k}$. From the two plots we observe that there are some apparent variations in the variances. But overall, the estimated variance (diagonal element of) $\widehat{P}_k$ shows the convergence to $P_{k\vert k}$ for each $k$, the upper bound of $\widehat{P}_k$.
\begin{figure}[tp!]
\centering
\includegraphics[width=.865\linewidth]{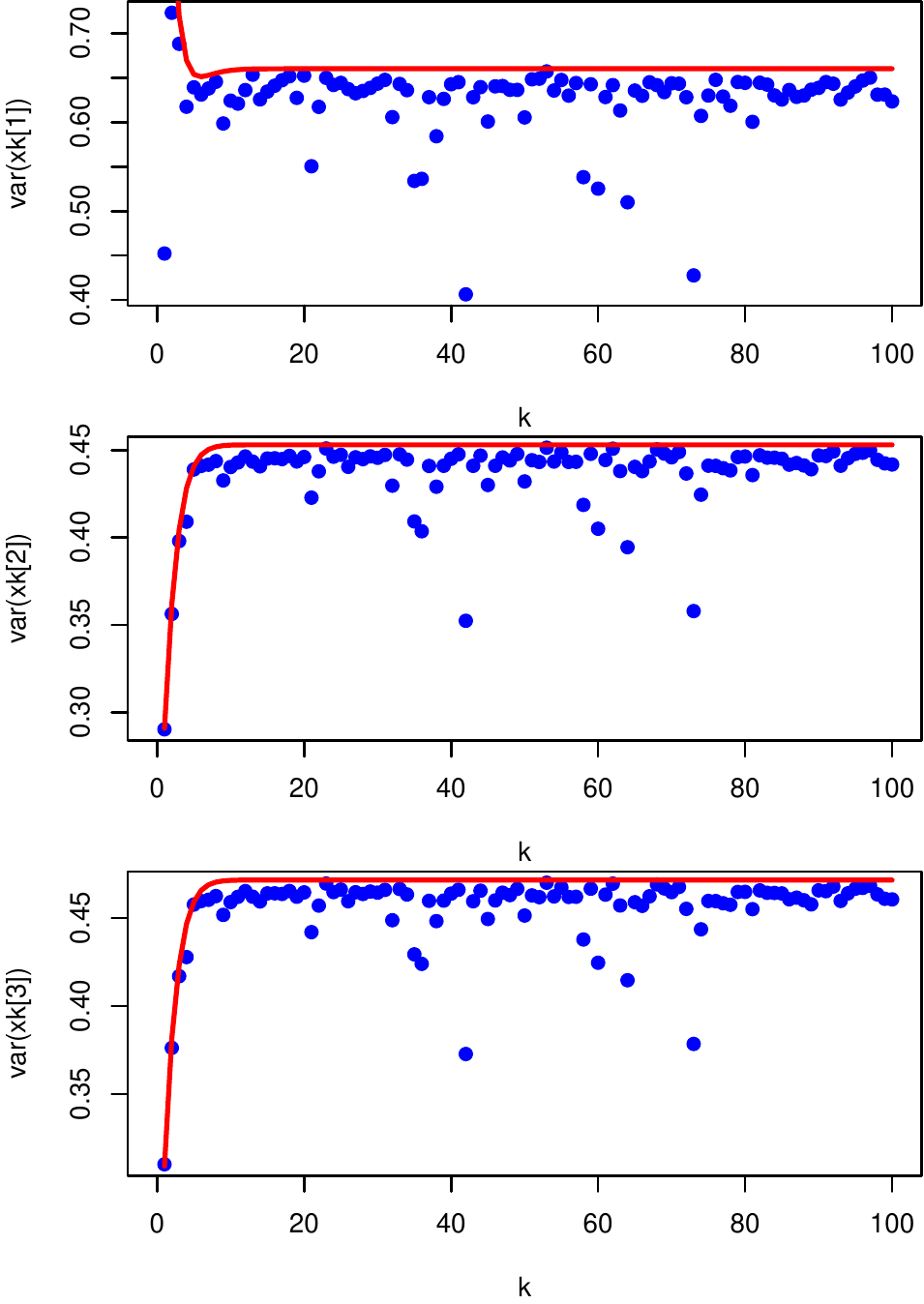}
\caption{Plots of estimated variance $\widehat{\textrm{var}}(\widehat{x}_{k})$, the diagonal element of the recursive covariance estimator $\Omega_k$ (\ref{eq:recOFI}), and that of obtained from the posterior variance $P_{k\vert k}$ (\ref{eq:kalmancovmat}). } \label{fig:stdeviter}
\end{figure}

\subsection{Nonlinear state-space model}

In this example, the recursive EM-Gradient-Particle filter (\ref{eq:recmcmc2}) is applied to a nonlinear state-space model
\begin{equation}\label{eq:NLSP}
\begin{split}
x_k=& f_k \tanh(\pi x_{k-1}) + v_k \\
y_k=& \frac{1}{2} x_k + w_k
\end{split}
\end{equation}
with $f_k=(1+0.5\sin(2\pi k/20))$, $v_k\sim N(0,0.2)$, $w_k\sim N(0,1)$ and $x_0\sim N(0,1)$. This is the same model considered in \cite{Ramadan}. By (\ref{eq:recmcmc2}), we have
\begin{align*}
&\widehat{x}_k^{(\ell+1)}= \frac{2}{21}\big[ y_k + 10\sum_{n=1}^N w_{k-1}^n(\widehat{x}_k^{(\ell)}\vert x_{k-1}^n,\theta_k) f_k \tanh(\pi x_{k-1}^n)\big].\nonumber
\end{align*}
%

\begin{figure}[tp!]
\centering
\subfigure[MLE $\widehat{x}_k^{\infty}$ (and $95\%$ confidence interval) and particle filter $\widehat{x}_{k\vert k}$]{\includegraphics[width=0.495\textwidth]{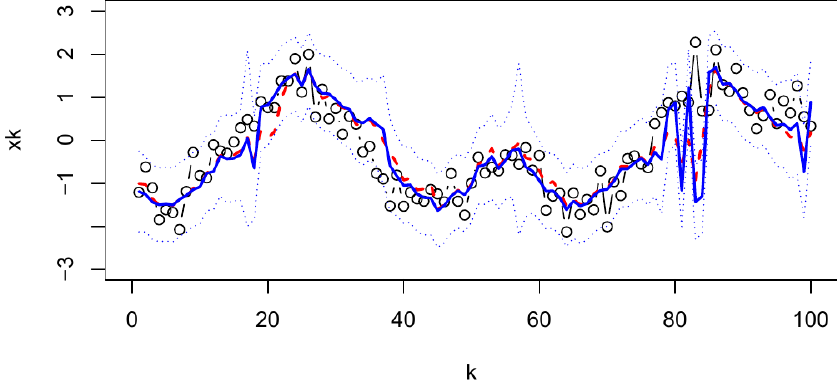}} \\
\subfigure[Sample variance and estimated variance of MLE $\widehat{x}_k^{\infty}$]{\includegraphics[width=0.45\textwidth]{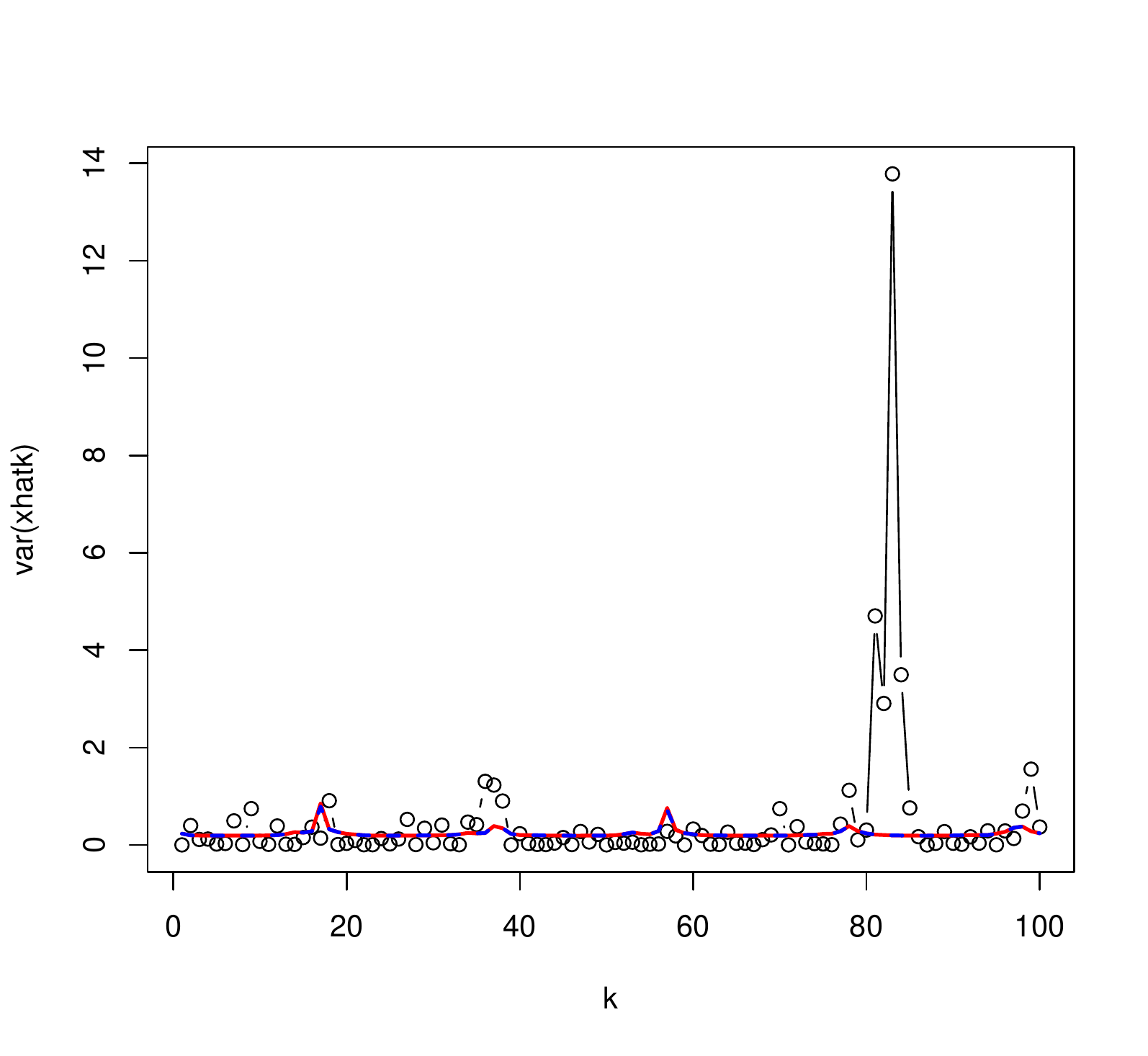}}
\caption{ The top graph displays plots of estimates in the nonlinear state-space (\ref{eq:NLSP}) based on particle filter (\ref{eq:kitagawa3}) and EM-Gradient-Particle filter (\ref{eq:recmcmc}). True state $x_k$ is denoted by $-o-$, particle filter $\mathbb{E}[x_k\vert \underline{y}_l,\theta_k]$ (dashed) and ML estimator $\widehat{x}_k$ (solid). The dotted line represents the $95\%$ confidence interval.  The bottom graph exhibits the plots of sample variance $s_{\widehat{x}_k}^2=\frac{1}{M}\sum_{m=1}^M (\widehat{x}_{k,m}^{\infty}-x_k)^2$ (denoted by $-o-$) against the inverse of $\widehat{I}_{\xi}^{-1}(\widehat{x}_k^0,\underline{y}_k\vert\theta_k)$ (\ref{eq:estcovmat}) (solid line) and recursive variance estimator $\Omega_k$ (dashed) using (\ref{eq:recOFI}).  } \label{fig:NLfilter}
\end{figure}

\vspace{-1.1cm}

Notice that the obtained equation is slightly different from that of presented in eqn.(37) in \cite{Ramadan}. For the estimation of state $x_k$, the above recursive equation is employed on the basis of $K=100$ time steps and $N=2000$ sampling particles. The estimation results are presented in Figure \ref{fig:NLfilter}. From the top graph we observe that the ML estimator $\widehat{x}_k$ has more (likelihood) adherence to the true state $x_k$ compared to the particle filter $\widehat{x}_{k\vert k}=\mathbb{E}[x_k\vert \underline{y}_k,\theta_k]$. Despite some irregularities in sample variance $s_{\widehat{x}_k}^2=\frac{1}{M}\sum_{m=1}^M (\widehat{x}_{k,m}^{\infty}-x_k)^2$, overall estimates $\widehat{\sigma}_{\widehat{x}_k}^2$ of the variance derived from the inverse of information $\widehat{P}_k=\widehat{I}_{\xi}^{-1}(\widehat{x}_k^0,\underline{y}_k\vert\theta_k)$ and the recursive variance estimator $\Omega_k$ (\ref{eq:recOFI}) are relatively close to the sample variance $s_{\widehat{x}_k}^2$. 

Furthermore, Figure \ref{fig:NLfilter} shows that the $95\%$ confidence interval $(\widehat{x}_k\pm 1.96\times \widehat{\sigma}_{\widehat{x}_k})$ of the state estimation contains all true states $x_k$, with only exception at time-step $k=83$.

\section{Concluding remarks}\label{sec:conclusion}

This paper developed a new approach for maximum likelihood recursive state estimation. The novelty of the approach is central to formulating $(x_k,\underline{y}_k)$ as an incomplete information of the systems, which is slightly different from the existing methods. Such formulation allows one applying statistical analysis of incomplete data, developed primarily for parameter estimation (see e.g.,\cite{Dempster} and \cite{Little}), to obtain maximum likelihood estimate of the state-vector $x_k$. The ML estimate is given recursively by the EM-Gradient-Particle filtering algorithm in terms of the score function and observed information matrices of the incomplete observations $(x_k,\underline{y}_k)$. The score function and information matrices are evaluated using the particle filtering developed in \cite{Kitagawa,Kitagawa93}. This approach can be used to numerically evaluate the Cram\'er-Rao lower bound. The latter is needed to derive an estimate of covariance matrix of estimation errors. Numerical study verifies the main results of this paper. The new approach can be applied for maximum likelihood recursive state estimation in general state-space models. 

%


\section{Appendix}

\subsection*{\textbf{A}: Derivation of information matrix (\ref{eq:fisher})}
By taking derivative w.r.t $x_k$ directly to identity (\ref{eq:mainidentity}),
\begin{align*}
&\frac{\partial^2 \log f(\xi_k\vert\theta_k)}{\partial x_k \partial x_k^{\top}}=\frac{\partial}{\partial x_k}\Big(\frac{\partial \log f(x_k,\underline{y}_k\vert\theta_k)}{\partial x_k^{\top}}\Big)\\
=&\frac{\partial}{\partial x_k}\mathbb{E}\Big[\frac{\partial \log f(x_k,\underline{y}_k,x_{k-1}\vert\theta_k)}{\partial x_k^{\top}}\Big\vert x_k,\underline{y}_k,\theta_k\Big]\\
=&\frac{\partial}{\partial x_k}\int_{\mathcal{X}_k} \frac{\partial \log f(x_k,\underline{y}_k,x_{k-1}\vert\theta_k)}{\partial x_k^{\top}} \\
&\hspace{2cm}\times f(x_{k-1}\vert x_k,\underline{y}_k,\theta_k) d\lambda(x_{k-1})\\
=&\int_{\mathcal{X}_k} \frac{\partial^2 \log f(x_k,\underline{y}_k,x_{k-1}\vert\theta_k)}{\partial x_k \partial x_k^{\top}} \\
&\hspace{2cm}\times f(x_{k-1}\vert x_k,\underline{y}_k,\theta_k) d\lambda(x_{k-1})\\
&+ \int_{\mathcal{X}_k} \frac{\partial \log f(x_{k-1} \vert x_k,\underline{y}_k,\theta_k)}{\partial x_k} \frac{\partial \log f(x_k,\underline{y}_k,x_{k-1}\vert\theta_k)}{\partial x_k^{\top}}\\
&\hspace{2cm}\times f(x_{k-1}\vert x_k,\underline{y}_k,\theta_k) d\lambda(x_{k-1})\\
=&\mathbb{E}\Big[ \frac{\partial^2 \log f(x_k,\underline{y}_k,x_{k-1}\vert\theta_k)}{\partial x_k \partial x_k^{\top}} \Big\vert x_k,\underline{y}_k,\theta_k\Big]\\
&+ \int_{\mathcal{X}_k} \frac{\partial \log f(x_k,\underline{y}_k,x_{k-1}\vert \theta_k)}{\partial x_k} \frac{\partial \log f(x_k,\underline{y}_k,x_{k-1}\vert\theta_k)}{\partial x_k^{\top}}\\
&\hspace{2cm}\times f(x_{k-1}\vert x_k,\underline{y}_k,\theta_k) d\lambda(x_{k-1})\\
&- \frac{\partial \log f(x_k,\underline{y}_k\vert \theta_k)}{\partial x_k}\int_{\mathcal{X}_k}  \frac{\partial \log f(x_k,\underline{y}_k,x_{k-1}\vert\theta_k)}{\partial x_k^{\top}}\\
&\hspace{2cm}\times f(x_{k-1}\vert x_k,\underline{y}_k,\theta_k) d\lambda(x_{k-1}),
\end{align*}
where in the last equality we have used the fact that 
\begin{align*}
 \frac{\partial \log f(x_{k-1} \vert x_k,\underline{y}_k, \theta_k)}{\partial x_k}=& \frac{\partial \log f(x_k,\underline{y}_k,x_{k-1}\vert \theta_k)}{\partial x_k}\\
 &-\frac{\partial \log f(x_k,\underline{y}_k\vert \theta_k)}{\partial x_k}.
\end{align*}
The proof is complete on account of identity (\ref{eq:mainidentity}). $\square$

\subsection*{\textbf{B}: Proof of Theorem \ref{theo:recOFI}}

Following Theorem \ref{theo:infoloss}, the observed information matrices $J_{\xi}(\widehat{x}_k,\underline{y}_k\vert \theta_k)$ and $J_{z}(\widehat{x}_k,\underline{y}_k\vert \theta_k)$ are positive definite with $J_{z}(\widehat{x}_k,\underline{y}_k\vert \theta_k)>J_{\xi}(\widehat{x}_k,\underline{y}_k\vert \theta_k)>0$ which by Theorem 7.2.1 on p.438 of \cite{Horn}, $I> J_{z}^{-1}(\widehat{x}_k,\underline{y}_k\vert \theta_k)J_{\xi}(\widehat{x}_k,\underline{y}_k\vert \theta_k)>0$. Therefore, all eigenvalues of $I-J_{z}^{-1}(\widehat{x}_k,\underline{y}_k\vert \theta_k)J_{\xi}(\widehat{x}_k,\underline{y}_k\vert \theta_k)$ are positive and strictly less than one. See Corollary 1.3.4 in \cite{Horn}. By Corollary 5.6.16 of \cite{Horn},
\begin{align*}
\Omega_k=&\big[I-J_{z}^{-1}(\widehat{x}_k,\underline{y}_k\vert \theta_k)\big(J_{z}(\widehat{x}_k,\underline{y}_k\vert \theta_k)-J_{\xi}(\widehat{x}_k,\underline{y}_k\vert \theta_k)\big)\big]^{-1}\\
&\hspace{0.5cm}\times J_{z}^{-1}(\widehat{x}_k,\underline{y}_k\vert \theta_k)\\
&\hspace{-1cm}=\left(\sum_{\ell=0}^{\infty}\big[I- J_{z}^{-1}(\widehat{x}_k,\underline{y}_k\vert \theta_k)J_{\xi}(\widehat{x}_k,\underline{y}_k\vert \theta_k)\big]^{\ell}\right)^{-1}J_{z}^{-1}(\widehat{x}_k,\underline{y}_k\vert \theta_k).
\end{align*}
Since all eigenvalues of $I-J_{z}^{-1}(\widehat{x}_k,\underline{y}_k\vert \theta_k)J_{\xi}(\widehat{x}_k,\underline{y}_k\vert \theta_k)$ are positive and strictly less than one, it follows that
\begin{align*}
\Omega^{\ell+1} - \Omega=&\big[I- J_{z}^{-1}(\widehat{x}_k,\underline{y}_k\vert \theta_k)J_{\xi}(\widehat{x}_k,\underline{y}_k\vert \theta_k)\big]\Omega^{\ell}\\
&+J_{z}^{-1}(\widehat{x}_k,\underline{y}_k\vert \theta_k)-\Omega\\
=&\big[I- J_{z}^{-1}(\widehat{x}_k,\underline{y}_k\vert \theta_k)J_{\xi}(\widehat{x}_k,\underline{y}_k\vert \theta_k)\big]\big(\Omega^{\ell}-\Omega\big),
\end{align*}
which converges to (matrix) zero as $\ell\rightarrow \infty$ with root of convergence factor given by the maximum absolute eigenvalues of $I- J_{z}^{-1}(\widehat{x}_k,\underline{y}_k\vert \theta_k)J_{\xi}(\widehat{x}_k,\underline{y}_k\vert \theta_k)$. Thus, the matrix sequence $\{\Omega^{\ell}\}_{\ell\geq 1}$ converges to $\Omega=J_{\xi}^{-1}(\widehat{x}_k,\underline{y}_k\vert \theta_k)$ as $\ell\rightarrow \infty$. In fact the convergence is monotone since
\begin{align*}
\Omega^{\ell+1}-\Omega^{\ell}=&\big[I- J_{z}^{-1}(\widehat{x}_k,\underline{y}_k\vert \theta_k)J_{\xi}(\widehat{x}_k,\underline{y}_k\vert \theta_k)\big]\big(\Omega^{\ell}-\Omega^{\ell-1}\big)\\
&\hspace{-1cm}=\big[I- J_{z}^{-1}(\widehat{x}_k,\underline{y}_k\vert \theta_k)J_{\xi}(\widehat{x}_k,\underline{y}_k\vert \theta_k)\big]^{\ell} J_{z}^{-1}(\widehat{x}_k,\underline{y}_k\vert \theta_k),
\end{align*}
which is positive definite by Corollary 7.7.4(a) of \cite{Horn} and the information-loss inequality (\ref{eq:ordering}). $\square$

\subsection*{\textbf{C}: Proof of Proposition \ref{prop:propass2}}
The proof for score function is established directly as
\begin{align*}
&\mathbb{E}\Big[\frac{\partial \log f(x_k^0\vert\theta_k)}{\partial x_k}\big\vert \theta_k\Big]
=\int_{\mathcal{X}_k} \frac{\partial f(x_k^0\vert\theta_k)}{\partial x_k} dx_k^0:=\Delta f(x_k^0)\big\vert_{x_k^0\in \partial\mathcal{X}_k} \\
&=\left(
\begin{array}{c}
\int f(x_{k,1}^0=+\infty, x_{k,2}^0,\ldots,x_{k,p}^0) dx_{k,2}^0\ldots dx_{k,p}^0\\
\int f(x_{k,1}^0, x_{k,2}^0=+\infty,\ldots,x_{k,p}^0) dx_{k,1}^0dx_{k,3}^0\ldots dx_{k,p}^0\\
\vdots\\
\int f(x_{k,1}^0, x_{k,2}^0,\ldots,x_{k,p}^0=+\infty)dx_{k,1}^0\dots x_{k,p-1}^0
\end{array}\right)\\[3pt]
&-\left(
\begin{array}{c}
\int f(x_{k,1}^0=-\infty, x_{k,2}^0,\ldots,x_{k,p}^0) dx_{k,2}^0\ldots dx_{k,p}^0\\
\int f(x_{k,1}^0, x_{k,2}^0=-\infty,\ldots,x_{k,p}^0) dx_{k,1}^0dx_{k,3}^0\ldots dx_{k,p}^0\\
\vdots\\
\int f(x_{k,1}^0, x_{k,2}^0,\ldots,x_{k,p}^0=-\infty)dx_{k,1}^0\dots x_{k,p-1}^0
\end{array}\right),
\end{align*}
which completes the proof of the first claim by (\ref{eq:ass3B}). To prove the second claim, the following identity is required
\begin{align*}
&\frac{\partial^2 \log f(x_k\vert\theta_k)}{\partial x_k \partial x_k^{\top}}=\frac{1}{f(x_k\vert\theta_k)}\frac{\partial^2 f(x_k\vert\theta_k)}{\partial x_k \partial x_k^{\top}} \\
&\hspace{2cm}-\frac{\partial \log f(x_k\vert \theta_k)}{\partial x_k}\frac{\partial \log f(x_k\vert \theta_k)}{\partial x_k^{\top}},
\end{align*}
which is derived using the chain rule. Thus,
\begin{align*}
&\mathbb{E}\Big[\frac{\partial^2 \log f(x_k^0\vert\theta_k)}{\partial x_k \partial x_k^{\top}}\big\vert \theta_k\Big]=\int_{\mathcal{X}_k} \frac{\partial^2 \log f(x_k^0\vert\theta_k)}{\partial x_k \partial x_k^{\top}} f(x_k^0\vert \theta_k) dx_k^0\\
&=\int_{\mathcal{X}_k} \frac{\partial^2 f(x_k\vert\theta_k)}{\partial x_k \partial x_k^{\top}} dx_k^0 - \mathbb{E}\Big[\frac{\partial \log f(x_k\vert \theta_k)}{\partial x_k}\frac{\partial \log f(x_k\vert \theta_k)}{\partial x_k^{\top}}\big\vert \theta_k\Big]\\
&=\Delta \frac{\partial f(x_k\vert\theta_k)}{\partial x_k}\Big\vert _{x_k\in\partial \mathcal{X}_k} - \mathbb{E}\Big[\frac{\partial \log f(x_k\vert \theta_k)}{\partial x_k}\frac{\partial \log f(x_k\vert \theta_k)}{\partial x_k^{\top}}\big\vert \theta_k\Big],
\end{align*}
where $\Delta \frac{\partial f(x_k\vert\theta_k)}{\partial x_k}\Big\vert _{x_k\in\partial \mathcal{X}_k} $ is defined similarly to that of $\Delta f(x_k^0\vert\theta_k)\in \partial \mathcal{X}_k$, which by (\ref{eq:ass3B}) completes the proof. $\square$

\subsection*{\textbf{D}: Proof of Proposition \ref{prop:fishermatrix}}
By taking derivative w.r.t $x_k^0$ on (\ref{eq:ass3B}), we obtain
\begin{align*}
&\frac{\partial^2 \log f(x_k^0\vert \theta_k)}{\partial x_k \partial x_k^{\top}}=\frac{\partial}{\partial x_k^0} \mathbb{E}\Big[\frac{\partial \log f(x_k^0, \underline{y}_k  \vert\theta_k)}{\partial x_k^{\top}}\Big\vert x_k^0,\theta_k\Big]\\
=&\frac{\partial}{\partial x_k^0} \int  \frac{\partial \log f(x_k^0, \underline{y}_k  \vert\theta_k)}{\partial x_k^{\top}} f(\underline{y}_k\vert x_k^0,\theta_k) d\lambda(\underline{y}_k)\\
=& \int  \frac{\partial^2 \log f(x_k^0, \underline{y}_k  \vert\theta_k)}{\partial x_k \partial x_k^{\top}} f(\underline{y}_k\vert x_k^0,\theta_k) d\lambda(\underline{y}_k)\\
&+  \int  \frac{\partial f(\underline{y}_k\vert x_k^0,\theta_k)}{\partial x_k^0}\frac{\partial \log f(x_k^0, \underline{y}_k  \vert\theta_k)}{\partial x_k^{\top}}  d\lambda(\underline{y}_k)\\
=& \int  \frac{\partial^2 \log f(x_k^0, \underline{y}_k  \vert\theta_k)}{\partial x_k \partial x_k^{\top}} f(\underline{y}_k\vert x_k^0,\theta_k) d\lambda(\underline{y}_k)\\
&+  \int \Big( \frac{\partial \log f(x_k^0,\underline{y}_k\vert \theta_k)}{\partial x_k} - \frac{\partial \log f(x_k^0\vert \theta_k)}{\partial x_k}\Big) \\
&\hspace{1cm}\times \frac{ \partial \log f(x_k^0, \underline{y}_k  \vert\theta_k)}{\partial x_k^{\top}} f(\underline{y}_k\vert x_k^0,\theta_k)  d\lambda(\underline{y}_k)\\
=& \int  \frac{\partial^2 \log f(x_k^0, \underline{y}_k  \vert\theta_k)}{\partial x_k \partial x_k^{\top}} f(\underline{y}_k\vert x_k^0,\theta_k) d\lambda(\underline{y}_k)\\
&+  \int \Big( \frac{\partial \log f(x_k^0,\underline{y}_k\vert \theta_k)}{\partial x_k} - \frac{\partial \log f(x_k^0\vert \theta_k)}{\partial x_k}\Big) \\
&\hspace{1cm}\times \frac{ \partial\log f(x_k^0, \underline{y}_k  \vert\theta_k)}{\partial x_k^{\top}} f(\underline{y}_k\vert x_k^0,\theta_k)  d\lambda(\underline{y}_k)\\
=& \int  \frac{\partial^2 \log f(x_k^0, \underline{y}_k  \vert\theta_k)}{\partial x_k \partial x_k^{\top}} f(\underline{y}_k\vert x_k^0,\theta_k) d\lambda(\underline{y}_k)\\
&\hspace{-0.5cm}+  \int \frac{\partial \log f(x_k^0,\underline{y}_k\vert \theta_k)}{\partial x_k} \frac{ \partial\log f(x_k^0, \underline{y}_k  \vert\theta_k)}{\partial x_k^{\top}} f(\underline{y}_k\vert x_k^0,\theta_k)  d\lambda(\underline{y}_k)\\
& \hspace{-0.5cm}- \frac{\partial \log f(x_k^0\vert \theta_k)}{\partial x_k}  \int  \frac{ \partial\log f(x_k^0, \underline{y}_k  \vert\theta_k)}{\partial x_k^{\top}} f(\underline{y}_k\vert x_k^0,\theta_k)  d\lambda(\underline{y}_k)\\
=&\mathbb{E}\Big[ \frac{\partial^2 \log f(x_k^0, \underline{y}_k  \vert\theta_k)}{\partial x_k \partial x_k^{\top}}\Big\vert x_k^0,\theta_k\Big]\\
&+\mathbb{E}\Big[   \frac{\partial \log f(x_k^0,\underline{y}_k\vert \theta_k)}{\partial x_k} \frac{ \partial\log f(x_k^0, \underline{y}_k  \vert\theta_k)}{\partial x_k^{\top}}   \Big\vert x_k^0,\theta_k\Big]\\
&- \frac{\partial \log f(x_k^0\vert \theta_k)}{\partial x_k}\mathbb{E}\Big[ \frac{ \partial\log f(x_k^0, \underline{y}_k  \vert\theta_k)}{\partial x_k^{\top}}\Big\vert x_k^0,\theta_k\Big],
\end{align*}
leading to (\ref{eq:fishermatrix}) on account of the identity (\ref{eq:mainidentityB}), Proposition \ref{prop:propass2} and by taking expectation $\mathbb{E}[\bullet\vert\theta_k]$ on both sides. $\square$

\subsection*{\textbf{E}: Proof of Lemma \ref{lem:smoothing}}

To establish the claim, recall following identity (\ref{eq:bayes2}) that 
\begin{align*}
\mathbb{E}\big[x_{k-1}\vert x_k,\underline{y}_k,\theta_k\big]=&\int x_{k-1} f(x_{k-1}\vert x_k,\underline{y}_k,\theta_k)d\lambda (x_{k-1})\\
=&\int x_{k-1} f(x_{k-1}\vert x_k,\underline{y}_{k-1},\theta_k)d\lambda (x_{k-1})\\
=&\mathbb{E}\big[x_{k-1}\vert x_k,\underline{y}_{k-1},\theta_k\big]
\end{align*}
The proof follows on account of (\ref{eq:solofx}) that $x_{k-1}$, $x_k$ and $\underline{y}_{k-1}$ are jointly Gaussian. Using innovation approach of \cite{Kailath} and Theorem 3.3 in Ch. 7 of \cite{Astrom},
\begin{align*}
&\mathbb{E}\big[x_{k-1} \big\vert x_k,\underline{y}_{k-1},\theta_k\big]=\mathbb{E}\big[x_{k-1} \big\vert x_k-\widehat{x}_{k\vert k-1}, \underline{y}_{k-1},\theta_k\big]\\
&= \mathbb{E}[x_{k-1}\vert \underline{y}_{k-1},\theta_k] +  \mathbb{E}[x_{k-1}\vert x_k-\widehat{x}_{k\vert k-1},\theta_k] -\mathbb{E}[x_{k-1}\vert \theta_k]\\
&= \widehat{x}_{k-1\vert k-1} +  \mathbb{E}[x_{k-1}\vert x_k-\widehat{x}_{k\vert k-1},\theta_k] -\mathbb{E}[x_{k-1}\vert \theta_k].
\end{align*}
By applying Theorem 3.2 in Ch.7 of \cite{Astrom}, 
\begin{align*}
 &\mathbb{E}[x_{k-1}\vert x_k-\widehat{x}_{k\vert k-1},\theta_k] =\mathbb{E}[x_{k-1}\vert\theta_k] \\
&\hspace{0.5cm}+ \mathrm{Cov}(x_{k-1}, x_k-\widehat{x}_{k\vert k-1}\vert \theta_k)\mathrm{Var}^{-1} \big(x_k-\widehat{x}_{k\vert k-1}\vert \theta_k\big)\\
&\hspace{1.5cm}\times (x_k-\widehat{x}_{k\vert k-1}),
\end{align*}
where the covariance matrix simplifies to
\begin{align*}
& \mathrm{Cov}(x_{k-1}, x_k-\widehat{x}_{k\vert k-1}\vert \theta_k)=\mathbb{E}\big[x_{k-1}\big(x_k-\widehat{x}_{k\vert k-1}\big)^{\top}\vert \theta_k\big]\\
&=\mathbb{E}\big[\big(x_{k-1}-\widehat{x}_{k-1\vert k-1}\big)\big(x_k-\widehat{x}_{k\vert k-1}\big)^{\top}\vert \theta_k\big]\\
&\hspace{1cm}+ \mathbb{E}\big[\widehat{x}_{k-1\vert k-1}\big(x_k-\widehat{x}_{k\vert k-1}\big)^{\top}\vert \theta_k\big]=\Sigma_{k\vert k-1},
\end{align*}
since $ \mathbb{E}\big[\widehat{x}_{k-1\vert k-1}\big(x_k-\widehat{x}_{k\vert k-1}\big)^{\top}\vert \theta_k\big]=0$. Thus,
\begin{align*}
&\mathbb{E}[x_{k-1}\vert x_k, \underline{y}_{k-1},\theta_k]\\
&\hspace{1cm}=\widehat{x}_{k-1\vert k-1} + \Sigma_{k\vert k-1} P_{k\vert k-1}^{-1}(x_k-\widehat{x}_{k\vert k-1}).
\end{align*}
It remains to derive the recursive equation for the matrix $$\Sigma_{k+1\vert k}\equiv\mathbb{E}\big[(x_k - \widehat{x}_{k\vert k})(x_{k+1}-\widehat{x}_{k+1\vert k})^{\top}\big\vert \theta_k\big].$$ For this purpose, one can show using (1) that
\begin{align*}
x_k -\widehat{x}_{k\vert k}=&F_k(x_{k-1} - \widehat{x}_{k-1\vert k-1}) + v_k - K_k \overline{y}_k.
\end{align*}
see (\ref{eq:kalman}) and (\ref{eq:eq1a}). Also, by similar arguments used before, we derive using Theorems 3.2 and 3.3 in \cite{Astrom}, 
\begin{align*}
\widehat{x}_{k+1\vert k}\equiv& \mathbb{E}[ x_{k+1}\vert \underline{y}_k,\theta_k]\\=&F_{k+1} \widehat{x}_{k\vert k-1} + G_{k+1} u_{k+1}\\
+&F_{k+1} P_{k\vert k-1} H_k^{\top}(H_k P_{k\vert k-1} H_k^{\top} + R_k )^{-1}\overline{y}_k.
\end{align*}
Thus, a priori estimation error at $k+1$ reads as
\begin{align*}
x_{k+1}-\widehat{x}_{k+1\vert k}=F_{k+1}(x_k-\widehat{x}_{k\vert k-1}) - F_{k+1}K_k \overline{y}_k + v_{k+1}.
\end{align*}
Furthermore, after some calculations one can show that 
\begin{align*}
\mathbb{E}\big[(x_{k-1} - \widehat{x}_{k\vert k-1})\overline{y}_k^{\top}\vert \theta_k\big]=& \Sigma_{k\vert k-1} H_k^{\top},\\
\mathbb{E}\big[\overline{y}_k (x_k -\widehat{x}_{k\vert k-1})^{\top} \vert \theta_k\big]=&H_k P_{k\vert k-1},\\
\mathbb{E}\big[v_k(x_k-\widehat{x}_{k\vert k-1})^{\top}\vert \theta_k\big]=& Q_k,\\
\mathbb{E}\big[v_k \overline{y}_k^{\top}  \vert \theta_k\big]=&Q_k H_k^{\top}.
\end{align*}
Using the above results, we obtain by independence of $\{v_k\}$
\begin{align*}
\Sigma_{k+1\vert k}=&F_k \mathbb{E}\big[(x_{k-1}-\widehat{x}_{k-1\vert k-1})(x_k-\widehat{x}_{k\vert k-1})^{\top}\vert \theta_k\big] F_{k+1}^{\top}\\
&- F_k \mathbb{E}\big[(x_{k-1}-\widehat{x}_{k-1\vert k-1})\overline{y}_k^{\top}\vert \theta_k\big]K_k^{\top} F_{k+1}^{\top}\\
& -K_k \mathbb{E}\big[\overline{y}_k (x_k -\widehat{x}_{k\vert k-1})^{\top} \vert \theta_k\big]F_{k+1}^{\top}\\
&+ K_k \mathbb{E}\big[\overline{y}_k \overline{y}_k^{\top} \vert \theta_k\big] K_k^{\top} F_{k+1}^{\top}\\
&+\mathbb{E}\big[v_k(x_k- \widehat{x}_{k\vert k-1})^{\top}\vert \theta_k\big] F_{k+1}^{\top}\\
&-\mathbb{E}\big[v_k \overline{y}_k^{\top}  \vert \theta_k\big]K_k^{\top} F_{k+1}^{\top}.
\end{align*}
The proof is complete on account that $\mathbb{E}\big[\overline{y}_k \overline{y}_k^{\top} \vert \theta_k\big] = H_k P_{k\vert k-1} H_k^{\top} + R_k$ and $F_k \Sigma_{k\vert k-1} = P_{k\vert k-1} - Q_k$. $\square$

\subsection*{\textbf{F}: Proof of Proposition \ref{prop:propscore}}
Using the identity (\ref{eq:mainidentity}) and the result of Lemma \ref{lem:smoothing},
\begin{align*}
&\frac{\partial f(x_k,\underline{y}_k\vert \theta_k)}{\partial x_k}=\mathbb{E}\Big[\frac{\partial \log f(z_k\vert \theta_k)}{\partial x_k}\Big\vert x_k,\underline{y}_k,\theta_k\Big]\\
&=-H_k^{\top}R_k^{-1}\big(H_k x_k -y_k\big)\\
&\hspace{0cm}-Q_k^{-1}\big(x_k-F_k \mathbb{E}\big[x_{k-1}\big\vert x_k,\underline{y}_k,\theta_k\big] -G_k u_k\big)\\
&=-H_k^{\top}R_k^{-1}\big(H_k x_k -y_k\big)-Q_k^{-1}\big(x_k-G_k u_k\big) \\
&\hspace{0cm}+Q_k^{-1}\big[F_k \widehat{x}_{k-1\vert k-1} +F_k\Sigma_{k\vert k-1}P_{k\vert k-1}^{-1}(x_k-\widehat{x}_{k\vert k-1})\big]\\
&=-(H_k^{\top}R_k^{-1}H_k + P_{k\vert k-1}^{-1})(x_k -\widehat{x}_{k\vert k-1}) + H_k^{\top} R_k^{-1} \overline{y}_k,
\end{align*}
where the last equality was obtained on account that
\begin{align*}
F_k\widehat{x}_{k-1\vert k-1}=& \widehat{x}_{k\vert k-1}- G_k u_k\\
F_k\Sigma_{k\vert k-1}=& P_{k\vert k-1}- Q_k.
\end{align*}

The observed information matrix $J_{\xi}(x_k,\underline{y}_k\vert\theta_k)$ is derived using (\ref{eq:fisher}). After a rather long calculations, we obtain
\begin{align*}
&\mathcal{S}(x_k,\underline{y}_k\vert\theta_k)\mathcal{S}^{\top}(x_k,\underline{y}_k\vert\theta_k) \\
&\hspace{1cm}-\mathbb{E}\Big[\frac{\partial \log f(z_k \vert \theta_k)}{\partial x_k} \frac{\partial \log f(z_k \vert \theta_k)}{\partial x_k^{\top}}    \Big\vert \xi_k,\theta_k \Big] \\
&\hspace{1cm}=(P_{k\vert k-1}^{-1} -Q_k^{-1})(x_k-\widehat{x}_{k\vert k-1})\\
&\hspace{2cm}\times (x_k-\widehat{x}_{k\vert k-1})^{\top}(P_{k\vert k-1}^{-1} -Q_k^{-1})\\
&\hspace{1cm}-Q_k^{-1}F_k\mathbb{E}\Big[(x_{k-1}-\widehat{x}_{k-1\vert k-1})\\
&\hspace{2cm}\times(x_{k-1}-\widehat{x}_{k-1\vert k-1})^{\top}\big\vert x_k,\underline{y}_k,\theta_k\Big]F_k^{\top}Q_k^{-1}.
\end{align*}
It remains to evaluate the conditional expectation. However, as it involves only the random variable $x_{k-1}$, then following the Bayes formula (\ref{eq:bayes2}) the expectation is equal after by adding and subtracting $\mathbb{E}[x_{k-1}\vert x_k,\underline{y}_{k-1},\theta_k]$ to
\begin{align*}
&\mathbb{E}\Big[(x_{k-1}-\widehat{x}_{k-1\vert k-1})(x_{k-1}-\widehat{x}_{k-1\vert k-1})^{\top}\big\vert x_k,\underline{y}_{k-1},\theta_k\Big]\\
&=\mathbb{E}\Big[\big(x_{k-1}-\mathbb{E}[x_{k-1}\vert x_k,\underline{y}_{k-1},\theta_k]\big)\\
&\hspace{1cm}\times \big(x_{k-1}-\mathbb{E}[x_{k-1}\vert x_k,\underline{y}_{k-1},\theta_k]\big)^{\top}\big\vert x_k,\underline{y}_{k-1},\theta_k\Big]\\
&+\mathbb{E}\Big[\big(\mathbb{E}[x_{k-1}\vert x_k,\underline{y}_{k-1},\theta_k] - \widehat{x}_{k-1\vert k-1}\big)\\
&\hspace{1cm}\times \big(\mathbb{E}[x_{k-1}\vert x_k,\underline{y}_{k-1},\theta_k] - \widehat{x}_{k-1\vert k-1}\big)^{\top}\big\vert x_k,\underline{y}_{k-1},\theta_k\Big]\\
&=\mathbb{E}\Big[\big(x_{k-1}-\mathbb{E}[x_{k-1}\vert x_k,\underline{y}_{k-1},\theta_k]\big)\\
&\hspace{1cm}\times \big(x_{k-1}-\mathbb{E}[x_{k-1}\vert x_k,\underline{y}_{k-1},\theta_k]\big)^{\top}\big\vert x_k,\underline{y}_{k-1},\theta_k\Big]\\
&+\Sigma_{k\vert k-1}P_{k\vert k-1}^{-1}\big(x_k-\widehat{x}_{k\vert k-1}\big)\big(x_k-\widehat{x}_{k\vert k-1}\big)^{\top}P_{k\vert k-1} \Sigma_{k\vert k-1}^{\top},
\end{align*}
where the last equality was due to Lemma \ref{lem:smoothing}. Therefore,
\begin{align*}
&Q_k^{-1}F_k\mathbb{E}\Big[(x_{k-1}-\widehat{x}_{k-1\vert k-1})\\
&\hspace{1.5cm}\times(x_{k-1}-\widehat{x}_{k-1\vert k-1})^{\top}\big\vert x_k,\underline{y}_k,\theta_k\Big]F_k^{\top}Q_k^{-1}\\
&=Q_k^{-1}F_k\mathbb{E}\Big[\big(x_{k-1}-\mathbb{E}[x_{k-1}\vert x_k,\underline{y}_{k-1},\theta_k]\big)\\
&\hspace{0.5cm}\times \big(x_{k-1}-\mathbb{E}[x_{k-1}\vert x_k,\underline{y}_{k-1},\theta_k]\big)^{\top}\big\vert x_k,\underline{y}_{k-1},\theta_k\Big]F_k^{\top}Q_k^{-1}\\
&+ Q_k^{-1}F_k\Sigma_{k\vert k-1}P_{k\vert k-1}^{-1}\big(x_k-\widehat{x}_{k\vert k-1}\big)\\
&\hspace{1cm}\times \big(x_k-\widehat{x}_{k\vert k-1}\big)^{\top}P_{k\vert k-1} \Sigma_{k\vert k-1}^{\top}F_k^{\top}Q_k^{-1}\\
&=Q_k^{-1}F_k\mathbb{E}\Big[\big(x_{k-1}-\mathbb{E}[x_{k-1}\vert x_k,\underline{y}_{k-1},\theta_k]\big)\\
&\hspace{0.5cm}\times \big(x_{k-1}-\mathbb{E}[x_{k-1}\vert x_k,\underline{y}_{k-1},\theta_k]\big)^{\top}\big\vert x_k,\underline{y}_{k-1},\theta_k\Big]F_k^{\top}Q_k^{-1}\\
&\hspace{1cm}+(P_{k\vert k-1}^{-1}-Q_k^{-1})\big(x_k-\widehat{x}_{k\vert k-1}\big)\\
&\hspace{3cm}\times \big(x_k-\widehat{x}_{k\vert k-1}\big)^{\top}(P_{k\vert k-1}^{-1}-Q_k^{-1}).
\end{align*}
Thus, incorporating this result we have
\begin{align*}
&\mathcal{S}(x_k,\underline{y}_k\vert\theta_k)\mathcal{S}^{\top}(x_k,\underline{y}_k\vert\theta_k) \\
&\hspace{1cm}-\mathbb{E}\Big[\frac{\partial \log f(z_k \vert \theta_k)}{\partial x_k} \frac{\partial \log f(z_k \vert \theta_k)}{\partial x_k^{\top}}    \Big\vert \xi_k,\theta_k \Big] \\
&=-Q_k^{-1}F_k\mathbb{E}\Big[\big(x_{k-1}-\mathbb{E}[x_{k-1}\vert x_k,\underline{y}_{k-1},\theta_k]\big)\\
&\hspace{0.25cm}\times \big(x_{k-1}-\mathbb{E}[x_{k-1}\vert x_k,\underline{y}_{k-1},\theta_k]\big)^{\top}\big\vert x_k,\underline{y}_{k-1},\theta_k\Big]F_k^{\top}Q_k^{-1}\\
&=-Q_k^{-1}F_k\mathbb{E}\Big[\big(x_{k-1}-\mathbb{E}[x_{k-1}\vert x_k,\underline{y}_{k-1},\theta_k]\big)\\
&\hspace{0.25cm}\times \big(x_{k-1}-\mathbb{E}[x_{k-1}\vert x_k,\underline{y}_{k-1},\theta_k]\big)^{\top}\big\vert\theta_k\Big]F_k^{\top}Q_k^{-1},
\end{align*}
where in the last equality we used the fact that the vector $x_{k-1}-\mathbb{E}[x_{k-1}\vert x_k,\underline{y}_{k-1},\theta_k]$ is orthogonal to any vector in the space spanned by vector $(x_k,\underline{y}_{k-1})$. By Lemma \ref{lem:smoothing},
\begin{align*}
&\mathbb{E}\Big[\big(x_{k-1}-\mathbb{E}[x_{k-1}\vert x_k,\underline{y}_{k-1},\theta_k]\big)\\
&\hspace{1cm}\times \big(x_{k-1}-\mathbb{E}[x_{k-1}\vert x_k,\underline{y}_{k-1},\theta_k]\big)^{\top}\big\vert\theta_k\Big]\\
&=\mathbb{E}\Big[\Big(x_{k-1}-\widehat{x}_{k-1\vert k-1} -\Sigma_{k\vert k-1} P_{k\vert k-1}^{-1}\big(x_k-\widehat{x}_{k\vert k-1}\big)\Big)\\
&\times \Big(x_{k-1}-\widehat{x}_{k-1\vert k-1} -\Sigma_{k\vert k-1} P_{k\vert k-1}^{-1}\big(x_k-\widehat{x}_{k\vert k-1}\big)\Big)^{\top}\big\vert \theta_k\Big]\\
&=\mathbb{E}\big[(x_{k-1}-\widehat{x}_{k-1\vert k-1})(x_{k-1}-\widehat{x}_{k-1\vert k-1})^{\top}\vert\theta_k\big]\\
&-\mathbb{E}\Big[(x_{k-1}-\widehat{x}_{k-1\vert k-1})(x_k-\widehat{x}_{k\vert k-1})^{\top}\big\vert \theta_k\Big]P_{k\vert k-1}^{-1}\Sigma_{k\vert k-1}^{\top}\\
&-\Sigma_{k\vert k-1}P_{k\vert k-1}^{-1}\mathbb{E}\Big[(x_k-\widehat{x}_{k\vert k-1})(x_{k-1}-\widehat{x}_{k-1\vert k-1})^{\top}\big\vert \theta_k\Big]\\
&\hspace{0.5cm}+\Sigma_{k\vert k-1}P_{k\vert k-1}^{-1}\mathbb{E}\Big[(x_k-\widehat{x}_{k\vert k-1})\\&\hspace{3cm}\times(x_{k}-\widehat{x}_{k\vert k-1})^{\top}\big\vert \theta_k\Big]P_{k\vert k-1}^{-1}\Sigma_{k\vert k-1}^{\top}\\
&=P_{k-1\vert k-1} - \Sigma_{k\vert k-1} P_{k\vert k-1}^{-1}\Sigma_{k\vert k-1}^{\top}.
\end{align*}
Finally, 
\begin{align*}
&\mathcal{S}(x_k,\underline{y}_k\vert\theta_k)\mathcal{S}^{\top}(x_k,\underline{y}_k\vert\theta_k) \\
&\hspace{1cm}-\mathbb{E}\Big[\frac{\partial \log f(z_k \vert \theta_k)}{\partial x_k} \frac{\partial \log f(z_k \vert \theta_k)}{\partial x_k^{\top}}    \Big\vert \xi_k,\theta_k \Big] \\
&=-Q_k^{-1}F_k\Big(P_{k-1\vert k-1} - \Sigma_{k\vert k-1} P_{k\vert k-1}^{-1}\Sigma_{k\vert k-1}^{\top}\Big)F_k^{\top} Q_k^{-1}\\
&=-Q_k^{-1}F_kP_{k-1\vert k-1}F_k^{\top} Q_k^{-1} +Q_k^{-1}F_k \Sigma_{k\vert k-1} P_{k\vert k-1}^{-1}F_k^{\top} Q_k^{-1}\\
&=-Q_k^{-1}\big(P_{k\vert k-1}-Q_k)Q_k^{-1} \\
&\hspace{1cm}+Q_k^{-1}\big(P_{k\vert k-1}-Q_k)P_{k\vert k-1}^{-1}\big(P_{k\vert k-1}-Q_k)Q_k^{-1}\\
&=P_{k\vert k-1}^{-1}-Q_k^{-1}.
\end{align*}
The proof is complete on account that $J_z(x_k,\underline{y}_k\vert\theta_k)=H_k^{\top}R_k^{-1}H_k + Q_k^{-1},$ see (\ref{eq:infomat}). The result equals to the one derived by directly taking derivative of $\mathcal{S}(x_k,\underline{y}_k\vert\theta_k)$, i.e.,
\begin{eqnarray*}
J_{\xi}(x_k,\underline{y}_k\vert\theta_k)=\frac{\partial \mathcal{S}(x_k,\underline{y}_k\vert\theta_k)}{\partial x_k}=H_k^{\top}R_k^{-1}H_k + P_{k\vert k-1}^{-1}. \square
\end{eqnarray*}

\subsection*{\textbf{G}: Proof of Lemma \ref{lem:fishermatrix2}}
Using the expression of score function $\mathcal{S}(x_k,\underline{y}_k\vert\theta_k)$,
\begin{align*}
&\mathbb{E}\big[\mathcal{S}(x_k,\underline{y}_k\vert\theta_k)\mathcal{S}^{\top}(x_k,\underline{y}_k\vert\theta_k)\big\vert\theta_k\big]\\
&\hspace{-0.25cm}=\mathbb{E}\Big[\Big(-(H_k^{\top}R_k^{-1}H_k + P_{k\vert k-1}^{-1})(x_k -\widehat{x}_{k\vert k-1}) + H_k^{\top}R_k^{-1}\overline{y}_k  \Big) \\
&\hspace{-0.25cm}\times \Big(-(H_k^{\top}R_k^{-1}H_k + P_{k\vert k-1}^{-1})(x_k -\widehat{x}_{k\vert k-1}) + H_k^{\top}R_k^{-1}\overline{y}_k  \Big)^{\top}   \Big\vert \theta_k\Big]\\
&\hspace{-0.25cm}=(H_k^{\top}R_k^{-1}H_k + P_{k\vert k-1}^{-1})\mathbb{E}\big[(x_k-\widehat{x}_{k\vert k-1})(x_k-\widehat{x}_{k\vert k-1})^{\top}\vert\theta_k\big]\\
&\hspace{0.25cm}\times (H_k^{\top}R_k^{-1}H_k + P_{k\vert k-1}^{-1})-(H_k^{\top}R_k^{-1}H_k + P_{k\vert k-1}^{-1})\\
&\hspace{0.25cm}\times\mathbb{E}\big[(x_k-\widehat{x}_{k\vert k-1})\overline{y}_k^{\top}\vert\theta_k\big] R_k^{-1}H_k -H_k^{\top}R_k^{-1}\\
&\hspace{0.25cm}\times \mathbb{E}\big[\overline{y}_k(x_k-\widehat{x}_{k\vert k-1})^{\top}\vert\theta_k\big](H_k^{\top}R_k^{-1}H_k + P_{k\vert k-1}^{-1})\\
&\hspace{0.25cm}+H_k^{\top} R_k^{-1}\mathbb{E}\big[\overline{y}_k \overline{y}_k^{\top}\vert\theta_k\big]R_k^{-1} H_k\\
&\hspace{-0.25cm}=H_k^{\top} R_k^{-1} H_k + P_{k\vert k-1}^{-1},
\end{align*}
where the final result was obtained after replacing 
\begin{align*}
&\mathbb{E}\big[(x_k-\widehat{x}_{k\vert k-1})(x_k-\widehat{x}_{k\vert k-1})^{\top}\vert\theta_k\big]=P_{k\vert k-1}\\
&\mathbb{E}\big[(x_k-\widehat{x}_{k\vert k-1})\overline{y}_k^{\top}\vert\theta_k\big]=P_{k\vert k-1} H_k^{\top}\\
&\mathbb{E}\big[\overline{y}_k \overline{y}_k^{\top}\vert\theta_k\big]=H_kP_{k\vert k-1}H_k^{\top} + R_k.
\end{align*}
The second identity follows from Proposition \ref{prop:propscore} and (\ref{eq:fisher}). $\square$

\subsection*{\textbf{H}: Proof of Lemma \ref{lem:inverse}}
The proof follows from applying Woodbury matrix formula, see e.g. p.258 in \cite{Higham}. To be more details,
\begin{align*}
&\hspace{-0.35cm}\big(H_k^{\top} R_k^{-1}H_k + P_{k\vert k-1}^{-1}\big)\\
&\times \big[P_{k\vert k-1} - P_{k\vert k-1} H_k^{\top}\big(H_kP_{k\vert k-1} H_k^{\top} + R_k\big)^{-1} H_k P_{k\vert k-1}\big]\\
=&I-H_k^{\top}\big(H_kP_{k\vert k-1} H_k^{\top} + R_k\big)^{-1} H_k P_{k\vert k-1} +H_k^{\top} R_k^{-1}H_k P_{k\vert k-1} \\
&- H_k^{\top} R_k^{-1}H_kP_{k\vert k-1} H_k^{\top}\big(H_kP_{k\vert k-1} H_k^{\top} + R_k\big)^{-1} H_k P_{k\vert k-1}\\
=&(I+H_k^{\top} R_k^{-1}H_k P_{k\vert k-1})-(H_k^{\top} + H_k^{\top}R_k^{-1}H_kP_{k\vert k-1}H_k^{\top})\\
&\hspace{1cm}\times (H_k P_{k\vert k-1} H_k^{\top}+R_k)^{-1}H_k P_{k\vert k-1}\\
=&(I+H_k^{\top} R_k^{-1}H_k P_{k\vert k-1})-H_k^{\top}R_k^{-1}(H_kP_{k\vert k-1}H_k^{\top}+R_k)\\
&\hspace{1cm}\times (H_k P_{k\vert k-1} H_k^{\top}+R_k)^{-1}H_k P_{k\vert k-1},
\end{align*}
from which the proof is complete on account that the matrix $H_k P_{k\vert k-1} H_k^{\top}+R_k $ is assumed to be invertible. The second matrix identity follows from applying the first identity. $\square$

\medskip

\begin{wrapfigure}{l}{20mm}
    \includegraphics[width=1in,height=1.25in,clip,keepaspectratio]{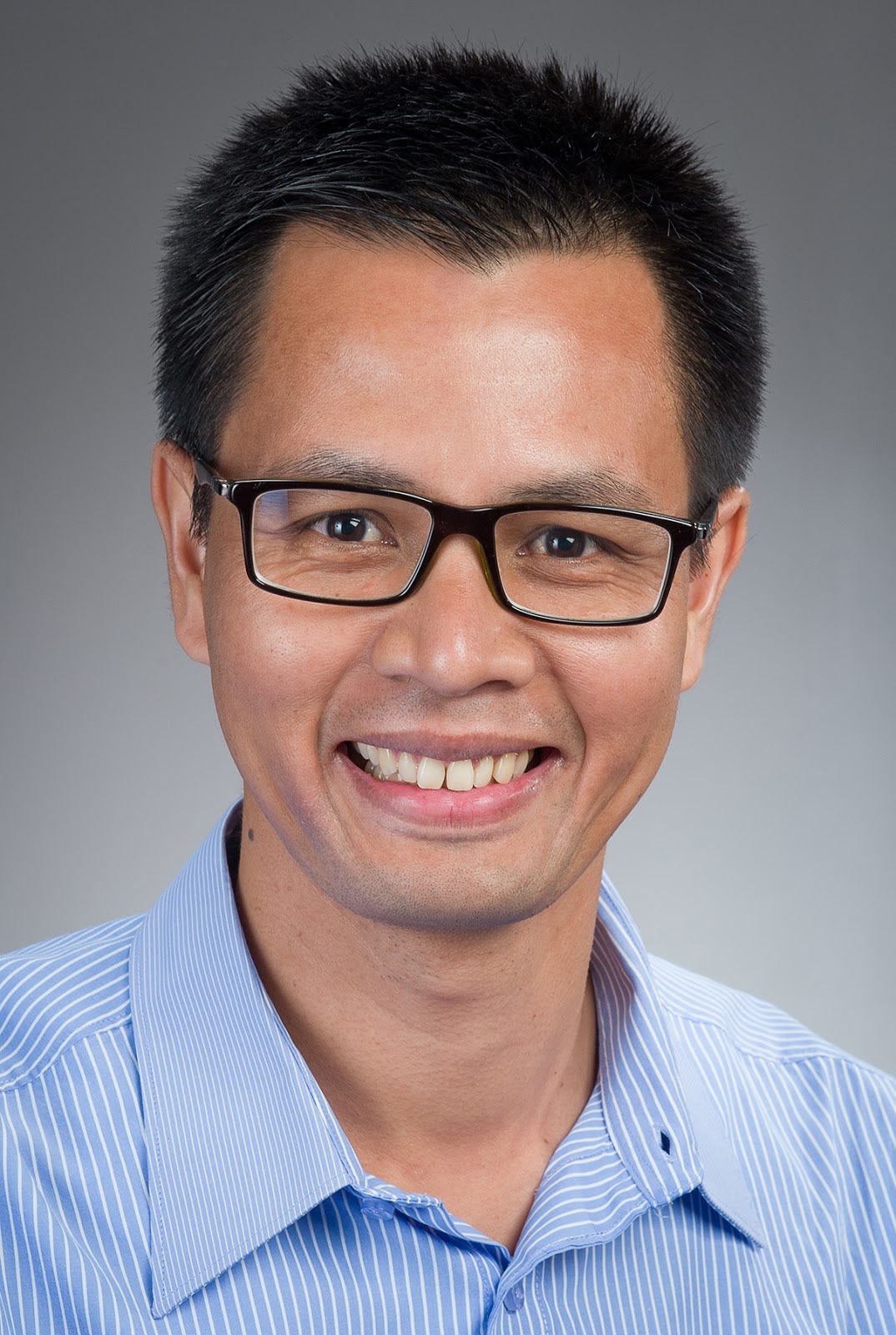}
  \end{wrapfigure}\par
  \textbf{Budhi Arta Surya} is a senior lecturer (equivalent to an associate professor) of statistics at the School of Mathematics and Statistics of Victoria University of Wellington, New Zealand. He received an ingenieur Ir. (equivalent to MSc) degree in applied mathematics (majoring in stochastic systems and control) from the University of Twente, Enschede, in August 2001 and a doctorate degree Dr. in mathematics (applied probability and stochastic process) in January 2007 from the University of Utrecht, both in Netherlands. His doctorate thesis was concerned with optimal stopping of L\'evy processes. Soon after graduating from Utrecht, he joined  Bank of America Corp. as a quantitative financial analyst, based mainly in Singapore, with direct reporting line to the Head of Quantitative Risk Management Group in Charlotte, USA. He was a visiting scholar to the Department of Industrial Engineering and Operations Research of Columbia University in May 2014 and to the Department of Technology, Operations and Statistics of Stern School of Business of New York University in September 2019.  \par
  
  \end{document}